\newtheorem{lemma}{Lemma}[section]
\newtheorem{proposition}[lemma]{Proposition}
\newtheorem{theorem}[lemma]{Theorem}
\newtheorem{definition}{Definition}[section]
\theoremstyle{definition}
\newtheorem{remark}[lemma]{Remark}
\newcommand{\Intxt}[1]{\int_{[0, \tau ) \times \Omega } \! #1 \, dtd\mathbf x}        
\newcommand{\IntXt}[1]{\int_{[0, \tau ) \times \Omega } \! #1 \, dtd\mathbf X}        
\newcommand{\Intx}[1]{\int_{\Omega } \! #1 \, d \mathbf x}                            
\newcommand{\wt}{\widetilde{\mathbf w}}                      
\begin{document}

\vspace{3mm}
\begin{center}
\Large
{\bf Solutions of the Fully Compressible Semi-Geostrophic System}

\vspace{3mm}

\large
M.J.P. Cullen$^{*}$, D.K. Gilbert  and B. Pelloni

\vspace{5mm}
{\em 
Department of Mathematics and Statistics,

University of Reading, 

Reading RG6 6AX, UK}

\vspace{2mm}
$^{*}$ also {\em Met Office, Fitzroy Road, 

Exeter EX1 3PB, UK}

\vspace{3mm}
\today
\end{center}
\normalsize

\begin{abstract}
The fully compressible semi-geostrophic system is widely used in the modelling of large-scale atmospheric flows. In this paper, we prove rigorously the existence of weak Lagrangian solutions of  this system,  formulated in the original physical coordinates.  In addition, we provide an alternative  proof of the earlier result on the existence of weak solutions of this system expressed in the so-called geostrophic, or dual, coordinates. The proofs are based on the optimal transport formulation of the problem and on recent general results concerning transport problems posed in the Wasserstein space of probability measures.
\end{abstract}

{\em Keywords:} geophysical fluid mechanics, nonlinear PDEs, meteorology. 

\smallskip
{\em MSC classification:}  35Q35, 35Q86

\section{Introduction}\label{introduction}

The behaviour of the atmosphere is governed by the compressible Navier-Stokes equations, together with the laws of thermodynamics, equations describing phase changes, source terms and boundary fluxes.  
These equations are too complex to be solved accurately in
a large-scale atmospheric context and therefore reductions and approximations of the
Navier-Stokes equations are often used to validate and understand the solutions that have been computed.

One such approximation, valid on scales where the effects of rotation dominate the flow, is the system of semi-geostrophic equations, see equations (\ref{commom})-(\ref{comstate}) below.  First introduced by Eliassen \cite{eliassen} and then rediscovered by Hoskins \cite{hoskins}, the semi-geostrophic equations are an extremely useful model, particularly in describing the formation of fronts, and are both  studied theoretically and used widely for numerical modelling and simulations. 

In this paper, we prove rigorously the existence of weak Lagrangian solutions of the fully compressible semi-geostrophic system. The proof combines ideas of several previous papers, starting with the pioneering work of Benamou and Brenier \cite{benamou} on the formulation of the semi-geostrophic equations as an optimal transport problem, and modern methods in the analysis of Hamiltonian ODEs in probability spaces, in particular the work of Ambrosio \cite{ambgangbo}, \cite{ambbook}.

For an accurate representation of  the behaviour of large-scale atmospheric flow, one should consider the fully compressible semi-geostrophic equations with variable Coriolis parameter and a free upper boundary condition. The complexity of this problem means that so far results have only been obtained after relaxing one or more of these conditions.

In \cite{benamou}, Benamou and Brenier assumed the fluid to be  incompressible, the Coriolis parameter  constant and the boundaries rigid. They then used a change of variables, introduced by Hoskins in \cite{hoskins}, to derive the so-called dual formulation. In this formulation,  the equations are interpreted as a Monge-Amp\`{e}re equation coupled with a transport problem,  to prove the existence of  stable weak solutions. 

In \cite{gangbo}, Cullen and Gangbo relaxed the assumption of rigid boundaries with a more physically appropriate free boundary condition.  However, they additionally assumed constant potential temperature to obtain the 2-D system known as the semi-geostrophic shallow water system. After passing to dual variables, they showed existence of stable weak solutions for this system of equations.

In \cite{maroofi}, Cullen and Maroofi proved that stable weak solutions of the 3-D compressible semi-geostrophic system  in its dual formulation exist, returning to the assumption of rigid boundaries.  

The main problem posed by the existence results in \cite{benamou}, \cite{gangbo} and \cite{maroofi} is that they are all proved in dual space.  It is difficult to relate these results directly to the Navier-Stokes equations, or indeed other reductions of them.  For this reason, and in order to give the dual space results physical meaning, Cullen and Feldman \cite{feldman} mapped the solutions back to the original, physical coordinates and extended the results of \cite{benamou} and \cite{gangbo}, proving existence of weak Lagrangian solutions in physical space, in the incompressible case.  
We mention that very recently results on the existence of Eulerian solutions have been proved in the case of a two-dimensional torus, and of a convex open subset in 3-dimensional space \cite{ambnew}, \cite{ambnew2}.

\smallskip
In this paper, we make use of recent results in the analysis of ODEs in spaces of measures, in particular those of \cite{ambgangbo}, \cite{ambbook}, in order to provide an alternative proof of the dual space result in \cite{maroofi}.  We also extend considerably  the results of \cite{maroofi} to prove the existence of weak Lagrangian solutions of the fully compressible semi-geostrophic system in the original physical coordinates. 
 As in the incompressible case studied in \cite{feldman}, the proof is based on the existence of an appropriate flow map with rather low regularity; however we also show that, if we could assume additional regularity, then the solutions derived would determine classical solutions.

The paper is organised as follows: in Section \ref{notation} we introduce various definitions, and the notation to be used throughout; in Section \ref{key existing} we summarise the existing results regarding existence of solutions in dual space that are necessary for the theory presented in the following sections; in Section \ref{alternative} we use the results of  \cite{ambgangbo} to provide an alternative proof of the existence of solutions in dual space; in Section \ref{aims} we define the concept of a weak Lagrangian solution and  formulate our main theorem; in Section \ref{dual section} we summarise the existence proof and important properties of a Lagrangian flow in dual space, given in \cite{maroofi}; finally,  in Section \ref{physical section} we map the dual space Lagrangian flow to physical space and use it to prove our main result, namely  the existence of a weak Lagrangian solution.

\section{Useful Conventions, Notation and Definitions}\label{notation}
\setcounter{equation}{0}

We start by assembling all notation and conventions used in the sequel.

 \begin{equation}\label{list}
{\bf Physical\;quantities \;and \;constants}\end{equation}

\begin{enumerate}[label=(\textit{\roman{*}}), ref=\textit{(\roman{*})}]
	\item $\Omega $ denotes an open bounded convex set in $\mathbb{R} ^3 $, representing the physical domain containing the fluid; $\tau>0$ is a fixed positive constant; all functions in physical coordinates are defined for $(t, \mathbf x )\in [0,\tau)\times \Omega $;
	\item $\mathbf u(t, \mathbf x)$ represents the 3-D velocity of the fluid;
	\item $\mathbf u^g(t, \mathbf x) = (u_1^g(t, \mathbf x), u_2^g(t, \mathbf x), 0)$ represents the geostrophic velocity;
	\item $p(t, \mathbf x)$ represents the pressure;
	\item $\rho (t, \mathbf x)$ represents the density;
	\item $\theta (t, \mathbf x)$ represents the potential temperature. Given its physical meaning, we assume $\theta (t, \mathbf x)$  to be strictly positive and bounded;
		\item $\phi (\mathbf x)$ is the given geopotential representing gravitation and centrifugal forces.  We assume that $\phi\in C^2(\bar \Omega)$  and that $\frac{\partial } {\partial x_3}\phi (\mathbf x) \neq 0$ for all $\mathbf x \in \overline{\Omega}$;
			\item $f_\textrm{\scriptsize{cor}}$ denotes the Coriolis parameter, which we assume to be constant; indeed we will normalise this parameter to be equal to 1;
		\item $p_\textrm{\scriptsize{ref}}$ is the reference value of the pressure;
		\item $c_v$ is a constant representing the specific heat at a constant pressure;
	\item $c_p$ is a constant representing the specific heat at a constant volume;	
	\item $R$ represents the gas constant and satisfies $R = c_p - c_v$;
	\item $\kappa = c_p/c_v$ denotes the ratio of specific heats (this is approximately 1.4 for air).
\end{enumerate}

 \begin{equation}\label{list2}
{\bf Notations \; and \; other \; conventions}\end{equation}
 
\begin{itemize}
	

\item Throughout, we will only consider measures that are absolutely continuous with respect to Lebesgue measure.  Given an open set $A$ in $\mathbb{R}^3$, we will denote by
	
	 - \quad $P_{ac}(A)$  - the set of probability measures in $\mathbb{R}^3$ with supports contained in $A$;	 
	 
	 - \quad $\chi _A$ - the  characteristic function of $A$.
	\item Unless otherwise specified, measurable means Lebesgue measurable and $a.e.$ means Lebesgue-$a.e.$
	\item $D_t$ denotes the Lagrangian derivative, defined as $D_t = \partial _t + \mathbf u \cdot \nabla $, where ${\bf u}$ denotes the velocity of the flow.  
	\item  $\mathbf e_3$. denotes the unit vector $\mathbf  (0, 0, 1)$ in $\mathbb R^3$ 
	\item For convenience, we will sometimes use the notation $F_{(t)}(\cdot ) = F(t, \cdot )$ to denote the map $F$ evaluated at fixed time $t$.  
	\end{itemize}



 \begin{equation*}\label{list3}
{\bf Important  \; definitions}\end{equation*}

\begin{definition}\label{pac}
 We define
\[ P_{ac}^2(\mathbb{R}^3):=\{ \mu \in P_{ac}(\mathbb{R}^3) \textrm{ }: \textrm{ } \int_{\mathbb{R}^3 } \! |\mathbf x|^2 \, \mu(\mathbf x) d\mathbf x < + \infty  \} ,\] 
with tangent space 
\begin{equation}\label{tangent}
T_{\mu }P_{ac}^2(\mathbb{R}^3) = \overline{ \{ \nabla \varphi : \varphi \in C_c^\infty (\mathbb{R}^3) \} }^{L^2(\mu ; \mathbb{R}^3)}.
\end{equation}
\end{definition}


\begin{definition}\label{6}
Given two Borel probability densities $\mu _1(\cdot )$ and $\mu _2(\cdot )$ in $\mathbb{R}^3$, we define the \emph{Wasserstein-$2$ distance}, $W_2$, between $\mu _1$ and $\mu _2$ as follows:
\begin{equation}\label{wasserstein}
W^2_2(\mu _1, \mu _2) :=  \inf _{\gamma \in \Gamma (\mu _1, \mu _2)} \int_{\mathbb{R}^3 \times \mathbb{R}^3 } \! |\mathbf x - \mathbf y|^2 \, d\gamma (\mathbf x, \mathbf y) . 
\end{equation}
with
\[ \Gamma (\mu _1, \mu _2) = \{ \gamma \textrm{ } : \textrm{ } \gamma \textrm{ probability measure on }\mathbb{R}^3 \times \mathbb{R}^3 \textrm{ with marginals }\mu _1 \textrm{ and } \mu _2\}.\]
We denote by $\Gamma _0(\mu _1, \mu _2 )$ the set of minimisers of (\ref{wasserstein}).
\end{definition}
\noindent The Wasserstein distance indeed defines a distance in the space of  probability measures on $\mathbb R^3$ (or more generally, any complete separable metric space); it can be used as an optimal transport cost between these measures, see \cite{villanioldnew}.
\begin{definition}\label{def2}
Let $F : \mathbb{R}^3 \rightarrow \mathbb{R}^3$ be a measurable map.
 Let $\mu _1$ be a measure in $\mathbb{R}^3$.  Then we say that the measure $\mu _2 $ on $\mathbb{R}^3$ is the \emph{push forward} of $\mu _1$ with respect to $F$, denoted $F\textrm{\#}\mu _1 = \mu _2$, if
\[ \mu _2 [A] = \mu _1[F^{-1}(A)]\]
for all measurable $A \subset \mathbb{R}^3$.
If, in addition, $\mu _1 \in L^1(\mathbb{R}^3)$, $\mu _2 \in L^1(\mathbb{R}^3)$ and $F\textrm{\#}\mu _1 = \mu _2$, then (see  \cite[Corollary A.3]{feldman})
\begin{equation}\label{appendix}
\int_{\mathbb{R}^3} \! \varphi (F (\mathbf x)) \mu _1(\mathbf x) \, d \mathbf x = \int_{\mathbb{R}^3} \! \varphi (\mathbf X) \mu _2(\mathbf X)\, d\mathbf X, \quad  \forall\; \varphi \in L^\infty (\mathbb{R}^3).
\end{equation}
\end{definition}

\begin{remark}\label{abscont}
Since we are only interested in measures that are absolutely continuous with respect to Lebesgue measure, we can guarantee existence of a unique minimiser $\gamma _0$ in (\ref{wasserstein}); see, for example, \cite[Section 6.2.3]{ambbook}.  We also have that $\gamma _0 = (\textit{\textbf{id}}, \mathbf R^{\mu _2} _{\mu _1} )\textrm{\#}\mu _1$ for some map $\mathbf R_{\mu _1}^{\mu _2} :\mathbb{R}^3 \rightarrow \mathbb{R}^3$ which coincides $\mu _1-a.e.$ with the gradient of a convex function.  Thus, the map $\mathbf R_{\mu _1}^{\mu _2} $ is the unique minimiser of
\begin{equation}
 \mathbf R \rightarrow \int_{\mathbb{R}^3 } \! \left |\mathbf x - \mathbf R(\mathbf x)\right |^2 \, \mu _1(\mathbf x)d \mathbf x \label{Wassmin}\end{equation}
 over all $\mathbf R$ such that $\mathbf R \textrm{\#}\mu _1 = \mu _2$.
 \end{remark}


%




\section{The semi-geostrophic equations: formulation and existing results}\label{key existing}
\setcounter{equation}{0}

The fully compressible semi-geostrophic equations posed in the domain $[0,\tau)\times\Omega$,  with rigid boundaries $\partial \Omega$,  are the following system of equations (see, for example, \cite{maroofi}):
\begin{eqnarray}
&&\label{commom}D_t \mathbf u^g + f_\textrm{\scriptsize{cor}} \mathbf e _3 \times \mathbf u + \nabla \phi + \frac{1}{\rho } \nabla p = 0,\\
&&\label{comad}D_t \theta = 0,\\
&&\label{comcont}D_t \frac{1}{\rho } = \frac{1}{\rho }\nabla \cdot \mathbf u,\hspace{128pt} (t,\mathbf x)\in [0, \tau )\times \Omega\\
&&\label{comgeo}f_\textrm{\scriptsize{cor}} \mathbf e _3 \times \mathbf u^g + \nabla \phi + \frac{1}{\rho } \nabla p = 0,\\
&&\label{comstate}
p = R\rho \theta \left (\frac{p}{p_\textrm{\scriptsize{ref}}}\right )^{\frac{\kappa -1}{\kappa }};\\
&&\label{combound}
\mathbf u \cdot \mathbf n = 0,\hspace{150pt} (t,\mathbf x)\in[0, \tau ) \times \partial \Omega.
\end{eqnarray}
The unknowns in the above equations are $\mathbf u^g = (u^g_1, u^g_2, 0)$, $\mathbf u = (u_1, u_2, u_3)$, $p$, $\rho $, $\theta $.

\smallskip
Equation (\ref{commom}) is the momentum equation; (\ref{comad}) represents the adiabatic assumption; (\ref{comcont}) is the continuity equation and (\ref{comgeo}) represents hydrostatic and geostrophic balance.  The equation  (\ref{comstate}) is the equation of state which relates the thermodynamic quantities to each other, and (\ref{combound}) is the rigid boundary condition, where $\mathbf n$ is the outward normal to $\partial \Omega $.

The semi-geostrophic equations are a valid approximation  to the compressible Euler equations when
$
UL<<1,
$
and are accurate when 
$
\frac{H}{L} < \frac{1}{N},
$
where $U$ is a typical scale for horizontal speed; $L$ is a typical horizontal scale; $H$ is a typical vertical scale; $N$ is the buoyancy frequency.

The energy associated with the flow, known as the \emph{geostrophic energy}, is defined as
\begin{equation}\label{geoenergy}
E(t)=\int_{\Omega } \! \left [ \frac{1}{2}\left |\mathbf u^g \right |^2(t,\mathbf x) + \phi(\mathbf x) + c_v\theta(t,\mathbf x) \left (\frac{p(t,\mathbf x)}{p_\textrm{\scriptsize{ref}}}\right )^{\frac{\kappa -1}{\kappa }} \right ] \rho(t,\mathbf x) \, d \mathbf x.
\end{equation}

\smallskip
In what follows, we set $f_\textrm{\scriptsize{cor}}=1$.

\subsection{Dual formulation}
In \cite{maroofi}, solutions were obtained using a transformation into the so-called  {\em dual (geostrophic) coordinates} $\mathbf y = (y_1, y_2, y_3)$.  The coordinate transformation  is given by:
\[\mathbf T:\Omega\to\Lambda\subset\mathbb R^3,\qquad  \mathbf T(t, \mathbf x) = (T_1(t, \mathbf x), T_2(t, \mathbf x), T_3(t, \mathbf x)) = (y_1, y_2, y_3), \]
with
\begin{equation}\label{transform}
y_1 = x_1 + u_2^g(t, \mathbf x),\qquad y_2 = x_2 - u_1^g(t, \mathbf x), \qquad y_3 = \theta (t, \mathbf x).
\end{equation}
Note that, by (\ref{list})(\emph{vi}),  $\overline{\Lambda } \subset \mathbb R^2\times[\delta,\frac 1 \delta]$ for some $0<\delta<1$.

Using this transformation, as well as (\ref{comstate}), it was shown in \cite{maroofi} that we can write the energy  in (\ref{geoenergy})   as
\begin{equation}\label{energy}
\begin{split}
E (t) &= E(t,\sigma,\mathbf T)= \int_\Omega \! \frac{\frac{1}{2}\{ |x_1 - y_1|^2 + |x_2 - y_2|^2\} + \phi (\mathbf x)}{y_3}\sigma(t,\mathbf x) \, d\mathbf x + K_1\int_\Omega \! (\sigma(t,\mathbf x))^\kappa \, d\mathbf x \\
&=  \int_\Omega \! \frac{\frac{1}{2}\{ |x_1 - T_1(t,\mathbf x)|^2 + |x_2 - T_2(t,\mathbf x)|^2\} + \phi (\mathbf x)}{T_3(t,\mathbf x)}\sigma(t,\mathbf x) \, d\mathbf x + K_1\int_\Omega \! (\sigma(t,\mathbf x))^\kappa \, d\mathbf x 
\end{split}
\end{equation}
where $K_1 := c_v\left (\frac{R}{p_\textrm{\scriptsize{ref}}}\right )^{\kappa -1}$ is constant and
\begin{equation}\label{sigma}
\sigma (t,\mathbf x):= \theta (t,\mathbf x)\rho(t,\mathbf x).
\end{equation}
Using the results of \cite{shutts}, it was shown that stable solutions of (\ref{commom})-(\ref{combound}) correspond to solutions that, at each fixed time $t$,  minimise the energy $E$ given by (\ref{energy}). 
We refer to this requirement as \emph{Cullen's stability condition}.  

This stability condition can be formulated in terms of optimal transport concepts.  Indeed, define the potential density $\nu := \mathbf T \textrm{\#}\sigma $ as the push forward of the measure $\sigma $ under the map $\mathbf T$.  For a stable solution, the energy in (\ref{energy}) can be written in the following form
\begin{equation}\label{energy2}
E_\nu (\sigma ) = {\cal E}(\sigma , \nu ) + K_1\int_\Omega \! (\sigma(t,\mathbf x))^\kappa \, d\mathbf x,
\end{equation}
where
\begin{equation}\label{mkp}
{\cal E}(\sigma , \nu ) = \inf _{\overline{\mathbf T} \textrm{\#} \sigma = \nu } \int_\Omega \! c(\mathbf x, \overline{\mathbf T} (t,\mathbf x)) \sigma (t,\mathbf x)  \, d\mathbf x
\end{equation}
with $c$ a cost function, given by
\begin{equation}\label{cost}
c(\mathbf x, \mathbf y) = \frac{\frac{1}{2}\{ |x_1 - y_1|^2 + |x_2 - y_2|^2\} + \phi (\mathbf x)}{y_3}.
\end{equation}
Hence the energy minimisation required by Cullen's stability condition can be reformulated as an optimal transport problem.  Indeed, this  condition is equivalent to the condition  that, at each fixed time, the pair $(\sigma , \mathbf T)$ minimises the energy (\ref{energy}) amongst all pairs $(\mu , \overline{\mathbf T})$ with $\mu \in P_{ac}(\Omega )$ and $\overline{\mathbf T} \textrm{\#} \mu = \nu $.  In other words, Cullen's stability condition amounts to the requirement that the change of variables $\mathbf T(t, \cdot )$ from physical coordinates to geostrophic coordinates given by (\ref{transform}) is the optimal transport map between $\sigma$ and $\nu$.  

In \cite{maroofi}, it is shown that, given a potential density $\nu $, there exists  a unique minimiser $\sigma$ of (\ref{energy2}). Given this pair $\nu$, and $\sigma$,  there always exists a unique minimiser $\mathbf T$ in (\ref{mkp}), given by the optimal map in the transport of $\sigma $ to $\nu $; the map $\mathbf T$ admits a unique inverse $\mathbf T^{-1}$, which is the optimal map in the transport of $\nu $ to $\sigma $ with cost $\tilde{c}(\mathbf y, \mathbf x) = c(\mathbf x, \mathbf y)$.

\smallskip
Using the stability condition and the transformation given by (\ref{transform}), the compressible semi-geostrophic equations (\ref{commom})-(\ref{comgeo}) can be written in dual variables as follows \cite{maroofi}:

\begin{doublespace}
\begin{eqnarray}
\label{comdual1}&&\partial _t \nu (t, \mathbf y)+ \nabla \cdot (\nu (t, \mathbf y)\mathbf w(t, \mathbf y)) = 0 \qquad \textrm{in} \quad [0, \tau ) \times \Lambda,\\
\label{comdual2}&&\mathbf w(t, \mathbf y)= \mathbf u^g (t, \mathbf S(t, \mathbf y)) = \mathbf e_3 \times \left [\mathbf y - \mathbf S(t, \mathbf y)\right ],\\
\label{comdual3}&&\mathbf S(t, \cdot ) = \mathbf T^{-1} (t, \cdot ), \\ 
\label{comdual4}&&\mathbf T(t, \cdot ) \textrm{ is the unique optimal transport map in (\ref{mkp})},\\
\label{comdual5}&&\sigma (t, \cdot ) \textrm{ minimises } E_{\nu(t, \cdot )} (\cdot ) \textrm{ over } P_{ac} (\Omega ),\\
\label{comdual6}&&\nu (0, \cdot ) = \nu _0 (\cdot ) \in L^r(\Lambda_0), \, r \in (1, \infty ),\;\Lambda_0\subset \mathbb R^3\;{\rm compact}.
\end{eqnarray}
\end{doublespace}
Equation (\ref{comdual1}) is the continuity equation satisfied by the potential density; (\ref{comdual2}) defines the geostrophic velocity in dual variables; (\ref{comdual3}), (\ref{comdual4}) and (\ref{comdual5}) are required for a stable solution; and (\ref{comdual6}) is the prescribed initial condition.

The main result of \cite{maroofi} is the existence of a stable weak solution of the semi-geostrophic system in dual variables (\ref{comdual1})-(\ref{comdual6}), proved by an approximation procedure. In the same paper, several important results concerning optimal transport are proved.  
In Section \ref{min}, we summarise these results.

\begin{remark}\label{support}
Since $\nu _0$ has compact support,  for all $t\in(0,\tau)$ the solution $\nu(t,\mathbf y)$ of the evolution (\ref{comdual1}) starting at  time $t=0$ from $\nu_0$ has compact support in $\mathbb R^3$. By (\ref{list})(\emph{vi}),  $supp(\nu)$ is contained in a bounded open set $\Lambda $, which is dependent on $\tau $, such that $\overline{\Lambda} \subset \mathbb{R}^2 \times [\delta , \frac{1}{\delta}]$, for some $\delta$ with $0<\delta<1$. This follows from a standard fixed-point argument; see, for example, \cite{cullen}, \cite{loeper}.
\end{remark}

\begin{remark}\label{fariaremark}
In \cite{faria}, Faria \emph{et al.} have extended the results of \cite{feldman} for the incompressible equations to the case of an  initial potential density $\nu _0$ in $L^1$.  
\end{remark}

\subsection{Existence and uniqueness for the energy minimisation}\label{min}

Assume that
\begin{equation}\label{25}
\begin{split}
& \bullet \qquad \Omega \textrm{ and } \Lambda \textrm{ are bounded open domains in } \mathbb{R}^3,\\
& \bullet \qquad  \textrm{ } \Omega \textrm{ is convex };\quad \overline{\Lambda} \subset \mathbb{R}^2 \times \left [\delta , \frac{1}{\delta }\right ] \textrm{ for some } \delta \,\textrm{with} \,0<\delta<1,
\end{split}
\end{equation}
where the assumption on the vertical coordinate of $\Lambda $ is justified by (\ref{list})(\emph{vi}).
Cullen and Maroofi proved in \cite[Theorem 4.1]{maroofi} that, given $\nu \in P_{ac}(\Lambda )$, there exists a unique $\sigma \in P_{ac}(\Omega )$ that minimises the energy $E_\nu (\cdot )$:
\begin{theorem}\label{4.1}
Assume that $\Omega $ and $\Lambda $ satisfy (\ref{25}).  Let $\nu \in P_{ac}(\Lambda )$ and assume that $\phi $ satisfies (\ref{list})(\emph{ix}).  Then there exists a unique minimiser $\sigma $ of $E_\nu (\cdot )$, over $P_{ac}(\Omega )$.  This minimiser is given by (\ref{energy2}).
\end{theorem}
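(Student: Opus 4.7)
The plan is to apply the direct method of the calculus of variations, exploiting the coercivity supplied by the $L^\kappa$ term (with $\kappa > 1$) and the fact that the cost $c(\mathbf x,\mathbf y)$ in (\ref{cost}) is bounded and continuous on $\overline\Omega\times\overline\Lambda$ thanks to (\ref{25}) (which gives $y_3\geq\delta$) and (\ref{list})(\emph{ix}) (which gives $\phi\in C^2(\overline\Omega)$). Uniqueness will then follow from strict convexity.

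For existence, I would first observe that $E_\nu$ is bounded below: both the $L^\kappa$ term is nonnegative, and $\mathcal E(\sigma,\nu)$ is bounded below by $\inf_{\overline\Omega\times\overline\Lambda}c$, which is finite. Pick a minimising sequence $\{\sigma_n\}\subset P_{ac}(\Omega)$. Since $E_\nu(\sigma_n)$ is bounded, $\|\sigma_n\|_{L^\kappa(\Omega)}$ is bounded, so by reflexivity of $L^\kappa(\Omega)$ we can pass to a subsequence with $\sigma_n\rightharpoonup\sigma$ weakly in $L^\kappa(\Omega)$. Testing against the bounded functions $1$ and $\chi_E$ for measurable $E\subset\Omega$ shows $\sigma\geq 0$ a.e.\ and $\int_\Omega \sigma=1$, so $\sigma\in P_{ac}(\Omega)$. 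The $L^\kappa$ term is weakly lower semicontinuous in $L^\kappa$, while for the transport term I would use that weak convergence in $L^\kappa(\Omega)$ (with $\Omega$ bounded) implies narrow convergence of the associated probability measures; since $c$ is continuous and bounded on $\overline\Omega\times\overline\Lambda$, the standard stability theorem for the Monge--Kantorovich functional (see e.g.\ Villani \cite{villanioldnew}) yields $\mathcal E(\sigma,\nu)\leq\liminf_n\mathcal E(\sigma_n,\nu)$. Adding these two lower-semicontinuity statements gives $E_\nu(\sigma)\leq\liminf_n E_\nu(\sigma_n)=\inf E_\nu$, so $\sigma$ is a minimiser.

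For uniqueness, I would check that $E_\nu$ is strictly convex on $P_{ac}(\Omega)$. The map $\sigma\mapsto\int_\Omega\sigma^\kappa\,d\mathbf x$ is strictly convex in $L^\kappa$ since $s\mapsto s^\kappa$ is strictly convex on $[0,\infty)$ for $\kappa>1$. For the optimal transport term, if $\sigma_0,\sigma_1\in P_{ac}(\Omega)$ with optimal plans $\gamma_0,\gamma_1\in\Gamma_0(\sigma_i,\nu)$, then $\gamma_t:=(1-t)\gamma_0+t\gamma_1$ has marginals $(1-t)\sigma_0+t\sigma_1$ and $\nu$, so it is admissible for transporting $\sigma_t:=(1-t)\sigma_0+t\sigma_1$ to $\nu$, yielding $\mathcal E(\sigma_t,\nu)\leq (1-t)\mathcal E(\sigma_0,\nu)+t\mathcal E(\sigma_1,\nu)$. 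Combining these, $E_\nu$ is strictly convex, so it admits at most one minimiser on the convex set $P_{ac}(\Omega)$.

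The main obstacle is the lower semicontinuity of $\mathcal E(\,\cdot\,,\nu)$: one must justify passing weak $L^\kappa$ convergence of $\sigma_n$ to narrow convergence of the corresponding measures and then invoking the classical stability of optimal transport plans under narrow convergence with a continuous cost. Once that step is secured, coercivity, reflexivity and strict convexity assemble the rest of the argument in a routine way.
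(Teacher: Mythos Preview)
Your argument is correct and follows the natural route: coercivity from the $L^\kappa$ term, weak compactness in $L^\kappa$ by reflexivity, lower semicontinuity of both terms (the norm term by convexity, the transport term by stability of Monge--Kantorovich under narrow convergence with a bounded continuous cost), and uniqueness by strict convexity of $s\mapsto s^\kappa$ combined with linear-interpolation convexity of $\sigma\mapsto\mathcal E(\sigma,\nu)$.

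Note, however, that the present paper does not supply its own proof of this statement: it simply records the result as \cite[Theorem 4.1]{maroofi} and refers the reader there. So there is no in-paper argument to compare against. What you have written is essentially the standard direct-method proof one would expect in \cite{maroofi}; the only point worth tightening is the one you already flag---making precise that weak $L^\kappa(\Omega)$ convergence on a bounded $\Omega$ implies narrow convergence (test against $C_b(\Omega)\subset L^{\kappa'}(\Omega)$), after which the lower semicontinuity of the Kantorovich functional with the continuous bounded cost $c$ on $\overline\Omega\times\overline\Lambda$ is immediate.
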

%

By fixing this minimising $\sigma $, we can complete the energy minimisation by considering (\ref{mkp}) as an optimal transport problem.  Indeed, given two probability densities $\sigma \in P_{ac}(\Omega )$ and $\nu \in P_{ac}(\Lambda )$, consider the following optimal transport problem:  
\begin{equation}\label{22}
\inf _{\overline{\mathbf T} \textrm{\#}\sigma = \nu }I_\sigma [\overline{\mathbf T}], \quad \textrm{ with } I_\sigma [\overline{\mathbf T}] := \int_{\Omega } \! c(\mathbf x, \overline{\mathbf T}(\mathbf x))\sigma (\mathbf x) \, d \mathbf x,
\end{equation}
\[ \]
where $c(\mathbf x, \mathbf y)$, given by (\ref{cost}), is the cost of transporting one unit of mass from $\mathbf x$ to $\mathbf y$.

The Kantorovich relaxation of (\ref{22}) amounts to finding $\gamma \in \Gamma (\sigma , \nu )$ that minimises
\begin{equation}\label{27}
\overline{I}[\gamma ] := \int_{\Omega \times \Lambda } \! c(\mathbf x, \mathbf y) \, d \gamma (\mathbf x, \mathbf y),
\end{equation}
with  $\Gamma (\sigma , \nu)$ given in Definition \ref{6}.  Note that every transport map $\overline{\mathbf T}$ generates a transport plan $\gamma _{\overline{\mathbf T}} \in \Gamma (\sigma , \nu )$ defined by
\[ \gamma _{\overline{\mathbf T}} := (\textit{\textbf{id}}, \overline{\mathbf T})\textrm{\#}\sigma , \]
such that $I _{\sigma }[\overline{\mathbf T}] = \overline{I}[\gamma _{\overline{\mathbf T}}]$.

Now consider the Kantorovich dual problem corresponding to (\ref{22}) (see, for example, \cite[Chapter 5]{villanioldnew}):
\begin{equation}\label{26}
\sup _{(f, g) \in Lip_c}J_{(\sigma , \nu )}(f, g),\qquad \textrm{ with } J_{(\sigma , \nu )}(f, g) := \int_{\Omega } \! f(\mathbf x)\sigma (\mathbf x) \, d \mathbf x + \int_{\Lambda } \! g(\mathbf y)\nu (\mathbf y) \, d \mathbf y,
\end{equation}
where
\[ Lip_c := \{ (f, g) \textrm{ } : \textrm{ } f \in W^{1, \infty }(\Omega ),\textrm{ } g \in W^{1, \infty }(\Lambda ), \textrm{ } f(\mathbf x) + g(\mathbf y) \leqslant c(\mathbf x, \mathbf y) \textrm{ for all } (\mathbf x, \mathbf y) \in \Omega \times \Lambda \}.\]
The solution of the dual problem is a crucial ingredient of proving the main result of this section, namely the existence result for the optimal transport problem.   See \cite{maroofi} for all relevant definitions and for a proof.
\begin{theorem}\label{3.3}
Assume that $\Omega $ and $\Lambda $ satisfy (\ref{25}).  Let $\sigma \in P_{ac}(\Omega )$ and $\nu \in P_{ac}(\Lambda )$.  Assume that $\phi $ satisfies (\ref{list})(\emph{ix}).  Then there exist maps $\mathbf T : \Omega \rightarrow \Lambda $ and $\mathbf S : \Lambda \rightarrow \Omega $, unique $\sigma -a.e.$ and $\nu -a.e.$ respectively, such that
\begin{enumerate}[label=(\textit{\roman{*}}), ref=\textit{(\roman{*})}]
	\item $\mathbf T$ is optimal in the transport of $\sigma $ to $\nu $ with cost $c(\mathbf x, \mathbf y)$,
	\item $\mathbf S$ is optimal in the transport of $\nu $ to $\sigma $ with cost $\tilde{c}(\mathbf y, \mathbf x) = c(\mathbf x, \mathbf y)$,
\item $\mathbf S$ and $\mathbf T$ are inverses, i.e. $\mathbf S\circ \mathbf T(\mathbf x) = \mathbf x$ for $\sigma -a.e.$ $\mathbf x$ and $\mathbf T\circ \mathbf S(\mathbf y) = \mathbf y$ for $\nu -a.e.$ $\mathbf y$,
\item $\gamma _0 = (\textbf{id}, \mathbf T) \textrm{\#} \sigma $ is a minimiser of the relaxed optimal transport problem (\ref{27}),
\item For $J_{(\sigma , \nu )}(f, g)$ defined by (\ref{26}),  the following equality holds:
\[ \sup _{(f, g) \in Lip_c}J_{(\sigma , \nu )}(f, g) = \inf _{\gamma \in \Gamma (\sigma , \nu )} \overline{I}[\gamma ] = \inf _{\overline{\mathbf T} \textrm{\#}\sigma = \nu }I_\sigma [\overline{\mathbf T}].\]
\end{enumerate}
\end{theorem}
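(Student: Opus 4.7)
The plan is to follow the classical Brenier--Gangbo--McCann program, adapted to the non-quadratic cost $c$ in (\ref{cost}). First, since $\overline{\Omega}$ and $\overline{\Lambda}$ are compact and $c$ is continuous on $\overline{\Omega} \times \overline{\Lambda}$ (observe that $y_3 \geqslant \delta > 0$ by (\ref{25}), so the denominator never degenerates), a standard weak-$\ast$ compactness argument on $\Gamma(\sigma, \nu)$ yields existence of a minimiser $\gamma_0$ of (\ref{27}). Invoking Kantorovich duality then produces a $c$-concave potential $f$ together with its $c$-transform $g = f^c$, admissible in $Lip_c$, achieving equality in (v), and whose $c$-subdifferential contains $\operatorname{supp}\gamma_0$, that is, $f(\mathbf{x}) + g(\mathbf{y}) = c(\mathbf{x}, \mathbf{y})$ on this support while $f(\mathbf{x}) + g(\mathbf{y}) \leqslant c(\mathbf{x}, \mathbf{y})$ everywhere.

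The central analytic step is to exploit the twist condition, which is enforced precisely by the hypothesis $\partial_{x_3}\phi \neq 0$. A direct computation gives
\[ \nabla_{\mathbf{x}} c(\mathbf{x}, \mathbf{y}) = \frac{1}{y_3}\bigl(x_1 - y_1,\; x_2 - y_2,\; \partial_{x_3}\phi(\mathbf{x})\bigr),\]
so that $\mathbf{y} \mapsto \nabla_{\mathbf{x}} c(\mathbf{x}, \mathbf{y})$ is injective for each fixed $\mathbf{x} \in \Omega$: the third component determines $y_3$ uniquely from the nonzero quantity $\partial_{x_3}\phi(\mathbf{x})$, after which $y_1$ and $y_2$ are read off linearly. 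An analogous computation of $\nabla_{\mathbf{y}} c$ yields twistedness in $\mathbf{y}$, using (\ref{list})(ix) once more to invert $x_3 \mapsto \phi(\mathbf{x})$ at fixed $(x_1, x_2)$. Since $\sigma$ and $\nu$ are absolutely continuous and the potentials $f,g$ are locally Lipschitz on bounded domains, $f$ is differentiable $\sigma$-a.e.\ and $g$ is differentiable $\nu$-a.e.; one then defines $\mathbf{T}$ and $\mathbf{S}$ by inverting the twist relations $\nabla f(\mathbf{x}) = \nabla_{\mathbf{x}} c(\mathbf{x}, \mathbf{T}(\mathbf{x}))$ and $\nabla g(\mathbf{y}) = \nabla_{\mathbf{y}} c(\mathbf{S}(\mathbf{y}), \mathbf{y})$. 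Injectivity in the twist gives uniqueness of $\mathbf{T}$ $\sigma$-a.e.\ and of $\mathbf{S}$ $\nu$-a.e., and the identities $\gamma_0 = (\textit{\textbf{id}}, \mathbf{T})\textrm{\#}\sigma = (\mathbf{S}, \textit{\textbf{id}})\textrm{\#}\nu$ then deliver (i), (ii) and (iv).

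To obtain (iii), I would compare the two representations of the same $\gamma_0$: for $\sigma$-a.e.\ $\mathbf{x}$ the point $(\mathbf{x}, \mathbf{T}(\mathbf{x}))$ lies in $\operatorname{supp}\gamma_0$, and applying the $\mathbf{S}$-representation at $\mathbf{y} = \mathbf{T}(\mathbf{x})$ forces $\mathbf{x} = \mathbf{S}(\mathbf{T}(\mathbf{x}))$; the symmetric argument gives the other composition $\nu$-a.e. Part (v) is then the routine reading off of duality, $\overline{I}[\gamma_0] = \int(f + g)\,d\gamma_0 = J_{(\sigma,\nu)}(f, g)$, combined with the easy inequalities $\sup J \leqslant \inf \overline{I} \leqslant \inf I_\sigma$ which hold along admissible competitors.

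The principal obstacle is the correct handling of the twist condition for this degenerate, non-symmetric cost: unlike the quadratic case, $c$ is singular as $y_3 \to 0$ and its $\mathbf{x}$-dependence is twisted only through the geopotential $\phi$ rather than through the Euclidean quadratic form. The combined hypotheses (\ref{list})(ix) on $\phi$ and the vertical confinement $\overline{\Lambda} \subset \mathbb{R}^2 \times [\delta, 1/\delta]$ in (\ref{25}) are exactly what is needed to overcome both difficulties; without them, injectivity in $\nabla_{\mathbf{x}} c$ would fail and the transport maps $\mathbf{T}, \mathbf{S}$ could not be recovered pointwise from the dual potentials.
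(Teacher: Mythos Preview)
Your proposal is correct and follows the standard Gangbo--McCann scheme for twisted costs: existence of an optimal plan by compactness, Kantorovich duality to produce $c$-concave potentials, Rademacher differentiability (valid since $\sigma,\nu$ are absolutely continuous and the potentials are Lipschitz on the bounded domains), and recovery of the maps by inverting the twist relation. Your verification of the twist condition is accurate, and you correctly identify that the hypotheses $\partial_{x_3}\phi\neq 0$ and $y_3\in[\delta,1/\delta]$ are precisely what make the injectivity of $\mathbf y\mapsto\nabla_{\mathbf x}c(\mathbf x,\mathbf y)$ and $\mathbf x\mapsto\nabla_{\mathbf y}c(\mathbf x,\mathbf y)$ work.

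The paper itself does not supply a proof of this theorem; it simply cites \cite{maroofi} for it (``See \cite{maroofi} for all relevant definitions and for a proof''). The argument in \cite{maroofi} is indeed the Gangbo--McCann approach you outline, so your proposal matches the intended proof. One small point worth making explicit in your step (iii): to apply the $\mathbf S$-representation at $\mathbf y=\mathbf T(\mathbf x)$ you need $\mathbf T(\mathbf x)$ to be a differentiability point of $g$, which holds for $\sigma$-a.e.\ $\mathbf x$ because the non-differentiability set of $g$ is $\nu$-null and $\mathbf T\#\sigma=\nu$. Otherwise the sketch is complete.
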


\section{The main existence result in dual space - an alternative proof}\label{alternative}
\setcounter{equation}{0}

\subsection{Statement of the theorem}

\begin{theorem}\label{5.5}

Let $1 < r < \infty $ and $\nu _0 \in L^r (\Lambda _0 )$ be an initial potential density with support in $\Lambda _0$, where $\Lambda _0 $ is a bounded open set in $\mathbb{R} ^3$ with $\overline{\Lambda _0} \subset \mathbb{R} ^2 \times [\tilde{\delta }, 1/\tilde{\delta }]$ for some $0 < \tilde{\delta } < 1$.  
 Let $\Omega $ be an open bounded convex set in $\mathbb{R} ^3$.  Assume that $c(\cdot ,\cdot )$ is given by (\ref{cost}) and that $\phi $ satisfies (\ref{list})(\emph{ix}).  Then the system of semi-geostrophic equations in dual variables (\ref{comdual1})-(\ref{comdual6}) has a stable weak solution $(\sigma, \mathbf T)$ such that, with $\nu (t, \cdot ) = \mathbf T(t, \cdot ) \textrm{\#} \sigma (t, \cdot )$ and $\mathbf w$ as in (\ref{comdual2}),
\begin{enumerate}[label=(\textit{\roman{*}}), ref=\textit{(\roman{*})}]
	\item \[ \nu (\cdot , \cdot ) \in L^r ((0, \tau ) \times \Lambda ), \qquad \left\| \nu (t, \cdot )\right\| _{L^r (\Lambda )} \leqslant \left\| \nu _0 (\cdot )\right\| _{L^r (\Lambda )}, \qquad \forall\; t \in [0, \tau ],
\]
\item  \[ \sigma (t, \cdot ) \in W^{1, \infty } (\Omega ), \qquad \left\| \sigma (t, \cdot )\right\| _{W^{1, \infty } (\Omega )} \leqslant C = C(\Omega, \Lambda, c(\cdot , \cdot ), \kappa , K_1 ),\qquad \forall\;t \in [0, \tau ], \]
\item \[ \left\| \mathbf w(t, \cdot ) \right\| _{L^\infty (\Lambda )} \leqslant C = C(\Omega , \Lambda)\qquad \forall\;t \in [0, \tau ],  \]
\end{enumerate}
where $\Lambda $ is a bounded open domain in $\mathbb{R}^3$ containing $supp (\nu )$, such that $\overline{\Lambda} \subset \mathbb{R}^2 \times \left [\delta , \frac{1}{\delta }\right ]$ for some $0<\delta <1$.
\end{theorem}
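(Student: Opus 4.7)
The plan is to recast the evolution (\ref{comdual1})--(\ref{comdual6}) as a Hamiltonian flow in $P_{ac}^2(\mathbb R^3)$ and invoke the existence theory for ODEs in Wasserstein space from \cite{ambgangbo}. Define the Hamiltonian
\[
H(\nu) := \min_{\sigma\in P_{ac}(\Omega)} E_\nu(\sigma),
\]
which by Theorem \ref{4.1} is attained at a unique minimizer $\sigma=\sigma[\nu]$; by Theorem \ref{3.3} the associated optimal transport map $\mathbf T[\nu]:\Omega\to\Lambda$ and its inverse $\mathbf S[\nu]:\Lambda\to\Omega$ are also unique. The first task is to verify that $H$ is proper, bounded below, and lower semi-continuous under narrow convergence, and that the maps $\nu\mapsto\sigma[\nu]$ and $\nu\mapsto\mathbf S[\nu]$ are stable under Wasserstein convergence; the necessary compactness and uniqueness are supplied by Theorems \ref{4.1} and \ref{3.3} together with the support-confinement coming from Remark \ref{support} and (\ref{25}).

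\textbf{Hamiltonian vector field.} I would identify, at each $\nu\in P_{ac}^2(\mathbb R^3)$ with support in $\Lambda$, the Wasserstein (sub)gradient of $H$ with the projection onto $T_\nu P_{ac}^2(\mathbb R^3)$ of $\mathbf y\mapsto \mathbf y-\mathbf S[\nu](\mathbf y)$. Indeed, differentiating the $c$-optimality identity $g(\mathbf y)=c(\mathbf S[\nu](\mathbf y),\mathbf y)-f(\mathbf S[\nu](\mathbf y))$ for the Kantorovich potential $g$ and the cost (\ref{cost}) gives, on the horizontal components, $y_i-S_i[\nu](\mathbf y)=y_3\,\partial_{y_i}g(\mathbf y)$ for $i=1,2$. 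Consequently the driving velocity can be written as
\[
\mathbf w(t,\mathbf y)=\mathbf e_3\times\bigl(\mathbf y-\mathbf S[\nu(t)](\mathbf y)\bigr)=y_3\bigl(-\partial_{y_2}g(t,\mathbf y),\,\partial_{y_1}g(t,\mathbf y),\,0\bigr),
\]
exhibiting the symplectic rotation of a Wasserstein gradient characteristic of a Hamiltonian flow. Property (\textit{iii}) is then immediate, since both $\mathbf y\in\Lambda$ and $\mathbf S[\nu(t)](\mathbf y)\in\Omega$ lie in bounded sets.

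\textbf{Existence and the bounds (\textit{i}), (\textit{ii}).} Once the Hamiltonian structure is in place, the existence theorem of \cite{ambgangbo} produces a curve $\nu\in C([0,\tau];P_{ac}^2(\mathbb R^3))$ satisfying (\ref{comdual1}) in the distributional sense with initial datum $\nu_0$; the pair $(\sigma(t,\cdot),\mathbf T(t,\cdot)):=(\sigma[\nu(t)],\mathbf T[\nu(t)])$ then realizes (\ref{comdual3})--(\ref{comdual5}) by construction. For (\textit{i}), I would exploit that $\mathbf w$ is divergence-free: from the formula above, $\nabla\cdot\mathbf w=-y_3\,\partial^2_{y_1y_2}g+y_3\,\partial^2_{y_2y_1}g=0$, interpreted distributionally using Alexandrov-type second differentiability of the semi-concave potential $g$; combined with (\ref{comdual1}) this yields the $L^r$-contraction $\|\nu(t,\cdot)\|_{L^r(\Lambda)}\leq\|\nu_0\|_{L^r(\Lambda_0)}$. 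For (\textit{ii}), the Euler--Lagrange condition at the minimizer of $E_\nu$ gives $\kappa K_1\sigma^{\kappa-1}(\mathbf x)=\lambda-f(\mathbf x)$ a.e., for a Lagrange multiplier $\lambda$ and the Kantorovich potential $f\in W^{1,\infty}(\Omega)$; using $\phi\in C^2(\bar\Omega)$ and the separation of $y_3$ from $0$ on $\bar\Lambda$, the cost (\ref{cost}) is uniformly Lipschitz in $\mathbf x$ with constants depending only on $\Omega,\Lambda,\phi$, so $f$ (hence $\sigma$) inherits a $W^{1,\infty}$ bound depending only on $\Omega,\Lambda,c,\kappa,K_1$.

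\textbf{Main obstacle.} The delicate point is the Hamiltonian identification in the second paragraph: making rigorous the assertion that $\mathbf y\mapsto\mathbf y-\mathbf S[\nu](\mathbf y)$ is really the Wasserstein sub-gradient of $H$ at $\nu$, and that the resulting vector field satisfies the hypotheses of \cite{ambgangbo} (measurability in $t$, $L^2(\nu)$-integrability, and a continuity/semi-concavity estimate in $\mathbf y$ strong enough to pass to the limit in approximating schemes). This requires a sensitivity analysis of the Kantorovich potentials as $\nu$ varies, coupled with the strict convexity of the internal-energy term $K_1\int\sigma^\kappa$, and it is precisely here that the combined use of Theorems \ref{4.1} and \ref{3.3} is essential.
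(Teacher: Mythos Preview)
Your proposal is correct and follows essentially the same route as the paper's alternative proof: define $H(\nu)=\min_{\sigma}E_\nu(\sigma)$, identify its Wasserstein superdifferential, and invoke the Ambrosio--Gangbo existence theorem for Hamiltonian ODEs (Theorem~\ref{6.6} here), with the bounds (\textit{i})--(\textit{iii}) read off exactly as you indicate.

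One technical correction is worth recording. The Fr\'echet superdifferential of $H$ at $\nu$ is $\nabla_{\mathbf y}\tilde c(\mathbf y,\mathbf S[\nu](\mathbf y))$, whose horizontal components are $(\mathbf y-\mathbf S[\nu](\mathbf y))_{\mathrm{hor}}/y_3$, not $\mathbf y-\mathbf S[\nu](\mathbf y)$ itself; correspondingly the paper does not use the bare rotation $\mathbf e_3\times$ as the symplectic operator but the modified $\tilde J\mathbf v:=y_3\,\mathbf e_3\times\mathbf v$ (Definition~\ref{hamode}), so that $\tilde J(\partial_0 H(\nu))=\mathbf w$. Your two discrepancies cancel in the formula for $\mathbf w$, but to apply the abstract theorem rigorously you must work with the correct pair $(\partial_0 H,\tilde J)$ and verify (H1)--(H3) for it---this is exactly the content of Propositions~\ref{Hok} and~\ref{mainalt}, and is the ``main obstacle'' you correctly flagged.
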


 The proof of this theorem is given in  \cite[Theorem 5.5]{maroofi}, using a time-approximation argument, similar in spirit to the original argument of the proof given by Benamou and Brenier  in \cite{benamou} of the existence of solution of the incompressible system in a fixed domain.
 
 The result of this section is an alternative proof of this theorem. The proof given here avoids the time-discretisation and makes use of the general theory developed in \cite{ambgangbo}  on the solution of Hamiltonian ODEs, showing explicitly how it may be applied to the dual space existence problem. 

\subsection{The solution of Hamiltonian ODEs}

  We briefly summarise the  general results of  \cite{ambgangbo} on the solution of Hamiltonian ODEs.

\begin{remark}
In what follows we deal with concave rather than convex functions.  Hence we replace all definitions in \cite{ambgangbo} regarding subdifferentiability and $(\lambda )-$convexity with the following definitions regarding superdifferentiability and $(\lambda )-$concavity.  This replacement does not affect the results of \cite{ambgangbo}.
\end{remark}



\begin{definition}\label{superdifferential}
Let $H: P_{ac}^2(\mathbb{R}^3) \rightarrow (-\infty , +\infty ]$ be a proper, upper semicontinuous function and let $\nu \in D(H)$.  We say that $\mathbf v \in L^2(\nu ; \mathbb{R}^3)$ belongs to the \textit{Fr\'{e}chet superdifferential} $\partial H(\nu )$ if
\[ H(\nu _h ) \leqslant H(\nu ) + \int_{\mathbb{R}^3  } \!  \mathbf v(\mathbf y) \cdot (\mathbf R_{\nu }^{\nu _h}(\mathbf y) - \mathbf y )\, \nu (\mathbf y) \, d\mathbf y + o(W_2(\nu , \nu _h))\]
as $\nu _h \rightarrow \nu $.  We denote by $\partial _0 H(\nu )$ the element of $\partial H(\nu )$ of minimal $L^2(\nu ; \mathbb{R}^3)-$norm.  Note that, by the minimality of its norm, $\partial _0 H(\nu )$ belongs to $ \partial H(\nu ) \cap T_{\nu }P_{ac}^2(\mathbb{R}^3)$.
\end{definition}

In the following lemma we state a continuity property of optimal plans or maps.
\begin{lemma}\label{amb3.3}
Assume that $\{\nu _n\} _{n = 1}^\infty $, $\{\mu _n\} _{n = 1}^\infty $ are bounded sequences in $P_{ac}^2(\mathbb{R}^3)$ narrowly converging to $\nu $ and $\mu $ respectively.  Assume that $\Gamma _0(\nu , \mu )$ contains a unique plan $\gamma _0$ induced by the optimal map $\mathbf R_\nu ^\mu : \mathbb{R}^3 \rightarrow \mathbb{R}^3$ (see Remark \ref{abscont}).  Then
\[ \lim _{n \rightarrow +\infty } \int_{\mathbb{R}^3 } \! g(\mathbf y, \mathbf R_{\nu _n}^{\mu _n}(\mathbf y))\nu _n(\mathbf y) \, d \mathbf y = \int_{\mathbb{R}^3 } \! g(\mathbf y, \mathbf R_\nu ^\mu(\mathbf y))\nu (\mathbf y) \, d \mathbf y, \]
where $\mathbf R_{\nu _n}^{\mu _n}$ is optimal in the transport of $\nu _n$ to $\mu _n$, and for any continuous function $g : \mathbb{R}^3 \times \mathbb{R}^3  \rightarrow \mathbb{R}^3 $ satisfying
\[ \lim _{|(\mathbf y, \mathbf x)| \rightarrow +\infty } \frac{|g|(\mathbf y, \mathbf x)}{|\mathbf y|^2 + |\mathbf x|^2} = 0.\]


Assume furthermore that there exists a closed ball $B_r$, of finite radius $r$, containing the supports of $\mu _n$ and $\mu $.  Then there exist Lipschitz, convex functions $u_n, u:\mathbb{R}^3 \rightarrow \mathbb{R} \cup \{ + \infty \}$ such that $\nabla u_n = \mathbf R_{\nu _n}^{\mu _n}$ $\nu _n-a.e.$ in $\mathbb{R}^3$ and $\nabla u = \mathbf R_{\nu }^{\mu }$ $\nu -a.e.$ in $\mathbb{R}^3$.  In addition, there exists a subsequence $\{ n_k \} ^\infty _{k=1}$ of integers such that
\begin{equation}\label{amb21}
\nabla u_{n_k} \rightarrow \nabla u \qquad a.e. \textrm{ in } \mathbb{R}^3.
\end{equation}
\end{lemma}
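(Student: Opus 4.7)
The plan is to combine the standard stability of optimal transport plans under narrow convergence with Brenier's polar-factorisation theorem and an Arzelà--Ascoli argument on the induced convex potentials. For the first assertion, I would introduce the plans $\gamma_n := (\textit{\textbf{id}}, \mathbf{R}_{\nu_n}^{\mu_n})\textrm{\#}\nu_n \in \Gamma_0(\nu_n, \mu_n)$. Since the marginal sequences $\{\nu_n\}$ and $\{\mu_n\}$ are narrowly convergent and hence tight, so is $\{\gamma_n\}$; Prokhorov's theorem then guarantees that every subsequence admits a further subsequence $\gamma_{n_k}$ narrowly convergent to some $\gamma \in \Gamma(\nu, \mu)$.

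The crucial step is to identify $\gamma$ as an element of $\Gamma_0(\nu, \mu)$. This is the stability of $W_2$-optimality under narrow convergence (see, e.g., \cite{ambbook}), combining narrow lower semicontinuity of the quadratic cost with the convergence $W_2^2(\nu_n, \mu_n) \to W_2^2(\nu, \mu)$, the latter relying on uniform second-moment control inherent in boundedness of the sequences in $P_{ac}^2(\mathbb{R}^3)$. The uniqueness assumption on $\gamma_0$ then forces $\gamma_{n_k} \to \gamma_0$; a standard subsequence argument upgrades this to narrow convergence of the full sequence $\gamma_n \to \gamma_0$. The growth hypothesis on $g$ together with uniform second-moment bounds permits passage to the narrow limit in
\[ \int_{\mathbb{R}^3} g(\mathbf{y}, \mathbf{R}_{\nu_n}^{\mu_n}(\mathbf{y})) \nu_n(\mathbf{y}) \, d\mathbf{y} = \int_{\mathbb{R}^3 \times \mathbb{R}^3} g \, d\gamma_n \]
by a Vitali-type argument (cut off $g$ outside a large ball and control the tail by the uniform second-moment estimate), yielding the first claim.

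For the second part, Brenier's theorem furnishes convex functions $\tilde u_n : \mathbb{R}^3 \to \mathbb{R} \cup \{+\infty\}$ with $\nabla \tilde u_n = \mathbf{R}_{\nu_n}^{\mu_n}$ $\nu_n$-a.e.; the hypothesis $\textrm{supp}(\mu_n) \subset B_r$ forces $\nabla \tilde u_n(\mathbf{y}) \in \overline{B_r}$ for $\nu_n$-a.e.\ $\mathbf{y}$. I would then replace $\tilde u_n$ by a globally $r$-Lipschitz convex function $u_n$ (for instance, the Legendre dual of $\tilde u_n^*$ restricted to $\overline{B_r}$) whose gradient still equals $\nabla \tilde u_n$ at $\nu_n$-a.e.\ $\mathbf{y}$. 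After the normalisation $u_n(0) = 0$, the family $\{u_n\}$ is equi-Lipschitz and uniformly bounded on compacts, so Arzelà--Ascoli produces a subsequence $u_{n_k}$ converging locally uniformly to some convex $r$-Lipschitz $u$. A classical convex-analytic fact---that local uniform convergence of finite convex functions implies Lebesgue-a.e.\ convergence of their gradients at every differentiability point of the limit---then gives $\nabla u_{n_k} \to \nabla u$ a.e.\ in $\mathbb{R}^3$, and the first part together with Brenier uniqueness identifies $\nabla u$ with $\mathbf{R}_\nu^\mu$ $\nu$-a.e.

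I expect the main obstacle to be the optimality of the narrow limit $\gamma$: while lower semicontinuity of the quadratic cost is standard, the matching upper bound $\int |\mathbf{x}-\mathbf{y}|^2 \, d\gamma \leq W_2^2(\nu, \mu)$ requires genuine control of second moments of the approximating sequences, and this is where the hypothesis of boundedness in $P_{ac}^2(\mathbb{R}^3)$ is essentially used. The remaining ingredients (tightness, Brenier's theorem, convex-analytic convergence of gradients) are reasonably routine once the stability step is in place.
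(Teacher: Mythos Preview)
Your proposal is correct in outline and is essentially the standard argument behind the result the paper invokes. Note, however, that the paper does not actually prove this lemma: its entire proof reads ``See \cite[Proposition 7.1.3]{ambbook} for proof.'' Your sketch is therefore a reconstruction of the referenced proof rather than a comparison target.

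One point deserves sharpening. You write that optimality of the narrow limit $\gamma$ follows from ``the convergence $W_2^2(\nu_n,\mu_n)\to W_2^2(\nu,\mu)$, the latter relying on uniform second-moment control inherent in boundedness of the sequences in $P_{ac}^2(\mathbb{R}^3)$.'' Boundedness in $P_{ac}^2$ gives only a uniform upper bound on second moments, not their convergence; narrow convergence together with bounded second moments does \emph{not} in general force $W_2$-convergence (one needs uniform integrability of $|\mathbf y|^2$). The clean way around this---and the route taken in \cite{ambbook}---is to bypass $W_2$-convergence entirely and argue via $c$-cyclical monotonicity: the support of each $\gamma_n$ is $c$-cyclically monotone, this property passes to the narrow limit $\gamma$ (it is a closed condition on the support), and $c$-cyclical monotonicity characterises optimality. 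You correctly flag this step as the main obstacle; the resolution is simply to use the monotonicity characterisation rather than a cost-comparison argument. The remainder of your argument (tightness via Prokhorov, the Vitali-type tail estimate for the $g$-integral, and the Arzel\`a--Ascoli plus convex-gradient-convergence route for the second assertion) is exactly what one finds in the reference.
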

\begin{proof}
See \cite[Proposition 7.1.3]{ambbook} for proof.
\end{proof}
%
%

\begin{definition}\label{lambdaconcave}
Let $H: P_{ac}^2(\mathbb{R}^3) \rightarrow (-\infty , +\infty ]$ be proper and let $\lambda \in \mathbb{R}$.  Let $\pi _1: \mathbb{R}^3 \times \mathbb{R}^3 : (\mathbf x, \mathbf y) \rightarrow \mathbf x$ and $\pi _2: \mathbb{R}^3 \times \mathbb{R}^3: (\mathbf x, \mathbf y) \rightarrow \mathbf y$ be the first and second projections of $\mathbb{R}^3 \times \mathbb{R}^3$ onto $\mathbb{R}^3$.  We say that $H$ is \textit{$\lambda -$concave} if for every $\nu _1$, $\nu _2 \in P_{ac}^2(\mathbb{R}^3)$ and every optimal transport plan $\gamma \in \Gamma _0(\nu _1, \nu _2)$ we have
\[ H(\nu _{(t)}) \geqslant (1 - t)H(\nu _1) + tH(\nu _2) - \frac{\lambda }{2}t(1 - t)W_2^2(\nu _1, \nu _2) \]
for all $t \in [0, 1]$, where $\nu _{(t)} = ((1-t)\pi _1 + t\pi _2)\textrm{\#}\gamma $.
\end{definition}


\begin{proposition}\label{characterise}
Let $H: P_{ac}^2(\mathbb{R}^3) \rightarrow (-\infty , +\infty ]$ be upper semicontinuous and $\lambda -$concave for some $\lambda \in \mathbb{R}$ and let $\nu \in D(H)$.  Then, the following condition is equivalent to $\mathbf v \in \partial H(\nu )$: 
\[ H(\nu _h ) \leqslant H(\nu ) + \int_{\mathbb{R}^3  } \!  \mathbf v(\mathbf y) \cdot ( \mathbf R_{\nu }^{\nu _h}(\mathbf y) - \mathbf y)\nu (\mathbf y) \, d \mathbf y + \frac{\lambda }{2}W_2^2(\nu , \nu _h)\]
for all $\nu _h \in P_{ac}^2(\mathbb{R}^3)$.
\end{proposition}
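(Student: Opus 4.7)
The plan is to prove the two implications separately, exploiting the geodesic (displacement-interpolation) structure supplied by $\lambda$-concavity. The easier direction is that the global inequality implies Fréchet superdifferentiability: if
\[ H(\nu _h ) \leqslant H(\nu ) + \int_{\mathbb{R}^3  } \!  \mathbf v(\mathbf y) \cdot ( \mathbf R_{\nu }^{\nu _h}(\mathbf y) - \mathbf y)\nu (\mathbf y) \, d \mathbf y + \frac{\lambda }{2}W_2^2(\nu , \nu _h)\]
holds for every $\nu_h \in P_{ac}^2(\mathbb{R}^3)$, then since $\frac{\lambda}{2}W_2^2(\nu,\nu_h) = O(W_2^2(\nu,\nu_h)) = o(W_2(\nu,\nu_h))$ as $\nu_h \to \nu$, Definition \ref{superdifferential} is immediately satisfied. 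So no real work is needed here.

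The substantive direction is that $\mathbf v \in \partial H(\nu)$ implies the global inequality. The plan is to fix an arbitrary $\nu_h \in P_{ac}^2(\mathbb{R}^3)$ and exploit the displacement interpolation from $\nu$ to $\nu_h$. Using Remark \ref{abscont}, the unique optimal plan is $\gamma_0 = (\textit{\textbf{id}}, \mathbf R_\nu^{\nu_h})\textrm{\#}\nu \in \Gamma_0(\nu,\nu_h)$, and for $t \in [0,1]$ I define the geodesic
\[ \nu_{(t)} = ((1-t)\pi_1 + t\pi_2)\textrm{\#}\gamma_0 = \bigl[(1-t)\,\textit{\textbf{id}} + t\mathbf R_\nu^{\nu_h}\bigr]\textrm{\#}\nu. \]
Definition \ref{lambdaconcave} applied to this $\gamma_0$ gives
\[ H(\nu_{(t)}) \geqslant (1-t)H(\nu) + tH(\nu_h) - \tfrac{\lambda}{2}t(1-t)W_2^2(\nu,\nu_h), \]
which rearranges to
\[ H(\nu_h) - H(\nu) \leqslant \frac{H(\nu_{(t)}) - H(\nu)}{t} + \tfrac{\lambda}{2}(1-t)W_2^2(\nu,\nu_h). \]

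The crucial technical step is to identify the optimal transport from $\nu$ to $\nu_{(t)}$. Since $(1-t)\textit{\textbf{id}} + t\mathbf R_\nu^{\nu_h}$ coincides $\nu$-a.e.\ with the gradient of a convex function (it is a convex combination of the identity and a gradient of a convex function), it is the unique optimal map, so $\mathbf R_\nu^{\nu_{(t)}}(\mathbf y) - \mathbf y = t\bigl(\mathbf R_\nu^{\nu_h}(\mathbf y) - \mathbf y\bigr)$ $\nu$-a.e., and moreover $W_2(\nu,\nu_{(t)}) = tW_2(\nu,\nu_h)$. Applying the Fréchet superdifferential inequality from Definition \ref{superdifferential} at $\nu_{(t)}$ yields
\[ H(\nu_{(t)}) \leqslant H(\nu) + t\!\int_{\mathbb{R}^3}\! \mathbf v(\mathbf y) \cdot (\mathbf R_\nu^{\nu_h}(\mathbf y) - \mathbf y)\,\nu(\mathbf y)\, d\mathbf y + o(tW_2(\nu,\nu_h)). \]
Dividing by $t$, substituting into the previous inequality, and letting $t \to 0^+$ collapses the $o(1)$ remainder and sends $(1-t) \to 1$, producing exactly the desired bound.

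The main obstacle is the identification of $\mathbf R_\nu^{\nu_{(t)}}$ as the convex interpolant $(1-t)\textit{\textbf{id}} + t\mathbf R_\nu^{\nu_h}$ and of $\nu_{(t)}$ as an absolutely continuous measure in $P_{ac}^2(\mathbb{R}^3)$ (so that the Fréchet superdifferential condition is applicable along the curve). Both facts are standard in the McCann/Brenier theory, relying on the observation that a convex combination of $\textit{\textbf{id}}$ with the gradient of a convex function is again the gradient of a convex function, hence optimal for the Wasserstein transport from $\nu$; uniqueness of the optimal map (Remark \ref{abscont}) then closes the argument. The $\lambda$-concavity assumption is used precisely to absorb the second-order defect into the explicit $\frac{\lambda}{2}W_2^2$ term rather than an uncontrolled $o(W_2)$ remainder.
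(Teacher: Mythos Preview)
Your proof is correct and follows the standard displacement-interpolation argument. The paper itself does not give a proof of this proposition; it simply refers to \cite[Proposition 4.2]{ambgangbo}, and what you have written is essentially the argument one finds there (adapted to the concave/superdifferential setting). The only point worth flagging is that the application of Definition \ref{superdifferential} along the geodesic requires $\nu_{(t)} \in P_{ac}^2(\mathbb{R}^3)$, which you acknowledge; this holds because pushing forward the absolutely continuous $\nu$ by the gradient of a strictly convex function (namely $(1-t)\tfrac{1}{2}|\cdot|^2 + t u$, where $\mathbf R_\nu^{\nu_h}=\nabla u$) preserves absolute continuity for $t\in[0,1)$ by McCann's theorem.
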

\begin{proof}
See \cite[Proposition 4.2]{ambgangbo} for proof.
\end{proof}

Following \cite{ambgangbo}, we define Hamiltonian ODE's as follows:

\begin{definition}\label{hamode}
Let $H: P_{ac}^2(\mathbb{R}^3) \rightarrow (-\infty , +\infty ]$ be a proper, upper semicontinuous function.  Define the linear transformation $\tilde{J} : \mathbb{R}^3 \rightarrow \mathbb{R}^3$ by 
\begin{equation}\label{tildeJ}
\tilde{J}(v_1(\mathbf y), v_2(\mathbf y), v_3(\mathbf y)) = y_3(-v_2(\mathbf y), v_1(\mathbf y), 0),
\end{equation}
for all $\mathbf v(\mathbf y) \in \mathbb{R}^3$.  We say that an absolutely continuous curve $\nu _{(t)} : [0, \tau ] \rightarrow D(H)$ is a \textit{Hamiltonian ODE} relative to $H$, starting from $\nu _0 \in P_{ac}^2(\mathbb{R}^3)$, if there exist $\mathbf v_{(t)} \in L^2(\nu _{(t)};\mathbb{R}^3)$ with $\left\| \mathbf v_{(t)} \right\| _{L^2(\nu _{(t)})} \in L^1(0, \tau )$, such that
\begin{equation}\label{33} \begin{cases}
\frac{d}{dt}\nu _{(t)} + \nabla \cdot (\tilde{J} \mathbf v_{(t)} \nu _{(t)}) = 0, \qquad \nu _{(0)} = \nu _0, \qquad t \in (0, \tau )\\
\mathbf v_{(t)} \in T_{\nu _{(t)}}P_{ac}^2(\mathbb{R}^3) \cap \partial H(\nu _{(t)}) \qquad \textrm{for }a.e.\textrm{ } t.
\end{cases}\end{equation}
\end{definition}
The main result of \cite{ambgangbo} concerns Hamiltonians $H$ satisfying the following properties:\\*[3mm]
(H1) \textit{ There exist constants } $C_0 \in (0, +\infty )$, $R_0 \in (0, +\infty ]$ \textit{such that for all }$\nu \in P_{ac}^2(\mathbb{R}^3)$ \textit{with} $W_2(\nu , \nu _0) < R_0$ \textit{ we have } $\nu \in D(H)$, $\partial H(\nu ) \neq \emptyset $ \textit{ and } $\mathbf v = \partial _0 H(\nu )$ \textit{ satisfies } $|\mathbf v(\mathbf z)| \leqslant C_0(1 + |\mathbf z|)$ \textit{ for } $\nu -a.e.$ $\mathbf z \in \mathbb{R}^3$.\\*[3mm]
(H2) \textit{ If } $\nu ,\, \nu _n \in P_{ac}^2(\mathbb{R}^3)$, $\sup _n W_2(\nu _n, \nu_0)<R_0$ \textit{ and } $\nu _n \rightarrow \nu $ \textit{ narrowly, then there exists a subsequence } $n(k)$ \textit{ and functions } $\mathbf v_k$, $\mathbf v$ \textit{ such that } $\mathbf v_k = \partial _0 H(\nu _{n(k)})$ $\nu _{n(k)}-a.e.$, $\mathbf v = \partial _0 H(\nu )$ $\nu -a.e.$ \textit{ and } $\mathbf v_k \rightarrow \mathbf v$ $a.e.$ \textit{ in } $\mathbb{R}^3$ \textit{ as } $k \rightarrow +\infty $.\\*[3mm]
To ensure the constancy of $H$ along the solutions of the Hamiltonian system we consider also:\\*[3mm]
(H3) $H:P_{ac}^2(\mathbb{R}^3) \rightarrow (-\infty , +\infty ]$ \textit{ is proper, upper semicontinuous and }$\lambda -$\textit{concave for some }$\lambda \in \mathbb{R}$.\\*[3mm]

For Hamiltonians $H$ as above, the following  result holds (see \cite[Theorem 6.6]{ambgangbo} for full details and proof):
\begin{theorem}\label{6.6}
Assume that (H1) and (H2) hold for $H(\nu )$ and that $\tau > 0$ satisfies
\begin{equation}\label{maxtime}
C_0 \tau \sqrt{24(1 + e^{(25C_0^2 + 1)\tau }(1 + M_2(\nu _0)))}<R_0.
\end{equation}
Then there exists an absolutely continuous Hamiltonian flow $\nu _{(t)} \in P_{ac}^2(\mathbb{R}^3 )$,  $\nu_{(t)}:[0, \tau] \rightarrow D(H)$
starting from 
$\nu_0 \in P_{ac}^2(\mathbb{R}^3)$, satisfying (\ref{33}), 
such that the velocity field $\mathbf v_{(t)}$ coincides with $\partial _0 H(\nu _{(t)})$ for $a.e.$ $t \in [0, \tau ]$.  Furthermore, the function $t \rightarrow \nu _{(t)}$ is Lipschitz continuous.
Finally, there exists a function $l(r)$ depending only on $\tau $ and $C_0$ such that
\begin{equation}\label{bound1} \nu _0 \geqslant m_r \, \, a.e. \textrm{ on }B_r \textrm{ for all }r>0 \quad \implies \quad \nu _{(t)} \geqslant m_{l(r)} \, \, a.e. \textrm{ on }B_r\textrm{ for all }r>0
\end{equation}
and
\begin{equation}\label{bound2} \nu _0 \leqslant M_r \, \, a.e. \textrm{ on }B_r \textrm{ for all }r>0 \quad \implies \quad \nu _{(t)} \leqslant M_{l(r)} \, \, a.e. \textrm{ on }B_r\textrm{ for all }r>0.
\end{equation}
If in addition (H3) holds, then $t \mapsto H(\nu _{(t)})$ is constant.
\end{theorem}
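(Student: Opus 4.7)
The plan is to construct the Hamiltonian flow by an Euler-type time-discretisation scheme, then extract a convergent subsequence and identify the limit using (H1)--(H2). Fix a step $h = \tau/N$ and define recursively $\nu^0_h := \nu_0$ together with $\nu^{n+1}_h := (\mathbf{id} + h\, \tilde{J}\mathbf v^n_h) \# \nu^n_h$, where $\mathbf v^n_h := \partial_0 H(\nu^n_h)$ (well-defined by (H1)). Interpolate piecewise in time to obtain curves $\nu_h(t)$ and velocity fields $\mathbf v_h(t)$. The linear-growth bound for $\mathbf v^n_h$, combined with a Gr\"onwall-type estimate on the second moment $M_2(\nu_h(t))$, yields a uniform control of $W_2(\nu_h(t), \nu_0)$; the explicit time restriction (\ref{maxtime}) is chosen precisely so that this bound stays strictly below $R_0$ for every $t \in [0,\tau]$ and every $h$. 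Consequently (H1) continues to apply at every iterate and the scheme is well-defined with a uniform pointwise bound on the interpolated velocity.

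Next, this pointwise bound translates into a uniform Lipschitz bound for $t \mapsto \nu_h(t)$ in $W_2$. A Wasserstein version of the Ascoli--Arzel\`a theorem (cf.\ Proposition 3.3.1 of \cite{ambbook}) then extracts a subsequence $\nu_{h_k}$ converging narrowly, uniformly in $t$, to a Lipschitz limit $t \mapsto \nu_{(t)}$. To identify the limit equation (\ref{33}), I would apply (H2) at $a.e.$ time: the narrow convergence $\nu_{h_k}(t) \to \nu_{(t)}$ produces a further subsequence along which $\mathbf v_{h_k}(t) \to \partial_0 H(\nu_{(t)})$ $a.e.$ in $\mathbb{R}^3$, with Lemma \ref{amb3.3} used to pass to the limit in push-forwards under the associated optimal maps. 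A dominated-convergence argument exploiting the uniform linear-growth bound then allows the continuity equation to be passed to the limit in the sense of distributions, yielding (\ref{33}) with $\mathbf v_{(t)} = \partial_0 H(\nu_{(t)})$.

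The density bounds (\ref{bound1})--(\ref{bound2}) are inherited from the discrete scheme. Since $\mathbf v^n_h$ has linear growth, the discrete step $\mathbf{id} + h\, \tilde{J}\mathbf v^n_h$ is bi-Lipschitz on bounded sets with constant $1 + O(h)$, so its Jacobian distorts densities by a factor $1 \pm O(h)$. Iterating $N = \tau/h$ times produces a Gr\"onwall-type growth factor $e^{C\tau}$ for the density-bound deterioration, uniform in $h$; this yields the function $l(r)$, and the inequalities persist under narrow convergence since they are $L^\infty$ bounds on compact sets. Finally, if (H3) holds, applying Proposition \ref{characterise} in both directions to the optimal plan between $\nu_{(s)}$ and $\nu_{(t)}$, together with the identity $\mathbf v_{(t)} = \partial_0 H(\nu_{(t)})$ and the continuity equation, gives matching $o(|t-s|)$ upper and lower bounds on $H(\nu_{(t)}) - H(\nu_{(s)})$; hence $t \mapsto H(\nu_{(t)})$ is constant.

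The main obstacle is the identification of the limit velocity in the second step: one must conclude $\mathbf v_{(t)} = \partial_0 H(\nu_{(t)})$ from only narrow convergence of the discrete densities. This is exactly what (H2) is designed to supply, but even with (H2) one must still control the discretisation gap between the piecewise-constant $\mathbf v_{h_k}(t)$ and the exact minimal selection $\partial_0 H(\nu_{(t)})$ along the continuous curve; this last step relies on the equicontinuity of $\nu_h$ in $W_2$ combined with the stability property of optimal maps in Lemma \ref{amb3.3}.
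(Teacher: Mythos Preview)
The paper does not actually prove Theorem~\ref{6.6}; it merely quotes the result from \cite[Theorem 6.6]{ambgangbo} and refers the reader there for ``full details and proof''. So there is nothing in the paper to compare your argument against --- any substantive proof you give goes beyond what the present paper offers.

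That said, your sketch does follow the genuine Ambrosio--Gangbo strategy (explicit Euler discretisation, uniform $W_2$-Lipschitz bounds via the linear-growth condition in (H1), Ascoli--Arzel\`a compactness, and identification of the limit velocity through the stability hypothesis (H2)), and the treatment of constancy of $H$ under (H3) via Proposition~\ref{characterise} is also correct in spirit. One point deserves more care: your argument for the density bounds (\ref{bound1})--(\ref{bound2}) asserts that $\mathbf{id} + h\,\tilde{J}\mathbf v^n_h$ is bi-Lipschitz on bounded sets because $\mathbf v^n_h$ has linear growth. Linear growth alone does not yield a Lipschitz bound on $\mathbf v^n_h$, so the bi-Lipschitz claim and the Jacobian estimate do not follow directly from (H1). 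In \cite{ambgangbo} the density bounds are obtained by first regularising the velocity (so that classical ODE flow estimates apply, with the Jacobian controlled through the divergence), and then passing to the limit; you would need to insert such a smoothing step, or else invoke an additional structural property of $\partial_0 H$, to close this part of the argument.
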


\subsection{An alternative proof of Theorem \ref{5.5}}

We now apply Theorem \ref{6.6} to our problem in order to obtain directly the existence result of Theorem \ref{5.5}.  

Let $\Lambda $, $\Omega $ be as in (\ref{25}).  For $\nu \in P_{ac}(\Lambda )$, define the Hamiltonian $H$ by
\begin{equation}\label{ourH}
H(\nu )= \inf _{\mu \in P_{ac}(\Omega )} \left \{ {\cal E}(\nu , \mu ) + K_1\int_\Omega \! (\mu (\mathbf x))^\kappa \, d\mathbf x \right \} ,
\end{equation}
where 
\begin{equation} {\cal E}(\nu , \mu ) = \inf _{\overline{\mathbf S} \textrm{\#} \nu = \mu } \int_\Lambda \! \tilde{c}(\mathbf y, \overline{\mathbf S} (\mathbf y)) \nu ( \mathbf y)  \, d\mathbf y \label{ourE}\end{equation}
with $\tilde{c}(\mathbf y, \mathbf x) = c(\mathbf x, \mathbf y)$ defined by (\ref{cost}).  
We begin with the following:
\begin{proposition}\label{Hok}
Let $\Omega $ and $\Lambda $ satisfy (\ref{25}). 
 Let the Hamiltonian $H(\nu)$ on $P_{ac}(\Lambda )$ be defined by (\ref{ourH}). Then $H$ is superdifferentiable, upper semicontinuous and $(-2)-$concave.
\end{proposition}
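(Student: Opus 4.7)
The plan is to verify the three properties of $H$ separately, in the order upper semicontinuity, $(-2)$-concavity, superdifferentiability, so that the last can be handled via Proposition~\ref{characterise} rather than directly from Definition~\ref{superdifferential}.

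\emph{Upper semicontinuity.} For each fixed $\mu\in P_{ac}(\Omega)$, the functional $\nu\mapsto E_\nu(\mu)=\mathcal{E}(\nu,\mu)+K_1\int_\Omega\mu^\kappa\,d\mathbf{x}$ is continuous in the narrow topology, since $c$ is continuous and bounded on the compact set $\overline{\Omega}\times\overline{\Lambda}$ (exploiting $y_3\geq\delta$) and the pressure term is $\nu$-independent. Therefore $H(\nu)=\inf_\mu E_\nu(\mu)$ is upper semicontinuous as a pointwise infimum of continuous functions.

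\emph{$(-2)$-concavity.} Take the unique optimal plan $\gamma\in\Gamma_0(\nu_1,\nu_2)$ from Remark~\ref{abscont}, written as $\gamma=(\mathbf{id},\mathbf{R})\#\nu_1$, and set $\nu_{(t)}=((1-t)\mathbf{id}+t\mathbf{R})\#\nu_1$. Let $\sigma_t$ be the minimiser for $\nu_{(t)}$ from Theorem~\ref{4.1}. The first step is a variational reduction: using $\sigma_t$ as a test density in $E_{\nu_i}$ gives $H(\nu_i)\leq\mathcal{E}(\nu_i,\sigma_t)+K_1\int\sigma_t^\kappa\,d\mathbf{x}$; combining with $H(\nu_{(t)})=\mathcal{E}(\nu_{(t)},\sigma_t)+K_1\int\sigma_t^\kappa\,d\mathbf{x}$, the pressure contributions cancel and the claim reduces to
\begin{equation*}
\mathcal{E}(\nu_{(t)},\sigma_t)-(1-t)\mathcal{E}(\nu_1,\sigma_t)-t\mathcal{E}(\nu_2,\sigma_t)\geq t(1-t)\,W_2^2(\nu_1,\nu_2).
\end{equation*}
To establish this, invoke Theorem~\ref{3.3} for optimal Kantorovich pairs $(f_i,g_i)$ of $(\nu_i,\sigma_t)$ and use that $((1-t)f_1+tf_2,\,(1-t)g_1+tg_2)$ is admissible for $(\nu_{(t)},\sigma_t)$, yielding the lower bound $(1-t)\int g_1\,d(\nu_{(t)}-\nu_1)+t\int g_2\,d(\nu_{(t)}-\nu_2)$. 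Converting this quantity into $t(1-t)\,W_2^2(\nu_1,\nu_2)$ reduces to a pointwise semi-concavity estimate on each $g_i$ along the segment from $\mathbf{y}_1$ to $\mathbf{y}_2$. The relevant property should follow from the cost structure (\ref{cost}): the dependence of $c$ on $(y_1,y_2)$, apart from the $(y_1^2+y_2^2)/(2y_3)$ piece, is affine, so $g_i-(y_1^2+y_2^2)/(2y_3)$ is an infimum over $\mathbf{x}$ of functions affine in $(y_1,y_2)$ for each $y_3$; this, together with the vertical bounds $y_3\in[\delta,1/\delta]$, should yield the required modulus.

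\emph{Superdifferentiability.} With $(-2)$-concavity in hand, Proposition~\ref{characterise} reduces the task to exhibiting $\mathbf{v}_\nu\in L^2(\nu;\mathbb{R}^3)$ satisfying $H(\nu_h)\leq H(\nu)+\int \mathbf{v}_\nu\cdot(\mathbf{R}_\nu^{\nu_h}-\mathbf{y})\,\nu\,d\mathbf{y}-W_2^2(\nu,\nu_h)$ for nearby $\nu_h$. The natural candidate is $\mathbf{v}_\nu=\nabla g_\nu$ with $g_\nu$ the Kantorovich potential attached to $(\nu,\sigma_\nu)$; verification proceeds by testing $E_{\nu_h}$ against $\sigma_\nu$, expanding $\mathcal{E}(\nu_h,\sigma_\nu)$ via the suboptimal transport $\mathbf{S}_\nu\circ(\mathbf{R}_\nu^{\nu_h})^{-1}$, and Taylor-expanding the cost, which at the optimum returns $\nabla_\mathbf{y} c(\mathbf{S}_\nu(\mathbf{y}),\mathbf{y})=\nabla g_\nu(\mathbf{y})$. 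The hard part of the whole argument is the final step of the $(-2)$-concavity reduction: extracting from the $c$-concavity of the $g_i$ a Euclidean semi-concavity estimate with precisely the modulus needed to give the constant $-2$ requires quantitative use of the $1/y_3$ weighting in $c$ and of the lower bound $y_3\geq\delta$.
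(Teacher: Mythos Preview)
Your ordering creates unnecessary difficulty. The paper proceeds in the opposite order: it proves superdifferentiability \emph{first}, directly from Definition~\ref{superdifferential}, using precisely the computation you outline in your final paragraph (test $E_{\nu_h}$ against the minimiser $\sigma$ of $E_\nu$, use the suboptimal transport $\mathbf{S}_\nu^\sigma\circ(\mathbf{R}_\nu^{\nu_h})^{-1}$ from $\nu_h$ to $\sigma$, and Taylor-expand $\tilde c$ in its first argument). This immediately yields $\nabla_\mathbf{y}\tilde c(\mathbf{y},\mathbf{S}_\nu^\sigma(\mathbf{y}))\in\partial H(\nu)$ with the $o(W_2)$ remainder of Definition~\ref{superdifferential}, and no concavity hypothesis is needed at this stage---so Proposition~\ref{characterise} is never invoked for this purpose. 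The $(-2)$-concavity is then obtained as a \emph{consequence} of superdifferentiability, by appeal to \cite[Proposition~10.12]{villanioldnew}, rather than by a direct displacement-interpolation argument.

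Your direct attack on $(-2)$-concavity is where the genuine gap lies. You correctly reduce to a lower bound on $(1-t)\int g_1\,d(\nu_{(t)}-\nu_1)+t\int g_2\,d(\nu_{(t)}-\nu_2)$, but converting this into $t(1-t)\,W_2^2(\nu_1,\nu_2)$ requires a pointwise Euclidean semi-concavity estimate on the Kantorovich potentials $g_i$ with a specific modulus. Your sketch (``$g_i-\tfrac{y_1^2+y_2^2}{2y_3}$ is an infimum of affine functions in $(y_1,y_2)$ for each fixed $y_3$'') controls only the horizontal variables for frozen $y_3$; it says nothing about the $y_3$-direction, where the $1/y_3$ weighting and the $\phi(\mathbf{x})/y_3$ term enter nontrivially. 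Even granting some semi-concavity constant, there is no reason it should come out as exactly $-2$ rather than something depending on $\delta$, the diameter of $\Omega$, and $\|\phi\|_{C^2}$. You rightly flag this as the hard step, and it is not completed. The paper bypasses it entirely by reversing the logical order, so that semi-concavity is inherited from superdifferentiability via a general principle rather than verified by hand.

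Your upper-semicontinuity argument (pointwise infimum of narrowly continuous functionals) is fine and in fact slightly more direct than the paper's, which instead cites the continuity results \cite[Theorem~3.4, Lemma~4.3]{maroofi}.
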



\begin{proof}

Given $\nu \in P_{ac}(\Lambda )$, denote by $\sigma $ the minimiser in (\ref{ourH}).  The existence and uniqueness of this minimiser follows from Theorem \ref{4.1}.  For any $\nu _h \in P_{ac}(\Lambda)$ we have
\begin{eqnarray*}
H(\nu _h) &&= \inf _{\mu \in P_{ac}(\Omega )} \left \{ {\cal E}(\nu _h, \mu ) + K_1\int_\Omega \! (\mu (\mathbf x))^\kappa \, d\mathbf x \right \} \\
&& \leqslant {\cal E}(\nu _h, \sigma ) + K_1\int_\Omega \! (\sigma (\mathbf x))^\kappa \, d\mathbf x.
\end{eqnarray*}
First, recall that we can guarantee that there exists a unique optimal transport map $\mathbf R^{\nu _h}_{\nu }$ from $\nu $ to $\nu _h$ with respect to the Wasserstein cost function $d(\mathbf y, \mathbf y_h) = \frac{1}{2}|\mathbf y - \mathbf y_h|^2$; see Remark \ref{abscont}. 


We consider now transport with respect to the cost function $\tilde{c}(\mathbf y, \mathbf x) = c(\mathbf x, \mathbf y)$ given by (\ref{cost}).  Let $\mathbf S_\nu ^\sigma $ be the optimal map in the transport of $\nu $ to $\sigma $ and let $\mathbf S_{\nu_h}^\sigma $ be the optimal map in the transport of $\nu _h$ to $\sigma $.  Therefore, we have
\[ \inf _{\mathbf S \textrm{\#} \nu = \sigma } \int_\Lambda \! \tilde{c}(\mathbf y, \mathbf S (\mathbf y)) \nu (\mathbf y)  \, d\mathbf y =  \int_\Lambda \! \tilde{c}(\mathbf y, \mathbf S_\nu ^\sigma(\mathbf y)) \nu (\mathbf y)  \, d\mathbf y \]
and
\[\inf _{\mathbf S \textrm{\#} \nu _h= \sigma } \int_\Lambda \! \tilde{c}(\mathbf y, \mathbf S (\mathbf y)) \nu _h(\mathbf y)  \, d\mathbf y = \int_\Lambda \! \tilde{c}(\mathbf y, \mathbf S_{\nu _h}^\sigma (\mathbf y)) \nu _h(\mathbf y)  \, d\mathbf y .\]
The existence of $\mathbf S_{\nu }^\sigma$ and $\mathbf S_{\nu _h}^\sigma $ follows from Theorem \ref{3.3}.  Note that, since $(\mathbf S_{\nu }^\sigma \circ (\mathbf R^{\nu _h}_{\nu })^{-1}) \textrm{\#} \nu _h= \sigma $ and since $\mathbf S_{\nu_h}^\sigma $ is optimal in the transport of $\nu _h$ to $\sigma $, we have
\[ \int_\Lambda \! \tilde{c}(\mathbf y, \mathbf S_{\nu _h}^\sigma (\mathbf y)) \nu _h(\mathbf y)  \, d\mathbf y \leqslant \int_\Lambda \! \tilde{c}(\mathbf y, \mathbf S_{\nu }^\sigma \circ (\mathbf R^{\nu _h}_{\nu })^{-1} (\mathbf y)) \nu _h(\mathbf y)  \, d\mathbf y. \]
It follows that
\begin{align*}
H(\nu _h) - H(\nu ) &\leqslant {\cal E}(\nu _h, \sigma ) + K_1\int_\Omega \! (\sigma (\mathbf x))^\kappa \, d\mathbf x - {\cal E}(\nu , \sigma ) - K_1\int_\Omega \! (\sigma (\mathbf x))^\kappa \, d\mathbf x \\
&= \int_\Lambda \! \tilde{c}(\mathbf y, \mathbf S_{\nu _h}^\sigma (\mathbf y)) \nu _h(\mathbf y)  \, d\mathbf y - \int_\Lambda \! \tilde{c}(\mathbf y, \mathbf S_{\nu }^\sigma (\mathbf y)) \nu (\mathbf y)  \, d\mathbf y \\
&\leqslant \int_\Lambda \! \tilde{c}(\mathbf y, \mathbf S_{\nu }^\sigma \circ (\mathbf R^{\nu _h}_{\nu })^{-1} (\mathbf y)) \nu _h(\mathbf y)  \, d\mathbf y - \int_\Lambda \! \tilde{c}(\mathbf y, \mathbf S_{\nu }^\sigma (\mathbf y)) \nu (\mathbf y)  \, d\mathbf y \\
&= \int_\Lambda \! \tilde{c}(\mathbf R^{\nu _h}_{\nu }(\mathbf y), \mathbf S_{\nu }^\sigma (\mathbf y)) \nu (\mathbf y)  \, d\mathbf y - \int_\Lambda \! \tilde{c}(\mathbf y, \mathbf S_{\nu }^\sigma (\mathbf y)) \nu (\mathbf y)  \, d\mathbf y \\
&= \int_\Lambda \! \bigg[ \tilde{c}(\mathbf R^{\nu _h}_{\nu }(\mathbf y), \mathbf S_{\nu }^\sigma (\mathbf y)) - \tilde{c}(\mathbf y, \mathbf S_{\nu }^\sigma (\mathbf y)) \bigg]\nu (\mathbf y)  \, d\mathbf y, \\
& = \int_\Lambda \!\nabla \tilde{c}(\mathbf y, \mathbf S_{\nu }^\sigma (\mathbf y))\cdot [\mathbf R^{\nu _h}_{\nu }(\mathbf y) - \mathbf y] \nu (\mathbf y)  \, d\mathbf y + o(W_2(\nu , \nu _h)). 
\end{align*}
\begin{equation}\label{otherfloaty}
\end{equation}
where we have used (\ref{appendix}).

Hence, using Definition \ref{superdifferential}, we conclude that $\nabla \tilde{c}(\mathbf y, \mathbf S_{\nu }^\sigma (\mathbf y)) \in \partial H(\nu )$. 
 Thus, $\partial H(\nu )$ is non-empty, $H$ is superdifferentiable and we can use \cite[Proposition 10.12]{villanioldnew} to conclude that $H$ is semi-concave, i.e. 
\begin{equation}\label{semiconcave}
H \textrm{ is  } (-2)-\textrm{concave}.
\end{equation}
Also, from the narrow continuity of ${\cal E}(\cdot, \cdot )$ (see \cite[Theorem 3.4]{maroofi}) and the uniform convergence of $\sigma $ as the minimiser of (\ref{ourH}) (see \cite[Lemma 4.3]{maroofi}), we have that 
\begin{equation}\label{usc}
H \textrm{ is upper semicontinuous.}
\end{equation} 
From (\ref{semiconcave}) and (\ref{usc}), we have that (H3) holds.
\end{proof}

The following proposition yields a proof of Theorem \ref{5.5}, alternative to the proof given in  \cite{maroofi}.
\begin{proposition}\label{mainalt}
Let $1 < r < \infty $ and $\nu _0 \in L^r (\Lambda _0 )$ be an initial potential density with support in $\Lambda _0$, where $\Lambda _0 $ is a bounded open set in $\mathbb{R} ^3$ with $\overline{\Lambda _0} \subset \mathbb{R} ^2 \times [\tilde{\delta }, 1/\tilde{\delta }]$ for some $0 < \tilde{\delta } < 1$.  Let $\Omega $ and $\Lambda $ satisfy (\ref{25}) and 
 let the Hamiltonian $H$ be defined by (\ref{ourH}).  Then, there exists a Hamiltonian flow $\nu _{(t)} \in P_{ac}(\Lambda )$ and constant $\tau >0$ 
 such that
\[ \frac{d}{dt}\nu _{(t)} + \nabla \cdot (\tilde{J}( \mathbf v_{(t)}) \nu _{(t)}) = 0, \qquad \nu _{(0)} = \nu _0, \qquad t \in (0, \tau )\]
where $\tilde{J}( \mathbf v_{(t)})=\mathbf w$ a.e. in $[0,\tau ]$.
\end{proposition}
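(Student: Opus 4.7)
The plan is to apply the abstract existence Theorem~\ref{6.6} to the Hamiltonian $H$ defined by (\ref{ourH}), and then to identify the resulting velocity field by direct computation. Hypothesis (H3) has already been established in Proposition~\ref{Hok}, so what remains is to verify (H1) and (H2) and to establish the identity $\tilde J(\mathbf v_{(t)})=\mathbf w$.

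For (H1), the key observation is already contained in the proof of Proposition~\ref{Hok}: for every $\nu\in P_{ac}(\Lambda)$, with $\sigma$ the minimiser in (\ref{ourH}), the computation leading to (\ref{otherfloaty}) exhibits
\begin{equation*}
\mathbf v := \nabla_{\mathbf y}\tilde c(\mathbf y, \mathbf S_{\nu}^{\sigma}(\mathbf y))\in \partial H(\nu).
\end{equation*}
Since $\mathbf S_{\nu}^{\sigma}(\mathbf y)\in \Omega$, both $\Omega$ and $\Lambda$ are bounded, and $y_3\in[\delta,1/\delta]$, the explicit form of $\nabla\tilde c$ gives a uniform bound $|\mathbf v(\mathbf y)|\leq C_0$ depending only on $\Omega$, $\Lambda$, $\phi$ and $\delta$; since $\nu_0$ has compact support strictly inside $\Lambda$, one can choose $R_0>0$ so that every $\nu$ with $W_2(\nu,\nu_0)<R_0$ lies in $P_{ac}(\Lambda)$. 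Moreover, by the Kantorovich duality for the cost $\tilde c$, $\mathbf v$ coincides $\nu$-a.e.\ with the gradient of a $\tilde c$-concave Lipschitz function, and hence lies in $T_\nu P_{ac}^2(\mathbb R^3)$; projection onto the tangent cone does not increase the $L^2$-norm, so $\partial_0 H(\nu)$ inherits the same pointwise bound, establishing (H1).

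To verify (H2), let $\nu_n\to\nu$ narrowly in $P_{ac}(\Lambda)$. By \cite[Lemma~4.3]{maroofi} the corresponding minimisers $\sigma_n$ of (\ref{ourH}) converge to $\sigma$, and by Theorem~\ref{3.3} each optimal map $\mathbf S_{\nu_n}^{\sigma_n}$ is uniquely determined. Combining the stability of $\tilde c$-optimal maps under narrow convergence (from \cite{maroofi}, or by adapting Lemma~\ref{amb3.3} to the cost $\tilde c$ through the $\tilde c$-concave dual potentials) with continuity of $\nabla_{\mathbf y}\tilde c$ on $\Lambda\times\Omega$, we pass to a subsequence along which $\nabla\tilde c(\mathbf y, \mathbf S_{\nu_n}^{\sigma_n}(\mathbf y))\to \nabla\tilde c(\mathbf y, \mathbf S_{\nu}^{\sigma}(\mathbf y))$ a.e.\ in $\mathbb R^3$, which is exactly the convergence required for (H2).

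With (H1)--(H3) in hand, Theorem~\ref{6.6} produces a $\tau>0$ satisfying (\ref{maxtime}) and an absolutely continuous flow $\nu_{(t)}\in P_{ac}(\Lambda)$ solving the continuity equation with velocity $\tilde J(\mathbf v_{(t)})$ and $\mathbf v_{(t)}=\partial_0 H(\nu_{(t)})$ a.e.\ in $t$. The identification with $\mathbf w$ is then a direct computation using (\ref{tildeJ}) and (\ref{cost}): for $\mathbf S=\mathbf S_{\nu_{(t)}}^{\sigma_{(t)}}$,
\begin{equation*}
\tilde J\bigl(\nabla_{\mathbf y}\tilde c(\mathbf y, \mathbf S(\mathbf y))\bigr)
= y_3\Bigl(-\tfrac{y_2-S_2}{y_3},\,\tfrac{y_1-S_1}{y_3},\,0\Bigr)
= \mathbf e_3\times(\mathbf y-\mathbf S) = \mathbf w.
\end{equation*}
The main obstacle will be (H2): Lemma~\ref{amb3.3} is stated for quadratic Wasserstein cost, and its extension to the anisotropic cost $\tilde c$ hinges on the stability of $\tilde c$-concave potentials under narrow convergence, where the machinery of \cite{maroofi} enters crucially.
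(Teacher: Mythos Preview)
Your outline has the right architecture, but there is a genuine gap at the heart of both your verification of (H1) and your final identification $\tilde J(\mathbf v_{(t)})=\mathbf w$: you have shown that $\mathbf v:=\nabla_{\mathbf y}\tilde c(\mathbf y,\mathbf S_\nu^\sigma(\mathbf y))$ belongs to $\partial H(\nu)$, but not that it equals $\partial_0 H(\nu)$. Being a gradient, $\mathbf v$ does lie in $T_\nu P_{ac}^2(\mathbb R^3)$, but $\partial_0 H(\nu)$ is by definition the element of \emph{minimal} $L^2(\nu)$-norm in $\partial H(\nu)$; membership in $\partial H(\nu)\cap T_\nu P_{ac}^2$ does not single it out. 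Your sentence ``projection onto the tangent cone does not increase the $L^2$-norm, so $\partial_0 H(\nu)$ inherits the same pointwise bound'' conflates $L^2$ control with $L^\infty$ control: an $L^2$-minimal element need not satisfy the pointwise estimate required in (H1). And without knowing that $\partial_0 H(\nu)$ agrees (after $\tilde J$) with $\nabla\tilde c(\cdot,\mathbf S(\cdot))$, your last displayed computation does not identify the velocity produced by Theorem~\ref{6.6} with $\mathbf w$.

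The paper closes this gap by a two-sided variational computation rather than by exhibiting a single superdifferential element. One perturbs $\nu$ along $\mathbf g_s=\mathbf{id}+s\nabla\varphi$ with $\varphi$ depending only on $(y_1,y_2)$, and uses \emph{both} the superdifferential inequality coming from $(-2)$-concavity via Proposition~\ref{characterise} \emph{and} the upper bound $H(\nu)-H(\nu_s)\le {\cal E}(\nu,\sigma_s)-{\cal E}(\nu_s,\sigma_s)$ obtained by fixing the minimiser $\sigma_s$ of $E_{\nu_s}$. Expanding the cost and letting $s\to 0^{\pm}$ (with Lemma~\ref{amb3.3} to handle the $s$-dependence of $\sigma_s$ and $\mathbf S_{\nu_s}^{\sigma_s}$) yields an \emph{equality}
\[
-\int_\Lambda \xi\cdot\nabla\varphi\,\nu\,d\mathbf y=\int_\Lambda \frac{\mathbf S_\nu^\sigma(\mathbf y)-\mathbf y}{y_3}\cdot\nabla\varphi\,\nu\,d\mathbf y
\]
valid for \emph{every} $\xi\in\partial H(\nu)$. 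This pins down $\tilde J(\pi_\nu\xi)=\mathbf w$ uniformly over the superdifferential, hence in particular $\tilde J(\partial_0 H(\nu))=\mathbf w=\tilde J(\nabla g_0)$; (H1) then follows from $g_0\in W^{1,\infty}(\Lambda)$. Your one-sided inequality from Proposition~\ref{Hok} is exactly half of this argument; the missing half is the lower bound coming from freezing $\sigma_s$ rather than $\sigma$.
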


\begin{proof}


We compute $\partial _0H(\nu )$ (as defined in Definition \ref{superdifferential}) explicitly to show that the conditions required to apply Theorem \ref{6.6} hold.  From the definition of $\tilde{J}$ in (\ref{tildeJ}), velocity fields transporting $\nu $ will have vanishing components in the $y_3$ direction so that we need only consider variations of $\nu $ in the $(y_1, y_2)-$directions.  Thus, to characterise the elements of $\partial H(\nu )$, we let $\tilde{\varphi} \in C_c^\infty (\mathbb{R}^2)$ 
 and define $\varphi (y_1,y_2,y_3):=\tilde{\varphi}(y_1,y_2)$ for all $\mathbf y \in \mathbb{R}^3$.  We then set


  
 \[ \mathbf g_s(\mathbf y) = ((g_s)_1(\mathbf y), (g_s)_2(\mathbf y), (g_s)_3(\mathbf y)) = \mathbf {y} + s\nabla \varphi (\mathbf y).\]
Note that $(g_s)_3(\mathbf y) = y_3$ and, for $|s|$ sufficiently small, $\mathbf g_s$ is the gradient of a convex function, since $\mathbf g_s(\mathbf y) = \nabla (\frac{1}{2}\mathbf y^2 + s\varphi )$.   Define $\nu _s = \mathbf g_s \textrm{\#} \nu $.   Denote by $\sigma _s$ the minimiser in
 \[ H(\nu _s)= \inf _{\mu \in P_{ac}(\Omega )} \left \{ {\cal E}(\nu _s, \mu ) + K_1\int_\Omega \! (\mu (\mathbf x))^\kappa \, d\mathbf x \right \} .\]
 The existence and uniqueness of the minimiser $\sigma _s$ follows from Theorem \ref{4.1}.  Let $\xi \in \partial H(\nu )$.  
 Combining the $(-2)-$concavity of $H$ and (\ref{usc}) with Proposition \ref{characterise}, we obtain
 \begin{equation}\label{inequality}
  H(\nu _s) - H(\nu ) - \int_\Lambda \!  \xi (\mathbf y) \cdot (\mathbf R^{\nu _s}_{\nu }(\mathbf y) - \mathbf {y} )  \nu (\mathbf y) \, d\mathbf y + W_2^2(\nu , \nu _s) \leqslant 0.
  \end{equation}
Brenier's polar factorisation theorem states that any suitable mapping from $\nu $ to $\nu _s$ can be uniquely factorised as the composition of a measure-preserving mapping and the gradient of a convex function (see, for example, \cite[Chapter 3]{villani}).  Since, for $|s|$ sufficiently small, $\mathbf g_s$ is the gradient of a convex function, we conclude that
 \begin{equation*}
 W_2^2(\nu , \nu _s) = \int_\Lambda \! |\mathbf{y} - \mathbf R^{\nu _s}_{\nu }(\mathbf y)|^2 \nu (\mathbf y) \, d\mathbf y = \int_\Lambda \! |\mathbf{y} - \mathbf g_s(\mathbf y)|^2 \nu (\mathbf y) \, d\mathbf y = s^2\int_\Lambda \! |\nabla \varphi (\mathbf y)|^2 \nu (\mathbf y) \, d\mathbf y 
 \end{equation*}
 and
 \[ \int_\Lambda \!  \xi (\mathbf y) \cdot ( \mathbf R^{\nu _s}_{\nu } (\mathbf y)- \mathbf {y} )  \nu (\mathbf y) \, d\mathbf y = \int_\Lambda \!  \xi (\mathbf y) \cdot ( \mathbf g_s(\mathbf y) - \mathbf{y} )  \nu (\mathbf y) \, d\mathbf y = s\int_\Lambda \! \xi (\mathbf y) \cdot \nabla \varphi (\mathbf y)  \nu (\mathbf y) \, d\mathbf y .\]
Combining this with (\ref{inequality}), we therefore obtain
 \begin{align*}
 -s\int_\Lambda \!  \xi (\mathbf y) \cdot \nabla \varphi (\mathbf y)  \nu (\mathbf y) \, d\mathbf y & + s^2\int_\Lambda \! |\nabla \varphi (\mathbf y)|^2\nu (\mathbf y) \, d\mathbf y \leqslant H(\nu ) - H(\nu _s)\\
 &\leqslant {\cal E}(\nu , \sigma _s) - {\cal E}(\nu _s, \sigma _s) \\
 &= \int_\Lambda \! \tilde{c}(\mathbf y, \mathbf S _\nu ^{\sigma _s}(\mathbf y)) \nu (\mathbf y)  \, d\mathbf y - \int_\Lambda \! \tilde{c}(\mathbf y, \mathbf S _{\nu _s} ^{\sigma _s} (\mathbf y)) \nu _s(\mathbf y)  \, d\mathbf y \\
& \leqslant \int_\Lambda \! \tilde{c}(\mathbf y, \mathbf S _{\nu _s}^{\sigma _s}\circ \mathbf g_s(\mathbf y)) \nu (\mathbf y)  \, d\mathbf y - \int_\Lambda \! \tilde{c}(\mathbf y, \mathbf S _{\nu _s} ^{\sigma _s} (\mathbf y)) \nu _s(\mathbf y)  \, d\mathbf y 
\end{align*}
\begin{equation}
 = \int_\Lambda \! \tilde{c}(\mathbf g_s^{-1}(\mathbf y), \mathbf S _{\nu _s}^{\sigma _s}(\mathbf y)) \nu _s(\mathbf y)  \, d\mathbf y - \int_\Lambda \! \tilde{c}(\mathbf y, \mathbf S _{\nu _s} ^{\sigma _s} (\mathbf y)) \nu _s(\mathbf y)  \, d\mathbf y ,
\label{floaty}
\end{equation}
since $\mathbf g_s \textrm{\#} \nu = \nu _s$.  Here $\mathbf S _{\nu }^{\sigma _s}$ denotes the optimal transport map from $\nu $ to $\sigma _s$ and $\mathbf S _{\nu _s}^{\sigma _s}$  denotes the optimal transport map from $\nu _s$ to $\sigma _s$ with respect to the cost function $\tilde{c}(\cdot , \cdot )$.  The existence of $\mathbf S _{\nu }^{\sigma _s}$ and $\mathbf S _{\nu _s}^{\sigma _s}$ follows from Theorem \ref{3.3}.

Note that 
 \[ \mathbf g_s^{-1}(\mathbf y) = \mathbf y - s\nabla \varphi (\mathbf y) + \frac{s^2}{2}\nabla ^2\varphi (\mathbf y)\nabla \varphi (\mathbf y) + \epsilon (s, \mathbf y),\]
where $\epsilon $ is a function such that $|\epsilon (s, \mathbf y) | \leqslant |s|^3\| \varphi \| _{C^3(\mathbb{R}^3)}$.  

Combining this expression for $\mathbf g_s^{-1}$ with (\ref{floaty}) and using $\frac{\partial }{\partial y_3}\varphi = 0$, we conclude that

\begin{align*}
-s\int_\Lambda \! & \xi (\mathbf y) \cdot \nabla \varphi (\mathbf y)  \nu (\mathbf y) \, d\mathbf y  + s^2\int_\Lambda \! |\nabla \varphi (\mathbf y)|^2\nu (\mathbf y) \, d\mathbf y \\
&\leqslant \int_\Lambda \! \left [ \tilde{c}(\mathbf g_s^{-1}(\mathbf y), \mathbf S _{\nu _s}^{\sigma _s}(\mathbf y)) - \tilde{c}(\mathbf y, \mathbf S _{\nu _s} ^{\sigma _s} (\mathbf y)) \right ]\nu _s(\mathbf y)  \, d\mathbf y\\
& = \int_\Lambda \! \left [\frac{\frac{1}{2} \left \{ \left | (g_s)_ 1^{-1}(\mathbf y) - (S_{\nu _s} ^{\sigma _s})_1 (\mathbf y)\right | ^2 + \left | (g_s)_2^{-1}(\mathbf y) - (S_{\nu _s} ^{\sigma _s})_2 (\mathbf y)\right | ^2 \right \} + \phi (\mathbf S _{\nu _s} ^{\sigma _s} (\mathbf y))}{y_3} \right. \\ &\qquad - \left. \frac{\frac{1}{2} \left \{ \left | y_1 - (S_{\nu _s} ^{\sigma _s})_1 (\mathbf y)\right | ^2 + \left | y_2 - (S_{\nu _s} ^{\sigma _s})_2 (\mathbf y)\right | ^2 \right \} + \phi (\mathbf S _{\nu _s} ^{\sigma _s} (\mathbf y))}{y_3}\right ] \nu _s(\mathbf y)\, d\mathbf y\\
& = \int_\Lambda \! \frac{1}{y_3}\left [\frac{1}{2}\left \{ \left | y_1 - s\frac{\partial }{\partial y_1}\varphi (\mathbf y) - (S_{\nu _s}^{\sigma _s})_1 (\mathbf y)\right | ^2 + \left | y_2 - s\frac{\partial }{\partial y_2}\varphi (\mathbf y)- (S_{\nu _s}^{\sigma _s})_2 (\mathbf y)\right | ^2\right \} \right. \\ & \qquad - \left. \frac{1}{2}\left \{ \left |y_1 - (S_{\nu _s} ^{\sigma _s})_1 (\mathbf y) \right | ^2 + \left | y_2 - (S_{\nu _s} ^{\sigma _s})_2 (\mathbf y)\right | ^2\right \} \right ] \nu _s(\mathbf y)\, d\mathbf y + o(s)\\
&= s \int_\Lambda \! \left (\frac{\mathbf S_{\nu _s} ^{\sigma _s}(\mathbf y) - \mathbf y}{y_3}\right ) \cdot \nabla \varphi (\mathbf y) \, \nu _s (\mathbf y)\, d\mathbf y + o(s).
\end{align*}
Recall now that $\nu _s \rightarrow \nu $ in $P_{ac}(\Lambda )$ and $\sigma _s \rightarrow \sigma $ in $P_{ac}(\Omega )$ as $s \rightarrow 0$, hence Lemma \ref{amb3.3} gives
\[ -s\int_\Lambda \! \xi (\mathbf y) \cdot \nabla \varphi (\mathbf y)  \nu (\mathbf y) \, d\mathbf y  + s^2\int_\Lambda \! |\nabla \varphi (\mathbf y)|^2\nu (\mathbf y) \, d\mathbf y \leqslant s \int_\Lambda \! \left ( \frac{\mathbf S_{\nu } ^{\sigma }(\mathbf y) - \mathbf y}{y_3} \right ) \cdot \nabla \varphi (\mathbf y) \, \nu  (\mathbf y)\, d\mathbf y + o(s).\]
Dividing both sides first by $s>0$, then by $s<0$ and letting $|s| \rightarrow 0$ we obtain
\[ -\int_\Lambda \!  \xi (\mathbf y) \cdot \nabla \varphi (\mathbf y) \,  \nu (\mathbf y) \,  d\mathbf y=  \int_\Lambda \! \left ( \frac{ \mathbf S_{\nu } ^{\sigma }(\mathbf y) - \mathbf y}{y_3} \right ) \cdot \nabla \varphi (\mathbf y)  \, \nu  (\mathbf y)\, d\mathbf y.\]
Thus, we have that $\tilde{J}(\pi _\nu \xi (\mathbf y)) = \tilde{J}\left (\left ( \frac{  \mathbf y - \mathbf S_{\nu } ^{\sigma }(\mathbf y)}{y_3} \right )\right )$, where $\pi _\nu : L^2(\nu ; \Lambda ) \rightarrow T_\nu P_{ac}^2(\Lambda )$ denotes the canonical orthogonal projection
, where the tangent space is defined in (\ref{tangent}).  The minimality of the norm of $\partial _0 H$ then gives
\begin{equation}\label{oursupdiff}
\tilde{J}(\partial _0 H(\nu )) = \tilde{J}\left (\left ( \frac{  \mathbf y - \mathbf S_{\nu } ^{\sigma }(\mathbf y) }{y_3} \right ) \right ) = \mathbf w(\mathbf y),
\end{equation}
where $\mathbf w$ is defined as in (\ref{comdual2}).

We can now check directly that conditions (H1) and (H2) hold.  Using Theorem \ref{3.3}, we may conclude from (\ref{oursupdiff}) that
\[\tilde{J}(\partial _0 H(\nu )) =  \mathbf w(\mathbf y) = \tilde{J}(\nabla g_0(\mathbf y)). \]  
  Condition (H1) then follows from $g_0 \in W^{1, \infty }(\Lambda )$, where $g_0$ is the solution of the dual problem (\ref{26}).  Condition (H2) follows from the 
 stability of optimal maps (see, for example, \cite[Corollary 5.23]{villanioldnew}).  
Hence we may apply 
 the result of Theorem \ref{6.6} to conclude that there exists a Hamiltonian flow $\nu _{(t)}$ such that
\[ \frac{d}{dt}\nu _{(t)} + \nabla \cdot (\tilde{J}( \mathbf v_{(t)}) \nu _{(t)}) = 0, \qquad \nu _{(0)} = \nu _0, \qquad t \in (0, \tau )\]
where $\tilde{J} (\mathbf v_{(t)}) = \tilde{J}( \partial _0 H(\nu _{(t)}) )$ for $a.e.$ $t \in [0, \tau ]$.  By (\ref{oursupdiff}), this then completes the proof that the dual space continuity equation (\ref{comdual1}), with velocity field defined as in (\ref{comdual2}), is satisfied.  In addition, from (H3) and the definition of $\tilde{J}$, the energy associated with the flow is conserved.

\end{proof}

\begin{proof}[Proof of Theorem \ref{5.5}]
By the definition of $\mathbf w$ in Proposition \ref{mainalt}, we have that $(\sigma , \mathbf T)$ is a stable solution of (\ref{comdual1})-(\ref{comdual6}), where $\mathbf T = \mathbf S^{-1}$ (see Theorem \ref{3.3}).  Theorem \ref{5.5} (\emph{i}) follows from (\ref{bound1}), (\ref{bound2}); Theorem \ref{5.5} (\emph{ii}) follows from the fact that $\sigma $ is a minimiser of (\ref{ourH}) (see \cite[Theorem 4.2]{maroofi}); Theorem \ref{5.5} (\emph{iii}) follows from the definition of $\mathbf w$ in terms of $\nabla g_0$ and the fact that, as a solution of the dual problem (\ref{26}), $g_0 \in W^{1, \infty }(\Lambda )$.
\end{proof}

\section{Lagrangian statement of the equations in physical space}\label{aims}
\setcounter{equation}{0}
In this section, our  aim is to prove the existence of a weak Lagrangian solution of the fully compressible semi-geostrophic system (\ref{commom})-(\ref{combound}).  We also show that, with additional regularity, a weak Lagrangian solution would determine a weak Eulerian solution.  These statements are proved following closely the treatment given in \cite{feldman} of the incompressible case.

We begin by rewriting the system (\ref{commom})-(\ref{combound}) in a form that enables us to state the equations in Lagrangian form.  These new equations (with prescribed initial conditions) are

\begin{doublespace}
\begin{eqnarray}\label{momentum}
&&D_t\mathbf T(t, \mathbf x) =  \mathbf e _3 \times \left [\mathbf T(t, \mathbf x) - \mathbf x\right ],\\
&&\label{continuity}\partial _t\sigma (t, \mathbf x) + \nabla \cdot (\sigma (t, \mathbf x)\mathbf u(t, \mathbf x)) = 0,\\
&&\label{newboy}\nabla _{\mathbf x}c(\mathbf x, \mathbf T(t, \mathbf x))
 + \kappa K_1\nabla ((\sigma(t, \mathbf x))^{\kappa - 1})
 = 0,\\
&&\label{bound}\mathbf u \cdot \mathbf n = 0 \hspace{3mm} \textrm{ on } [0, \tau ) \times \partial \Omega ,\\
&&\label{initial} \sigma (0, \mathbf x) = \sigma _0(\mathbf x), \quad \mathbf T(0, \mathbf x) = \mathbf T_0, \quad \mathbf T_0 \textrm{\#}\sigma _0\in L^r(\Lambda_0), \;\Lambda_0\subset \mathbb R^3\;{\rm compact},
\end{eqnarray}
\end{doublespace}
\noindent where $r \in (1, \infty )$, $c(\mathbf x , \mathbf T(t, \mathbf x) )$ is defined in (\ref{cost}) and $K_1$ is defined as in (\ref{energy}).
\begin{proposition}
A solution of (\ref{momentum})-(\ref{initial}) determines a solution of the original system (\ref{commom})-(\ref{combound}).
\end{proposition}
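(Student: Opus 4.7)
The plan is to recover the physical fields from the Lagrangian data and verify each of (\ref{commom})--(\ref{combound}) in turn. Given $(\sigma,\mathbf T,\mathbf u)$ solving (\ref{momentum})--(\ref{initial}), I would set
\[\theta:=T_3,\qquad \rho:=\sigma/\theta,\qquad \mathbf u^g:=(x_2-T_2,\,T_1-x_1,\,0),\]
which inverts the change of variables (\ref{transform}), and define $p$ by solving the equation of state (\ref{comstate}) algebraically, yielding $p=R^{\kappa}\sigma^{\kappa}/p_{\mathrm{ref}}^{\kappa-1}$. Then (\ref{comstate}) holds by construction, and the rigid boundary condition (\ref{bound}) is identical to (\ref{combound}).

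For the ``easy'' equations, observe that $\mathbf e_3\times(\mathbf T-\mathbf x)=(x_2-T_2,\,T_1-x_1,\,0)=\mathbf u^g$, so (\ref{momentum}) simplifies to $D_t\mathbf T=\mathbf u^g$. Its third component is $D_t\theta=0$, which is (\ref{comad}). Expanding $\sigma=\rho\theta$ in (\ref{continuity}) gives
\[\theta\bigl(D_t\rho+\rho\,\nabla\!\cdot\!\mathbf u\bigr)+\rho\,D_t\theta=0,\]
and dividing by $\theta>0$ after invoking $D_t\theta=0$ yields $D_t\rho=-\rho\,\nabla\!\cdot\!\mathbf u$, which rearranges to (\ref{comcont}).

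The heart of the argument is the momentum equation (\ref{commom}) and the geostrophic balance (\ref{comgeo}). Differentiating $T_1=x_1+u_2^g$ and $T_2=x_2-u_1^g$ along the flow and using $D_t\mathbf T=\mathbf u^g$ gives $D_t u_2^g=u_1^g-u_1$ and $D_t u_1^g=u_2-u_2^g$, from which $D_t\mathbf u^g+\mathbf e_3\times\mathbf u=\mathbf e_3\times\mathbf u^g$. Consequently, under (\ref{momentum}) the momentum equation (\ref{commom}) collapses to the balance (\ref{comgeo}), so it only remains to derive the latter from (\ref{newboy}). A direct computation of the partial gradient of the cost (\ref{cost}) yields
\[\nabla_{\mathbf x}c(\mathbf x,\mathbf T(t,\mathbf x))=\frac{1}{\theta}\bigl(\mathbf e_3\times\mathbf u^g+\nabla\phi\bigr),\]
so (\ref{newboy}) becomes $\mathbf e_3\times\mathbf u^g+\nabla\phi+\kappa K_1\theta\,\nabla(\sigma^{\kappa-1})=0$. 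The main---and essentially only non-trivial---step is then to verify the thermodynamic identity
\[\frac{1}{\rho}\nabla p=\kappa K_1\,\theta\,\nabla(\sigma^{\kappa-1}),\]
which, upon substituting $p=R^\kappa\sigma^\kappa/p_{\mathrm{ref}}^{\kappa-1}$ and $K_1=c_v(R/p_{\mathrm{ref}})^{\kappa-1}$, reduces to the relation $R=(\kappa-1)c_v$ that follows from (\ref{list})(\emph{xi})--(\emph{xii}). The main obstacle is thus purely algebraic bookkeeping of the thermodynamic constants; once this is checked, (\ref{newboy}) yields (\ref{comgeo}), and assembling the pieces recovers the full system (\ref{commom})--(\ref{combound}).
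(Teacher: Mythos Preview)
Your proposal is correct and follows essentially the same route as the paper: define $\theta$, $\rho$, $\mathbf u^g$ and $p$ from $(\sigma,\mathbf T)$, read off (\ref{comad}) and (\ref{comcont}) from the third component of (\ref{momentum}) and from (\ref{continuity}), and then reduce everything to checking that (\ref{newboy}) gives the balance (\ref{comgeo}) via the thermodynamic identity $\tfrac{1}{\rho}\nabla p=\kappa K_1\theta\,\nabla(\sigma^{\kappa-1})$, which is exactly what the paper does. Your write-up is in fact somewhat more explicit than the paper's (you spell out $\nabla_{\mathbf x}c$ and the reduction of (\ref{commom}) to (\ref{comgeo})); the only cosmetic point is that the relation $R=(\kappa-1)c_v$ comes from items (\emph{xii})--(\emph{xiii}) of (\ref{list}) rather than (\emph{xi})--(\emph{xii}).
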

\begin{proof}
Given a 
  solution $(\mathbf u , \mathbf T, \sigma )$ of (\ref{momentum})-(\ref{initial}), we set the function $\mathbf u$ in (\ref{commom})-(\ref{combound}) to be equal to the function $\mathbf u$ in (\ref{momentum})-(\ref{initial}) and we define
\begin{equation}\label{theta rho ug}
\theta (t, \mathbf x) := T_3(t, \mathbf x), \hspace{10mm} \rho (t, \mathbf x) := \frac{\sigma (t, \mathbf x)}{\theta (t, \mathbf x)}, \hspace{10mm} \mathbf u^g(t, \mathbf x):=  \mathbf e _3 \times[\mathbf T(t, \mathbf x) - \mathbf x].
\end{equation}
Then (\ref{comad}) follows from (\ref{theta rho ug}) and equation $D_tT_3 = 0$ of (\ref{momentum}).  Using this together with (\ref{theta rho ug}) and the fact that (\ref{continuity}) is satisfied, we see that (\ref{comcont}) holds.  


In order to show that (\ref{comgeo}) holds, we first rearrange (\ref{comstate}) to obtain $p = R^\kappa p_\textrm{\scriptsize{ref}}^{1 - \kappa }(\rho \theta )^\kappa $.  Thus,
\begin{equation}\label{gradp}
\nabla p = R^\kappa p_\textrm{\scriptsize{ref}}^{1 - \kappa }\kappa (\rho \theta )^{\kappa - 1}\nabla (\rho \theta ).
\end{equation}
Then, substituting (\ref{theta rho ug}) into (\ref{newboy}) gives 
\[  \mathbf e _3 \times \mathbf u^g + \nabla \phi + \theta \kappa K_1 \nabla (\rho \theta )^{\kappa -1} = 0.\]
Recalling that $K_1 = c_v\left (\frac{R}{p_\textrm{\scriptsize{ref}}}\right )^{\kappa -1}$, this becomes
\[  \mathbf e _3 \times \mathbf u^g + \nabla \phi + \theta \kappa c_v\left (\frac{R}{p_\textrm{\scriptsize{ref}}}\right )^{\kappa -1} \nabla (\rho \theta )^{\kappa -1} = 0, \]
and therefore
\[  \mathbf e _3 \times \mathbf u^g + \nabla \phi + \frac{1}{\rho } \kappa (\kappa - 1)c_v\left (\frac{R}{p_\textrm{\scriptsize{ref}}}\right )^{\kappa -1}  (\rho \theta )^{\kappa -1}\nabla (\rho \theta ) = 0. \]
Then, using (\ref{list})(\emph{xiv}), (\ref{list})(\emph{xv}) and (\ref{gradp}) we obtain
\[  \mathbf e _3 \times \mathbf u^g + \nabla \phi + \frac{1}{\rho } \nabla p = 0 \]
and thus conclude that (\ref{comgeo}) is satisfied.

Finally, we obtain (\ref{commom}) using (\ref{comgeo}) together with the first two components of (\ref{momentum}) and the definition of $\mathbf u^g$ in (\ref{theta rho ug}).  Hence, we obtain a solution of the original system of semi-geostrophic equations.
\end{proof}

We define a weak solution of (\ref{momentum})-(\ref{initial}) as follows:
\begin{definition}\label{2.1}
Let $\Omega $ be an open bounded convex set in $\mathbb{R}^3$.  Let $\mathbf u: [0, \tau) \times \Omega \rightarrow \mathbb{R} ^3 $ satisfy $\mathbf u \in L^1([0, \tau ) \times \Omega )$ and let $\mathbf T : [0, \tau) \times \Omega \rightarrow \mathbb{R} ^3$ satisfy $\mathbf T \in L^\infty ([0, \tau ) \times \Omega )$.  Also, let $\sigma (t, \cdot ) \in W^{1, \infty }(\Omega )$ for all $t \in [0, \tau ]$.  Then $(\mathbf u , \mathbf T, \sigma )$ is a \emph{weak (Eulerian) solution} of (\ref{momentum})-(\ref{initial}) if
\[ \int_{[0,\tau )\times \Omega } \! \{ \mathbf T (t, \mathbf x )\cdot [\partial_t \varphi (t, \mathbf x ) + (\mathbf u (t, \mathbf x ) \cdot \nabla )\varphi (t, \mathbf x )] + \mathbf e_3 \times [\mathbf T(t, \mathbf x) - \mathbf x]\cdot \varphi (t, \mathbf x ) \} \sigma (t,\mathbf x)\, dtd\mathbf x \]
\begin{equation} \label{eweak1}
+ \int_\Omega \! \mathbf T_0(\mathbf x)\cdot \varphi (0, \mathbf x)\sigma _0(\mathbf x) \, d\mathbf x = 0,
\end{equation}
for any $\varphi \in C_c^1([0, \tau ) \times \Omega )$, and
\begin{equation}\label{eweak2}
\int_{[0,\tau )\times \Omega } \! \{ \partial _t \psi (t, \mathbf x) + (\mathbf u (t, \mathbf x) \cdot \nabla ) \psi (t, \mathbf x)\} \sigma (t, \mathbf x) \, dtd\mathbf x + \int_\Omega \! \psi (0, \mathbf x) \sigma _0(\mathbf x) \, d\mathbf x = 0,
\end{equation}
for any $\psi \in C_c^1([0, \tau ) \times \Omega )$.
\end{definition}

Given a solution $\mathbf T(t, \mathbf x)$ and $\mathbf x(t, \mathbf y) = \mathbf S(t, \mathbf y)$ of the system (\ref{comdual1})-(\ref{comdual6}) in dual coordinates, formally we have that $(\mathbf u, \mathbf T, \sigma )$ satisfy (\ref{momentum})-(\ref{initial}).  However, due to the low regularity of $\mathbf T(t, \mathbf x)$ and thus $\mathbf S(t, \mathbf y)$ obtained as a weak solution of (\ref{comdual1})-(\ref{comdual6}), this argument is not rigorous.  Indeed, the regularity obtained in \cite{maroofi} may not be enough to provide a solution of the Eulerian problem.


Instead, we seek to find weak Lagrangian solutions of the problem (\ref{momentum})-(\ref{initial}).  Here we define such solutions and then we use the methods of \cite{feldman} to prove their existence.  We define the Lagrangian flow map
\[ \mathbf F : [0, \tau ) \times \Omega \rightarrow \Omega \]
corresponding to the velocity $\mathbf u$.  Note that $\mathbf F$ maps $\Omega $ to itself so that the boundary conditions of the problem are respected.  Then, we can rewrite the system (\ref{momentum})-(\ref{initial}) in terms of this Lagrangian flow $\mathbf F$ and define the corresponding weak solution $(\mathbf F, \mathbf T, \sigma )$.

\begin{definition}\label{2.4}
Let $\Omega $ be an open bounded convex set in $\mathbb{R}^3$ and let $\tau > 0$.  Let $\mathbf T \in L^\infty ([0, \tau ) \times \Omega )$ and let $\sigma (t, \cdot ) \in W^{1, \infty }(\Omega )$ for all $t \in [0, \tau ]$.  For $1 \leqslant q < \infty $, let $\mathbf F: [0, \tau) \times \Omega \rightarrow \Omega $ be a Borel map, satisfying
\begin{equation} \label{Fcts}
\mathbf F \in C([0,\tau ); L^q (\Omega ; \mathbb{R} ^3)).
\end{equation}
\[ \]
Then $(\mathbf F, \mathbf T, \sigma )$ is called a \emph{weak Lagrangian solution} of (\ref{momentum})-(\ref{initial}) in $[0, \tau) \times \Omega $ if:
\begin{enumerate}[label=(\textit{\roman{*}}), ref=\textit{(\roman{*})}]
	\item For $\sigma _0-a.e.$ $\mathbf x \in \Omega $,
 \[ \mathbf F(0, \mathbf x) = \mathbf x, \hspace{10mm} \mathbf T(0, \mathbf x) = \mathbf T_0(\mathbf x). \]
\item For any $t>0$, the mapping
\[ \mathbf F_{(t)} = \mathbf F(t, \cdot ): [0, \tau ) \times \Omega \rightarrow \Omega \]
pushes forward the probability measure $\sigma _0(\cdot ) $ to $\sigma (t, \cdot )$, i.e.
\begin{equation} \label{pushforward}
 \mathbf F_{(t)} \textrm{\#} \sigma _0(\cdot )  = \sigma (t, \cdot ). 
 \end{equation}
\item There exists a Borel map
\[ \mathbf F^*: [0, \tau ) \times \Omega \rightarrow \Omega \]
such that, for every $t \in (0, \tau )$, the map 
\[ \mathbf F_{(t)}^* = \mathbf F^*(t, \cdot):\Omega \rightarrow \Omega \]
satisfies
\[ \mathbf F_{(t)}^* \circ \mathbf F_{(t)}(\mathbf x) = \mathbf x \textrm{ and } \mathbf F_{(t)} \circ \mathbf F_{(t)}^*(\mathbf x) = \mathbf x, \]
for $\sigma -a.e.$ $\mathbf x \in \Omega $, and 
\[ \mathbf F_{(t)}^* \textrm{\#} \sigma (t, \cdot )  = \sigma _0(\cdot ). \]
\[ \]
\item The function
\begin{equation} \label{Z}
\mathbf Z(t, \mathbf x) := \mathbf T(t, \mathbf F_{(t)}(\mathbf x))
\end{equation}
is a weak solution of
\begin{equation} \label{Zequation}
\begin{split}
&\partial _t \mathbf Z(t, \mathbf x) = \mathbf e_3 \times \left [\mathbf Z(t, \mathbf x) - \mathbf F_{(t)}( \mathbf x)\right ] \hspace{3mm} \textrm{ in } [0, \tau ) \times \Omega ,\\
&\mathbf Z(0, \mathbf x) = \mathbf T_0(\mathbf x) \hspace{3mm} \textrm{ in } \Omega ,
\end{split}
\end{equation}
in the following sense:\\
for any $\varphi \in C_c^1([0,\tau ) \times \Omega ; \mathbb{R} ^3)$,
\begin{equation} \label{Zweak}
\begin{split}
\int_{[0,\tau )\times \Omega } \! \{ \mathbf Z(t, \mathbf x) \cdot \partial _t &\varphi (t, \mathbf x) + \mathbf e_3 \times [\mathbf Z(t, \mathbf x) - \mathbf F_{(t)}( \mathbf x)] \cdot \varphi (t, \mathbf x)\}\sigma _0(\mathbf x) \, dtd\mathbf x \\
&+ \int_\Omega \! [\mathbf T_0(\mathbf x)\cdot \varphi (0, \mathbf x)]\sigma _0(\mathbf x) \, d\mathbf x = 0.
\end{split}
\end{equation}
\end{enumerate}
\end{definition}
\begin{remark}
Let us comment on Definition \ref{2.4}:
\begin{itemize}
  \item Continuity in time of $\mathbf F$ considered as a map on $[0, \tau )$ with values in $L^q(\Omega)$ as required in (\ref{Fcts}), combined with the initial condition for $\mathbf F$ in (\emph{i}), implies that
  \begin{equation}\label{2.39}
  \lim _{t \rightarrow 0+} \left\| \mathbf F_{(t)} - \textit{\textbf{id}} \right\| _{L^q(\Omega )} = 0.
  \end{equation}
Furthermore, the continuity property (\ref{Fcts}) may be interpreted as continuity of particle paths in physical space.  
	\item Property (\emph{ii}) is the Lagrangian form of the mass conservation principle in equation (\ref{continuity}) with boundary condition (\ref{bound}).
	\item Equation (\ref{Zequation}) is the Lagrangian form of (\ref{momentum}).
		\item Equation (\ref{Zweak}) is derived by multiplying (\ref{Zequation}) by $\sigma _0(\mathbf x)$ as well as the test function and then integrating by parts, as in \cite[Definition 3.4]{feldman}.  The reason for this choice 
		 will become clear in the proof of Proposition \ref{2.6}.
		\item We have omitted equation (\ref{newboy}) since this holds as a result of the energy minimisation; see \cite[Theorem 4.2]{maroofi}.
\end{itemize}
\end{remark}

In order to justify Definition \ref{2.4}, we must show that a weak Lagrangian solution corresponds to a weak (Eulerian) solution of (\ref{momentum})-(\ref{initial}), as defined by Definition \ref{2.1}.  Indeed, we prove that, with additional regularity property $\partial _t \mathbf F \in L^\infty ([0, \tau )\times \Omega )$, a weak Lagrangian solution $(\mathbf F, \mathbf T, \sigma )$ as defined in Definition \ref{2.4} determines a weak (Eulerian) solution of (\ref{momentum})-(\ref{initial}) and, furthermore, that a smooth Lagrangian solution determines a classical solution of (\ref{momentum})-(\ref{initial}).  This is the content of the following result: 

\begin{proposition}\label{2.6}      
Let $\Omega $ be an open bounded convex set in $\mathbb{R}^3$ and let $\tau >0$.  Let $(\mathbf F , \mathbf T, \sigma )$ be a weak Lagrangian solution of (\ref{momentum})-(\ref{initial}) in $[0, \tau ) \times \Omega $.
\begin{enumerate}[label=(\textit{\roman{*}}), ref=\textit{(\roman{*})}]
\item If $\partial _t \mathbf F \in L^\infty ([0, \tau )\times \Omega ; \mathbb{R}^3)$, then the function
\begin{equation}\label{u}
\mathbf u (t,\mathbf x ) := (\partial _t \mathbf F)(t, \mathbf F_{(t)}^*(\mathbf x ))
\end{equation}
satisfies $\mathbf u \in L^\infty ([0, \tau )\times \Omega ; \mathbb{R}^3)$ and $(\mathbf u, \mathbf T, \sigma )$ is a weak Eulerian solution of (\ref{momentum})-(\ref{initial}) in $[0, \tau ) \times \Omega $ in the sense of Definition \ref{2.1}.
\item If $(\mathbf F, \mathbf F^* ,\mathbf T) \in C^2 ([0, \tau ] \times \overline{\Omega })$, then the function (\ref{u}) satisfies $\mathbf u \in C^1 ([0, \tau ] \times \overline{\Omega }; \mathbb{R}^3)$, and $(\mathbf u , \mathbf T, \sigma )$ is a classical solution of (\ref{momentum})-(\ref{initial}) in $[0, \tau ) \times \Omega $.
\end{enumerate}
\end{proposition}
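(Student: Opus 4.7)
The plan is to prove part (\emph{ii}) first by classical calculus, and then push the same manipulations through weak formulations and push-forward identities for part (\emph{i}). Equation (\ref{newboy}) is absorbed into the definition of a weak Lagrangian solution via the energy-minimisation requirement (see the remark following Definition \ref{2.4}), so only the momentum and continuity equations need attention.

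For part (\emph{ii}) the $C^2$ assumption gives $\mathbf u \in C^1$ by the classical inverse function theorem applied fibrewise to $\mathbf F^*_{(t)} \circ \mathbf F_{(t)} = \textit{\textbf{id}}$. The chain rule applied to $\mathbf Z(t, \mathbf x) = \mathbf T(t, \mathbf F_{(t)}(\mathbf x))$, combined with $\partial_t \mathbf F(t, \mathbf x) = \mathbf u(t, \mathbf F_{(t)}(\mathbf x))$, transforms (\ref{Zequation}) into $D_t \mathbf T = \mathbf e_3 \times (\mathbf T - \mathbf x)$ after precomposition with $\mathbf F^*_{(t)}$, yielding (\ref{momentum}). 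The continuity equation (\ref{continuity}) arises by differentiating the push-forward identity $\int \psi\, \sigma(t, \cdot)\, d\mathbf y = \int \psi \circ \mathbf F_{(t)}\, \sigma_0\, d\mathbf x$ in $t$ for smooth $\psi$ and integrating by parts, and the boundary condition (\ref{bound}) follows from $\mathbf F_{(t)}(\Omega) \subset \Omega$.

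For part (\emph{i}) I first extract from (\ref{Zweak}), by varying $\varphi$ and using that $\sigma_0$ is $t$-independent, the pointwise ODE $\partial_t \mathbf Z(t, \mathbf x) = \mathbf e_3 \times [\mathbf Z(t, \mathbf x) - \mathbf F_{(t)}(\mathbf x)]$ for $\sigma_0$-a.e.\ $\mathbf x$ and a.e.\ $t$; this makes $\mathbf Z(\cdot, \mathbf x)$ Lipschitz in $t$. The $L^\infty$ bound on $\mathbf u$ follows from $\mathbf F^*_{(t)}\textrm{\#}\sigma(t, \cdot) = \sigma_0$ applied to the Lebesgue-null set $N_t = \{|\partial_t \mathbf F(t, \cdot)| > \|\partial_t \mathbf F\|_{L^\infty}\}$: since $\sigma_0$ is absolutely continuous, $\sigma_0(N_t) = 0$, hence $\sigma(t, \cdot)$ assigns zero mass to $(\mathbf F^*_{(t)})^{-1}(N_t)$, so that $|\mathbf u(t, \cdot)| \leqslant \|\partial_t \mathbf F\|_{L^\infty}$ holds $\sigma(t, \cdot)$-a.e.---which is what the weak formulation (\ref{eweak1})--(\ref{eweak2}) actually requires. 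To establish (\ref{eweak1}), given $\varphi \in C_c^1$ I set $\tilde\varphi(t, \mathbf x) := \varphi(t, \mathbf F_{(t)}(\mathbf x))$ and, using (\ref{appendix}) together with $\mathbf u \circ \mathbf F_{(t)} = \partial_t \mathbf F$, rewrite the Eulerian integral as $\int \{\mathbf Z \cdot \partial_t \tilde\varphi + \mathbf e_3 \times (\mathbf Z - \mathbf F_{(t)}) \cdot \tilde\varphi\}\sigma_0\, dt\, d\mathbf x$; the pointwise ODE collapses this to $\int \partial_t(\mathbf Z \cdot \tilde\varphi)\sigma_0\, dt\, d\mathbf x$, which integrates to $-\int \mathbf T_0 \cdot \varphi(0, \cdot)\,\sigma_0\, d\mathbf x$ via $\mathbf F_{(0)} = \textit{\textbf{id}}$, $\mathbf Z(0, \cdot) = \mathbf T_0$, and compact support of $\varphi$. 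Equation (\ref{eweak2}) follows by an identical argument with $\tilde\psi = \psi \circ \mathbf F_{(t)}$ (no ODE needed, since $\sigma_0$ is time-independent).

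The principal technical hurdle is justifying the chain rule $\partial_t \tilde\varphi = (\partial_t \varphi) \circ \mathbf F_{(t)} + (\nabla \varphi \circ \mathbf F_{(t)}) \cdot \partial_t \mathbf F$ when $\mathbf F$ is only known to have $\partial_t \mathbf F \in L^\infty$ and $\mathbf F \in C([0, \tau); L^q)$. The plan is to select a representative of $\mathbf F$ that is Lipschitz in $t$ for a.e.\ fixed $\mathbf x$ (which is consistent with $\partial_t \mathbf F \in L^\infty$ and the $L^q$-continuity in $t$), so that composition with the smooth $\varphi(t, \cdot)$ is Lipschitz in $t$ and the chain rule holds for a.e.\ $t$; Fubini and the uniform bound then legitimise integration against $\sigma_0\, dt\, d\mathbf x$. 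Should extra smoothness be needed at any intermediate step, a time mollification of $\mathbf F$ followed by a dominated-convergence passage to the limit (driven by the uniform $L^\infty$ bound on $\partial_t \mathbf F$) will close the argument.
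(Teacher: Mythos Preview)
Your proposal is correct and follows the same overall strategy as the paper: push forward the Eulerian integrals via $\mathbf F_{(t)}\#\sigma_0=\sigma(t,\cdot)$, use the identity $\mathbf u\circ\mathbf F_{(t)}=\partial_t\mathbf F$, and invoke the chain rule for $\partial_t[\varphi(t,\mathbf F_{(t)}(\mathbf x))]$. The one organisational difference is in how (\ref{eweak1}) is closed. The paper keeps (\ref{Zweak}) in its weak form, changes variables to obtain an identity (their (\ref{2.41})) valid for $C_c^1$ test functions, and then uses a mollification argument to extend that identity to the rougher class $\varphi\in L^\infty$, $\partial_t\varphi\in L^\infty$ with compact support in $t$, into which $\eta(t,\mathbf F_{(t)}(\mathbf x))$ falls. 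You instead first upgrade (\ref{Zweak}) to the pointwise ODE $\partial_t\mathbf Z=\mathbf e_3\times[\mathbf Z-\mathbf F]$ for $\sigma_0$-a.e.\ $\mathbf x$ (legitimate, since $\sigma_0\,\mathbf e_3\times[\mathbf Z-\mathbf F]\in L^\infty$ forces $\sigma_0\mathbf Z$ to be Lipschitz in $t$ for a.e.\ $\mathbf x$, and one divides by $\sigma_0$ on its support), and then integrate $\partial_t(\mathbf Z\cdot\tilde\varphi)$ directly. Both routes lean on exactly the same chain-rule step you flag as the main hurdle; the paper also proves (\emph{i}) first and obtains (\emph{ii}) as a corollary, rather than the reverse.
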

\begin{proof}
Let us first prove (\emph{i}).  Since $\mathbf F^*$ is a Borel map and, by our additional regularity assumption, $\partial _t\mathbf F \in L^\infty ([0, \tau ) \times \Omega )$, we know that the right-hand side of (\ref{u}) is a bounded measurable function.  Therefore, $\mathbf u \in L^\infty ([0, \tau ) \times \Omega )$.  Now, in order to prove that $(\mathbf u, \mathbf T, \sigma )$ is a weak Eulerian solution of (\ref{momentum})-(\ref{initial}), we must show that (\ref{eweak1}) and (\ref{eweak2}) hold.  We begin with (\ref{eweak2}).  Let $\psi \in C_c^1([0, \tau ) \times \overline{\Omega})$, so that the support of $\psi $ in $t$ is a closed subset of $[0, \tau )$, and fix $t \in (0, \tau )$.  Note that, since $\sigma (t, \cdot ) \in W^{1, \infty }(\Omega )$ for all $t \in [0, \tau ]$ (by Theorem \ref{5.5} (\emph{ii})) we have that $\sigma (t, \cdot ) \in L^1 ([0, \tau ) \times \Omega )$ and, since $\mathbf F_{(t)} \textrm{\#} \sigma _0(\cdot )  = \sigma (t, \cdot )$, we can apply (\ref{appendix}) to yield
\[ \int_\Omega \! (\partial _t\psi )(t, \mathbf F_{(t)}(\mathbf x))\sigma _0(\mathbf x) \, d\mathbf x = \int_\Omega \! \partial _t\psi (t, \mathbf X)\sigma (t, \mathbf X) \, d\mathbf X.\]
Then, applying the chain rule, integrating both sides with respect to $t$ and using our assumption that $\partial _t\mathbf F \in L^\infty ([0, \tau ) \times \Omega )$, we obtain
\begin{align*}
 \int_{[0, \tau ) \times \Omega } \! \{ \partial _t[\psi (t, \mathbf F&_{(t)}(\mathbf x))] - \partial _t\mathbf F_{(t)}(\mathbf x)\cdot (\nabla \psi )(t, \mathbf F_{(t)}(\mathbf x))\} \sigma _0(\mathbf x) \, dtd\mathbf x \\
 &= \int_{[0, \tau ) \times \Omega } \! \partial _t\psi (t, \mathbf X)\sigma (t, \mathbf X) \, dtd\mathbf X.
 \end{align*}
Now, using (\ref{2.39}) and the fact that $\psi (\tau , \cdot ) \equiv 0$ (by compact support), we get
\[ \int_{[0, \tau ) \times \Omega } \! \partial _t[\psi (t, \mathbf F_{(t)}(\mathbf x))]\sigma _0(\mathbf x) \, dtd\mathbf x = -\int_\Omega  \! \psi (0, \mathbf x)\sigma _0(\mathbf x) \, d\mathbf x. \]
Hence,
\[ -\int_\Omega  \! \psi (0, \mathbf x)\sigma _0(\mathbf x) \, d\mathbf x - \int_{[0, \tau ) \times \Omega } \! \partial _t\mathbf F_{(t)}(\mathbf x)\cdot (\nabla \psi )(t, \mathbf F_{(t)}(\mathbf x)) \sigma _0(\mathbf x) \, dtd\mathbf x \]
\[ = \int_{[0, \tau ) \times \Omega } \! \partial _t\psi (t, \mathbf X)\sigma (t, \mathbf X) \, dtd\mathbf X. \]
If we make the change of variables $\mathbf X = \mathbf F_{(t)}(\mathbf x)$ in the second integral above, then, by (\emph{ii}), (\emph{iii}) in Definition \ref{2.4} and by (\ref{appendix}), we have that
\[ -\int_\Omega  \! \psi (0, \mathbf x)\sigma _0(\mathbf x) \, d\mathbf x - \int_{[0, \tau ) \times \Omega } \! (\partial _t\mathbf F_{(t)})(\mathbf F_{(t)}^*(\mathbf X))\cdot \nabla \psi (t, \mathbf X) \sigma (t, \mathbf X) \, dtd\mathbf X \]
\[ = \int_{[0, \tau ) \times \Omega } \! \partial _t\psi (t, \mathbf X)\sigma (t, \mathbf X) \, dtd\mathbf X. \]
Then, rearranging and using the definition of $\mathbf u$ in (\ref{u}), we obtain
\[ \int_{[0, \tau ) \times \Omega } \! \{ \partial _t\psi (t, \mathbf X) + \mathbf u(t, \mathbf X)\cdot \nabla \psi (t, \mathbf X)\} \sigma (t, \mathbf X) \, dtd\mathbf X + \int_\Omega  \! \psi (0, \mathbf x)\sigma _0(\mathbf x) \, d\mathbf x = 0.\]
Changing notations $\mathbf X$ to $\mathbf x$ gives us (\ref{eweak2}).

We now prove that (\ref{eweak1}) also holds.  By the properties of $\mathbf F$ and $\mathbf T$ in Definition \ref{2.4}, we have that $\mathbf Z(t, \mathbf x)$ as defined in (\ref{Z}) satisfies $\mathbf Z \in L^\infty ([0, \tau ] \times \overline{\Omega })$.  Also, applying the definition of $\mathbf Z(t, \mathbf x)$ in (\ref{Z}) to equation (\ref{Zweak}) gives
\begin{equation} \label{Zweak2}
\begin{split}
\int_{[0,\tau )\times \Omega } \! \{\mathbf T(t, \mathbf F_{(t)}(\mathbf x)) \cdot \partial _t &\varphi (t, \mathbf x) + \mathbf e_3 \times [\mathbf T(t, \mathbf F_{(t)}(\mathbf x)) - \mathbf F_{(t)}(\mathbf x)] \cdot \varphi (t, \mathbf x)\} \sigma _0(\mathbf x) \, dtd\mathbf x \\
&+ \int_\Omega \! [\mathbf T_0(\mathbf x)\cdot \varphi (0, \mathbf x)]\sigma _0(\mathbf x) \, d\mathbf x = 0,
\end{split}
\end{equation}
for any $\varphi \in C_c^1([0,\tau ) \times \Omega )$.
Now, since $\Omega $ is a bounded set and $\mathbf F_{(t)} \textrm{\#} \sigma _0(\cdot )  = \sigma (t, \cdot )$ for all $t \in [0, \tau )$, equation (\ref{appendix}) allows us to make the change of variables $\mathbf X = \mathbf F_{(t)}(\mathbf x)$ in the first integral of (\ref{Zweak2}).  Thus, by (\emph{iii}) of Definition \ref{2.4}, we have that $\mathbf x = \mathbf F_{(t)}^*(\mathbf X)$ for $\sigma -a.e.$ $\mathbf x \in \Omega$ for every $t \in [0, \tau )$ and then, from (\ref{Zweak2}), we obtain
\begin{equation} \label{2.41}
\begin{split}
\int_{[0,\tau )\times \Omega } \! \{ \mathbf T(t, \mathbf X) \cdot \partial _t &\varphi (t, \mathbf F_{(t)}^*(\mathbf X)) + \mathbf e_3 \times [\mathbf T(t, \mathbf X) - \mathbf X] \cdot \varphi (t, \mathbf F_{(t)}^*(\mathbf X))\} \sigma (t, \mathbf X) \, dtd\mathbf X \\
&+ \int_\Omega \! [\mathbf T_0(\mathbf x)\cdot \varphi (0, \mathbf x)]\sigma _0(\mathbf x) \, d\mathbf x = 0,
\end{split}
\end{equation}
for any $\varphi \in C_c^1([0,\tau ) \times \Omega )$.
We now show that (\ref{2.41}) also holds for all $\varphi $ such that
\begin{align}
\nonumber &\varphi \in L^\infty ([0,\tau ) \times \Omega ),\\
\label{phicon}\partial &_t\varphi \in L^\infty ([0,\tau ) \times \Omega ),\\
\nonumber supp(\varphi ) \subset &[0, \tau - \epsilon ] \times \overline{\Omega }\quad \textrm{for some} \quad \epsilon > 0
\end{align}
In order to do this, we construct an approximating sequence for such $\varphi $.  Let us extend $\varphi $ to $(-\infty , \infty ) \times \Omega $ by defining, for $\mathbf x \in \Omega $, $\varphi (t, \mathbf x) = \varphi (-t, \mathbf x)$ for $t < 0$ and $\varphi (t, \cdot ) \equiv 0$ for $t \geqslant \tau $.  Then, we let $h > 0$ and define $\Omega _h = \{ \mathbf x \in \Omega : dist(\mathbf x, \partial \Omega ) > h\} $, where $\partial \Omega $ denotes the boundary of $\Omega $.  Thus, $\varphi \chi _{\Omega _h}$ is now defined on $\mathbb{R} ^1 \times \mathbb{R} ^3$, where $\chi _{\Omega _h}$ denotes the characteristic function of the set $\Omega _h$.  Next, let $j _h(t, \mathbf x) = \frac{1}{h^4}j (\frac{|(t, \mathbf x)|}{h})$, where $j (\cdot )$ is a standard mollifier, and let $k > \frac{1}{\epsilon }$ be an integer.  We then have that functions $\varphi _k = (\varphi \chi _{\Omega _{4h}}) * j _h$, with $h = \frac{1}{k} < \epsilon $, satisfy 
\[ \varphi _k \in C_c^1([0, \tau ) \times \Omega )\quad \textrm{with} \quad \left\| \varphi _k, \partial _t\varphi _k\right\|_{L^{\infty ([0, \tau ) \times \Omega )}} \leqslant C,\]
where $C$ does not depend on $k$, and
\[(\varphi _k, \partial _t\varphi _k) \rightarrow (\varphi , \partial _t\varphi ) \quad a.e. \textrm{ on }[0, \tau ) \times \Omega \textrm{ as }k \rightarrow \infty .\]
Thus, by the dominated convergence theorem,
\[ \lim _{k \rightarrow \infty }\Intxt{(\varphi _k, \partial _t\varphi _k)(t, \mathbf x)\sigma _0(\mathbf x)} = \Intxt{(\varphi , \partial _t\varphi )(t, \mathbf x)\sigma _0(\mathbf x)}.\]
Also, since $\varphi \in L^\infty ([0,\tau ) \times \Omega )$ by (\ref{phicon}) and $\mathbf F_{(t)}^* \textrm{\#} \sigma (t, \cdot )  = \sigma _0(\cdot )$ , we have from (\ref{appendix}) that
\[ \Intxt{(\varphi _k, \partial _t\varphi _k)(t, \mathbf x)\sigma _0(\mathbf x)} = \IntXt{(\varphi _k, \partial _t\varphi _k)(t, \mathbf F_{(t)}^*(\mathbf X))\sigma (t, \mathbf X)}\]
and
\[ \Intxt{(\varphi , \partial _t\varphi )(t, \mathbf x)\sigma _0(\mathbf x)} = \IntXt{(\varphi , \partial _t\varphi )(t, \mathbf F_{(t)}^*(\mathbf X))\sigma (t, \mathbf X)}.\]
Hence,
\[ \lim _{k \rightarrow \infty }\IntXt{(\varphi _k, \partial _t\varphi _k)(t, \mathbf F_{(t)}^*(\mathbf X))\sigma (t, \mathbf X)} = \IntXt{(\varphi , \partial _t\varphi )(t, \mathbf F_{(t)}^*(\mathbf X))\sigma (t, \mathbf X)}.\]
Then, since $\Omega $ is bounded, $\mathbf T \in L^\infty ([0, \tau ) \times \Omega )$ and (\ref{2.41}) holds for each $\varphi _k$, it follows that (\ref{2.41}) holds for $\varphi $ satisfying (\ref{phicon}).
Now, let
\begin{equation}\label{phidef}
\varphi (t, \mathbf x) = \eta (t, \mathbf F_{(t)}(\mathbf x)),
\end{equation}
where $\eta \in C_c^1([0, \tau ) \times \Omega )$.  Then, since $supp(\eta ) \subset [0, \tau - \epsilon ]\times \overline{\Omega }$ and $\partial _t \mathbf F \in L^\infty ([0, \tau ) \times \Omega )$ by our additional regularity property, we have that $\varphi $ satisfies the conditions in (\ref{phicon}).  Therefore, (\ref{2.41}) holds for $\varphi $ as defined in (\ref{phidef}).  We also see from (\ref{phidef}) and use of the chain rule that
\[ \partial _t\varphi (t, \mathbf x) = (\partial _t\eta )(t, \mathbf F_{(t)}(\mathbf x)) + \partial _t\mathbf F_{(t)}(\mathbf x) \cdot (\nabla \eta )(t, \mathbf F_{(t)}(\mathbf x)).\]
Thus, by property (\emph{iii}) of Definition \ref{2.4}, we have
\[ \varphi (t, \mathbf F_{(t)}^*(\mathbf X)) = \eta (t, \mathbf X)\]
and
\[ \partial _t\varphi (t, \mathbf F_{(t)}^*(\mathbf X)) = \partial _t\eta (t, \mathbf X) + [(\partial _t\mathbf F)(t, \mathbf F_{(t)}^*(\mathbf X))]\cdot \nabla \eta (t, \mathbf X)\]
for $\sigma -a.e.$ $\mathbf X \in \Omega$ for every $t \in [0, \tau )$.  Then, inserting $\varphi (t, \mathbf x) = \eta (t, \mathbf F_{(t)}(\mathbf x))$ into (\ref{2.41}) gives

\[ \IntXt{\{ \mathbf T(t, \mathbf X)\cdot [\partial _t\eta (t, \mathbf X) + [(\partial _t\mathbf F)(t, \mathbf F_{(t)}^*(\mathbf X))]\cdot \nabla \eta (t, \mathbf X)] + \mathbf e_3 \times [\mathbf T(t, \mathbf X) - \mathbf X] \cdot \eta (t, \mathbf X)\} \sigma (t, \mathbf X)}\]
\[+ \Intx{\mathbf T_0(\mathbf x)\cdot \eta (0, \mathbf F_{(0)}(\mathbf x))\sigma _0(\mathbf x)} = 0,\]
and, using the definition of $\mathbf u$ in (\ref{u}) together with property (\emph{i}) of Definition \ref{2.4}, we obtain

\[ \IntXt{\{ \mathbf T(t, \mathbf X)\cdot [\partial _t\eta (t, \mathbf X) + (\mathbf u(t, \mathbf X)\cdot \nabla )\eta (t, \mathbf X)] + \mathbf e_3 \times [\mathbf T(t, \mathbf X) - \mathbf X] \cdot \eta (t, \mathbf X)\} \sigma (t, \mathbf X)}\]
\[ + \Intx{\mathbf T_0(\mathbf x)\cdot \eta (0, \mathbf x)\sigma _0(\mathbf x)} = 0.\]
Finally, changing notations $\mathbf X$ to $\mathbf x$ and $\eta $ to $\varphi $ gives (\ref{eweak1}), proving statement (\emph{i}).  Then, statement (\emph{ii}) follows directly from (\emph{i}).
\end{proof}

We now state the main result, which we will prove in Sections \ref{dual section} and \ref{physical section}: 
\vspace{4mm}
\begin{theorem}\label{2.2}
Let $\Omega \subset \mathbb{R}^3$ be an open bounded convex set.  Assume that 
\[ \mathbf T_0 \textrm{\#} \sigma _0 \in L^r(\Lambda _0)\]
for $r \in (1, \infty )$, where $\Lambda _0 \subset \mathbb{R}^3$ is compact.  Then, for any $\tau > 0$, there exists a weak Lagrangian solution $(\mathbf F, \mathbf T, \sigma )$ of (\ref{momentum})-(\ref{initial}) in $[0, \tau ) \times \Omega $, where $\mathbf T \in L^\infty ([0, \tau ) \times \Omega )$ and (\ref{Fcts}) is satisfied for any $q \in [1, \infty )$.  Moreover, the function $\mathbf Z(t, \mathbf x)$ defined by (\ref{Z}) satisfies $\mathbf Z(\cdot , \mathbf x) \in W^{1, \infty }([0, \tau ))$ for $\sigma _0-a.e.$ $\mathbf x \in \Omega $, and (\ref{Zequation}) is satisfied, in addition to the weak form (\ref{Zweak}), in the following sense:
\begin{equation}\label{2.43}
\begin{split}
&\partial _t \mathbf Z(t, \mathbf x) = \mathbf e_3 \times \left [\mathbf Z(t, \mathbf x) - \mathbf F(t, \mathbf x)\right ] \hspace{3mm} \textrm{ for $\sigma _0-a.e.$ } \mathbf x \in \Omega \textrm{ and every $t \in [0, \tau )$},\\
&\mathbf Z(0, \mathbf x) = \mathbf T_0(\mathbf x) \hspace{3mm} \textrm{ for $\sigma _0-a.e.$ }\mathbf x \in \Omega .
\end{split}
\end{equation}
\end{theorem}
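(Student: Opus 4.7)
The plan is to construct the flow map $\mathbf F$ in physical space by first building a Lagrangian flow $\boldsymbol\Phi$ in dual space that is transported by the velocity field $\mathbf w$ of Proposition \ref{mainalt}, and then mapping it back via the optimal transport maps $\mathbf S(t,\cdot)=\mathbf T(t,\cdot)^{-1}$. Concretely, given the dual-space Lagrangian flow $\boldsymbol\Phi(t,\mathbf y)$ built in Section \ref{dual section} (satisfying $\partial_t\boldsymbol\Phi=\mathbf w(t,\boldsymbol\Phi)$ and $\boldsymbol\Phi(0,\mathbf y)=\mathbf y$), and using $\mathbf T_0\#\sigma_0=\nu_0$, I would set
\[
\mathbf F(t,\mathbf x) := \mathbf S\bigl(t,\boldsymbol\Phi(t,\mathbf T_0(\mathbf x))\bigr),\qquad
\mathbf F^*(t,\mathbf x) := \mathbf S\bigl(0,\boldsymbol\Phi^{-1}(t,\mathbf T(t,\mathbf x))\bigr).
\]
The initial condition in Definition \ref{2.4}(\emph{i}) is then immediate since $\mathbf S(0,\cdot)=\mathbf T_0^{-1}$. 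The push-forward identity in (\emph{ii}) follows by composing the three push-forwards $\mathbf T_0\#\sigma_0=\nu_0$, $\boldsymbol\Phi_{(t)}\#\nu_0=\nu(t,\cdot)$ (from the continuity equation \eqref{comdual1} satisfied by the flow), and $\mathbf S(t,\cdot)\#\nu(t,\cdot)=\sigma(t,\cdot)$ from Theorem \ref{3.3}. The invertibility in (\emph{iii}) will follow from the a.e.\ inversion property of $\mathbf T$ and $\mathbf S$ combined with the flow property $\boldsymbol\Phi^{-1}(t,\cdot)\circ\boldsymbol\Phi(t,\cdot)=\mathit{id}$ which is part of the Lagrangian flow construction in dual space.

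The key observation that drives the verification of (\emph{iv}) is the collapse
\[
\mathbf Z(t,\mathbf x)=\mathbf T(t,\mathbf F_{(t)}(\mathbf x))=\mathbf T(t,\mathbf S(t,\boldsymbol\Phi(t,\mathbf T_0(\mathbf x))))=\boldsymbol\Phi(t,\mathbf T_0(\mathbf x)),
\]
so that $\mathbf Z(\cdot,\mathbf x)\in W^{1,\infty}([0,\tau))$ inherits regularity directly from the dual-space characteristic $\boldsymbol\Phi(\cdot,\mathbf T_0(\mathbf x))$, and
\[
\partial_t\mathbf Z(t,\mathbf x)=\mathbf w\bigl(t,\boldsymbol\Phi(t,\mathbf T_0(\mathbf x))\bigr)
=\mathbf e_3\times[\mathbf Z(t,\mathbf x)-\mathbf S(t,\mathbf Z(t,\mathbf x))]
=\mathbf e_3\times[\mathbf Z(t,\mathbf x)-\mathbf F_{(t)}(\mathbf x)]
\]
using the definition of $\mathbf w$ in \eqref{comdual2}. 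This yields the pointwise equation \eqref{2.43} for $\sigma_0$-a.e.\ $\mathbf x$, from which the weak form \eqref{Zweak} is obtained by multiplying by $\sigma_0(\mathbf x)\varphi(t,\mathbf x)$, integrating by parts in $t$, and using the initial condition on $\mathbf Z$.

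Finally, for the continuity assertion $\mathbf F\in C([0,\tau);L^q(\Omega))$ for every $q\in[1,\infty)$, I would combine the continuity of $t\mapsto\boldsymbol\Phi(t,\cdot)$ in an appropriate $L^q(\nu_0)$ sense with a stability result for optimal transport maps (Lemma \ref{amb3.3} or its $L^q$ variant) applied to $t\mapsto\mathbf S(t,\cdot)$; boundedness of $\Omega$ lets us pass from $L^1$-type estimates to every $L^q$ by dominated convergence. The main obstacle, and where most of the work will concentrate, is precisely this last step: because $\mathbf S(t,\cdot)$ has only the regularity of an optimal transport map with a time-dependent target $\nu(t,\cdot)\in L^r$, the composition $\mathbf S(t,\boldsymbol\Phi(t,\mathbf T_0(\mathbf x)))$ is not obviously continuous in $t$ pointwise; one must exploit the $L^r$-bound on $\nu$ from Theorem \ref{5.5}(\emph{i}), the $L^\infty$-bound on $\mathbf w$ from Theorem \ref{5.5}(\emph{iii}), and the narrow continuity of $t\mapsto\nu(t,\cdot)$ to pass stability of transport maps through the flow, mirroring the argument of \cite[Section 5]{feldman} but with the compressible cost $c$ of \eqref{cost} in place of the quadratic cost.
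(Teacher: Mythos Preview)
Your proposal is correct and follows essentially the same route as the paper: the definition $\mathbf F_{(t)}=\mathbf S_{(t)}\circ\Phi_{(t)}\circ\mathbf T_0$ and $\mathbf F^*_{(t)}=\mathbf S_{(0)}\circ\Phi^*_{(t)}\circ\mathbf T_{(t)}$, the collapse $\mathbf Z_{(t)}=\Phi_{(t)}\circ\mathbf T_0$, and the derivation of \eqref{2.43} from the dual-space ODE for $\Phi$ all match the paper's Lemmas \ref{2.12}, \ref{2.16} and Propositions \ref{2.13}--\ref{2.18}. Your identification of the $L^q$-continuity of $\mathbf F$ as the delicate step is accurate; the paper handles it (Proposition \ref{2.15}) by the same splitting you suggest, using H\"older against the uniform $L^r$ bound on $\nu$ for the $\mathbf S_{(t)}-\mathbf S_{(t_0)}$ term and dominated convergence via a.e.\ continuity of $\mathbf S_{(t_0)}$ for the $\Phi_{(t)}-\Phi_{(t_0)}$ term.
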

To prove this theorem, we show the existence of a Lagrangian flow map $\Phi $ in dual variables.  We then define $\mathbf F$ in terms of $\Phi $.

Throughout the next two sections, we will use $\sigma $ and $\nu $ to denote the measures $\sigma (t, \cdot )$ and $\nu (t, \cdot )$ considered at some fixed time $t \in [0, \tau )$.  Thus, if we write $\nu -a.e.$ $\mathbf y \in \mathbb{R}^3 $, for example, then we mean $\nu (t, \cdot )-a.e.$ $\mathbf y \in \mathbb{R}^3$ for a fixed $t \in [0, \tau )$.  

\begin{remark}
In a recent development, Ambrosio {\em et al} in   \cite{ambnew} and \cite{ambnew2} have proven  the existence of weak Eulerian solutions of the incompressible semi-geostrophic equations on the 2-dimensional torus and in a convex 3-dimensional domain.  \end{remark}

\section{Existence of the Lagrangian flow map in dual space}\label{dual section}
\setcounter{equation}{0}
In this section we  mimic the the approximation technique of \cite{feldman}, which  makes use of  the results of \cite{ambrosio}, to show existence of the Lagrangian flow map in dual space. The proof  is based on the sequence of approximating equations defined in \cite{maroofi} for the dual space solution of  the compressible case. 

Let $\Omega $, $\Lambda $ be as in Theorem \ref{5.5}, let $\mathbf T_0$, $\sigma _0$ be as in (\ref{initial}) and let $\nu _0 = \mathbf T_0 \textrm{\#}\sigma _0$.  Then, by Theorem \ref{5.5} there exists a solution $(\mathbf T, \mathbf S, \sigma )$ of the system (\ref{comdual1})-(\ref{comdual6}), with initial data $\nu _0$, satisfying all assertions of Theorem \ref{5.5}.
From \cite[Lemma 3.1]{maroofi} and Theorem \ref{3.3} we have that, for a.e. $\mathbf y \in \Lambda $,
\begin{equation}\label{gradg=gradc}
\nabla g_0(\mathbf y) = \nabla c(\mathbf S(\mathbf y), \mathbf y),
\end{equation}
where $g_0$ is a solution of the Kantorovich dual problem (\ref{26}).  Combining this with the definition of $\mathbf w$ in (\ref{comdual2}), we see that the dual space velocity can be written as
\begin{equation}\label{wg}
\mathbf w =  y_3\mathbf e _3 \times \nabla g_0(\mathbf y).
\end{equation}
Thus, the vector field $\mathbf w$ is divergence free.  In addition, since $c(\mathbf x, \mathbf y) \in C^2(\overline{\Omega }\times \overline{\Lambda})$, there exists some constant $\lambda $ such that
\begin{equation}\label{concavityconstant}
D^2_{\mathbf y\mathbf y}c(\mathbf x, \mathbf y) \leqslant \lambda I
\end{equation}
for all $(\mathbf x, \mathbf y) \in \overline{\Omega }\times \overline{\Lambda}$.
  Furthermore, since the potential temperature $y_3$ is assumed to be bounded, we can use (\ref{wg}) to obtain the following properties of $\mathbf w$:
\begin{itemize}
	\item by the semi-concavity of $g_0$ in $\Lambda $, it follows that $\mathbf w(t, \cdot ) \in BV_{loc}(\mathbb{R}^3)$ for $a.e.$ $t \in (0, \tau )$; 
	\item 
	by Theorem \ref{5.5} we have that $\mathbf w \in L^\infty ([0, \tau )\times \Lambda )$,
	\item by (\ref{gradg=gradc}), (\ref{wg}) and (\ref{concavityconstant}) we have that $|D\mathbf w(t, \cdot )|(\Lambda ) \in L^1_{loc}(0, \tau )$.
\end{itemize}
Since $\nu $ has compact support in $[0, \tau ] \times \mathbb{R}^3$, we can modify $\mathbf w$ away from $\Lambda $ so that the modified function $\widetilde{\mathbf w}$ satisfies

\begin{equation}\label{modbounds}                                                       
\widetilde{\mathbf w} \in L^\infty ([0, \tau ) \times \mathbb{R}^3), \qquad \qquad \widetilde{\mathbf w}(t, \cdot ) \in BV_{loc}(\mathbb{R}^3) \textrm{ for $a.e.$ } t\in (0, \tau ), 
\end{equation}
\begin{equation}\label{moddivfree}
|D\widetilde{\mathbf w}(t, \cdot )|(\mathbb{R}^3) \in L^1_{loc}(0, \tau ), \qquad \qquad \nabla \cdot \widetilde{\mathbf w} (t, \cdot ) = 0 \textrm{ in } \mathbb{R}^3 \textrm{ for every } t \in [0, \tau )
\end{equation}
and
\begin{equation}\label{w=modw}
\mathbf w =\widetilde{\mathbf w} \quad \textrm{ in } \Lambda.
\end{equation}
We construct such a modification as follows.  Following \cite[Section 5]{maroofi}, define $\Lambda := B(0, R) \times (\delta , \frac{1}{\delta})$, where $0<\delta <1$ and $B(0, R)$ represents the open ball of radius $R$ centered at the origin in $\mathbb R^2$.  Defining $\delta $, $R$ as in \cite[(55), (56)]{maroofi} ensures that $supp(\nu )$ is contained in $\Lambda $.  Define $\zeta \in C^\infty (\mathbb{R})$ as $\zeta = 1$ on $\{ |s| < R \} $, $\zeta = 0$ on $\{ |s| > R \} $ and $0 \leqslant \zeta \leqslant 1$ on $\mathbb R$.  Define $\xi \in C^\infty (\mathbb R)$ as $\xi = 1$ on $\{ \delta < s < \frac{1}{\delta} \}$, $\xi = 0$ when $s \leqslant \frac{\delta }{2}$ or when $s\geqslant \frac{2}{\delta}$ and $0\leqslant \xi \leqslant 1$ on $\mathbb R$.

Then, define for $\mathbf y \in \mathbb R^3$
\begin{equation}\label{newmod}
\mathbf M(\mathbf y) = (M_1(\mathbf y), M_2(\mathbf y), M_3(\mathbf y))= (\zeta (|y_1|)y_1, \zeta (|y_2|)y_2, \xi (y_3)y_3)
\end{equation}
and define the modified velocity as
\begin{equation}\label{newmodw}
\widetilde{\mathbf w} = \mathbf e _3 \times (\mathbf M(\mathbf y) - \mathbf S(t, \mathbf M(\mathbf y))) = \xi (y_3)y_3\mathbf e _3 \times \nabla g_0(\mathbf M(\mathbf y)).
\end{equation}
Then $\widetilde{\mathbf w}$ satisfies (\ref{modbounds})-(\ref{w=modw}).  These conditions enable us to apply the theory of \cite{ambrosio} to the transport equation (\ref{comdual1}) with $\mathbf w$ replaced by our modified velocity $\widetilde{\mathbf w}$:

\begin{lemma}\label{2.8}
There exists a unique locally bounded Borel measurable map $\Phi : [0, \tau ) \times \mathbb{R}^3 \rightarrow \mathbb{R}^3$ satisfying
\begin{enumerate}[label=(\textit{\roman{*}}), ref=\textit{(\roman{*})}]
	\item $\Phi (\cdot , \mathbf y) \in W^{1, \infty }([0, \tau ))$ for $\nu _0-a.e.$ $\mathbf y \in \mathbb{R}^3$;
\item $\Phi (0, \mathbf y) = \mathbf y$ for $\nu _0-a.e.$ $\mathbf y \in \mathbb{R}^3$;
\item for $\nu _0-a.e.$ $\mathbf y \in \mathbb{R}^3$,
\begin{equation}\label{phiwt}
\partial _t \Phi (t ,\mathbf y) = \widetilde{\mathbf w}(t, \Phi (t ,\mathbf y));
\end{equation}
\item there exists a Borel map $\Phi ^* :  [0, \tau ) \times \mathbb{R}^3 \rightarrow \mathbb{R}^3$ such that, for every $t \in (0, \tau )$, the map $\Phi _{(t)}^* :  \mathbb{R}^3 \rightarrow \mathbb{R}^3$ is Lebesgue-measure preserving, and such that $\Phi _{(t)}^* \circ \Phi _{(t)} (\mathbf y) = \Phi _{(t)} \circ \Phi _{(t)}^*(\mathbf y) = \mathbf y$ for $\nu -a.e.$ $\mathbf y \in \mathbb{R}^3$;
\item $\Phi (t, \cdot ): \mathbb{R}^3 \rightarrow \mathbb{R}^3$ is a Lebesgue-measure preserving map for every $t \in [0, \tau )$.
\end{enumerate}
\end{lemma}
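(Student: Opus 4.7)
The plan is to apply Ambrosio's existence and uniqueness theory for regular Lagrangian flows of $BV$ vector fields with bounded distributional divergence (cited as \cite{ambrosio}) to the modified velocity $\widetilde{\mathbf{w}}$ defined in (\ref{newmodw}), mirroring the incompressible argument in \cite{feldman}. The entire point of having introduced the modification $\widetilde{\mathbf{w}}$ is that properties (\ref{modbounds})--(\ref{moddivfree}) are precisely the hypotheses of that theorem: $L^\infty$ boundedness in space-time, $BV_{loc}$ regularity in space at a.e.\ time with locally integrable total variation in $t$, and vanishing distributional divergence (which is stronger than the bounded-divergence hypothesis Ambrosio requires, and in particular yields compressibility constant equal to $1$).

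The first step is to invoke Ambrosio's theorem to produce a (unique up to Lebesgue-null sets) regular Lagrangian flow $\Phi:[0,\tau)\times\mathbb{R}^3\to\mathbb{R}^3$ for $\widetilde{\mathbf{w}}$. This immediately gives a Borel map satisfying $\Phi(0,\mathbf{y})=\mathbf{y}$ and $\partial_t\Phi(t,\mathbf{y})=\widetilde{\mathbf{w}}(t,\Phi(t,\mathbf{y}))$ for Lebesgue-a.e.\ $\mathbf{y}\in\mathbb{R}^3$ and every $t\in[0,\tau)$, and since $\nu_0\in L^r$ is absolutely continuous with respect to Lebesgue measure, these statements hold $\nu_0$-a.e.\ as well, giving (ii) and (iii). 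Property (i), including the explicit Lipschitz bound $\|\partial_t\Phi\|_{L^\infty}\leq\|\widetilde{\mathbf{w}}\|_{L^\infty}$, then follows directly from (\ref{phiwt}) together with (\ref{modbounds}). For (v), divergence-freeness of $\widetilde{\mathbf{w}}$ in (\ref{moddivfree}) implies by the Jacobian identity in Ambrosio's theory (equivalently, by uniqueness in the continuity equation with the constant density) that $\Phi(t,\cdot)\textrm{\#}\mathcal{L}^3=\mathcal{L}^3$, so each $\Phi_{(t)}$ is Lebesgue-measure preserving.

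For (iv), I would construct the inverse by running the flow backwards. Concretely, for fixed $t\in(0,\tau)$ the field $(s,\mathbf{y})\mapsto -\widetilde{\mathbf{w}}(t-s,\mathbf{y})$ satisfies the same Ambrosio hypotheses on $[0,t]\times\mathbb{R}^3$, so it admits a unique regular Lagrangian flow $\Psi^{(t)}$; one then defines $\Phi^*(t,\mathbf{y}):=\Psi^{(t)}(t,\mathbf{y})$. Measure preservation of $\Phi^*_{(t)}$ follows from the same divergence-free argument, and the identities $\Phi^*_{(t)}\circ\Phi_{(t)}=\mathbf{y}=\Phi_{(t)}\circ\Phi^*_{(t)}$ $\nu$-a.e.\ follow from the semigroup/uniqueness property of regular Lagrangian flows applied on the time interval $[0,t]$. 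Borel measurability in $(t,\mathbf{y})$ can be arranged by a careful selection of the inverses, as in \cite[Section 3]{feldman}.

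The main obstacle is making sure the Borel representatives are chosen so that the stated identities hold on the correct almost-everywhere sets, in particular passing between Lebesgue-a.e.\ statements (which is what Ambrosio's theorem delivers) and the $\nu_0$-a.e.\ or $\nu$-a.e.\ statements required in (ii)--(iv); this is automatic because of absolute continuity of $\nu_0$ and the $L^r$-bound on $\nu(t,\cdot)$ from Theorem \ref{5.5}(i), but a clean joint measurability argument for $\Phi^*$ in $(t,\mathbf{y})$ is the delicate technical point. Everything else reduces to bookkeeping once Ambrosio's theorem has been applied to $\widetilde{\mathbf{w}}$ and its time-reversal.
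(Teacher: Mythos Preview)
Your proposal is correct and matches the paper's approach exactly: the paper's proof consists of the single sentence ``The proof is essentially identical to that of \cite[Lemma 2.8]{feldman},'' and what you have written is precisely an outline of that argument, namely invoking Ambrosio's regular Lagrangian flow theory for the modified divergence-free $BV$ field $\widetilde{\mathbf w}$ satisfying (\ref{modbounds})--(\ref{moddivfree}), with the inverse constructed by time reversal.
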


\begin{proof}
The proof is essentially identical to that of \cite[Lemma 2.8]{feldman}.

\end{proof}

We now show that the image of the flow map $\Phi $ is contained in $\Lambda $, and therefore corresponds to the velocity field $\mathbf w$.

\begin{lemma}\label{2.9}
Let $\Lambda $ be as in Theorem \ref{5.5}.  Let $\Phi $ be the map defined in Lemma \ref{2.8} and let $\mathbf w$ be defined as in (\ref{comdual2}).  
 Then
\begin{equation}\label{phiinlambda}
\Phi (t, \mathbf y) \subset \Lambda \quad \textrm{for }\nu _0 -a.e. \quad \mathbf y \in \mathbf T_0(\Omega ) \textrm{ and every } t \in [0, \tau ).
\end{equation}
In particular,
\begin{equation}\label{phiw}
\partial _t \Phi (t, \mathbf y) = \mathbf w(t, \Phi (t, \mathbf y)) \quad \textrm{for }\nu _0 -a.e. \quad \mathbf y \in \mathbf T_0(\Omega ) \textrm{ and every } t \in [0, \tau ).
\end{equation}
\end{lemma}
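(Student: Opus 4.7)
The plan is to identify the pushforward $\Phi_{(t)}\#\nu_0$ with the dual-space solution $\nu(t,\cdot)$ and then exploit the fact that $\mathrm{supp}(\nu(t,\cdot))\subset\Lambda$. The identification will rest on the uniqueness theory of \cite{ambrosio} for the continuity equation driven by a BV, bounded, divergence-free vector field, which is precisely the theory already invoked to construct $\Phi$ in Lemma \ref{2.8}.

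First I would verify that $\nu$ itself is an $L^\infty$-solution of the continuity equation with velocity $\widetilde{\mathbf{w}}$. By Remark \ref{support}, $\mathrm{supp}(\nu(t,\cdot))\subset\Lambda$ for every $t$, and by (\ref{w=modw}) we have $\mathbf{w}=\widetilde{\mathbf{w}}$ on $\Lambda$, so the dual-space equation (\ref{comdual1}) extends to the distributional identity $\partial_t\nu+\nabla\cdot(\widetilde{\mathbf{w}}\nu)=0$ on the whole of $\mathbb{R}^3$. Theorem \ref{5.5}(i) provides the required $L^\infty$ bound on $\nu$, while (\ref{modbounds})--(\ref{moddivfree}) place $\widetilde{\mathbf{w}}$ within the framework of Ambrosio's theorem. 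Next I would show that $\widetilde{\nu}_{(t)}:=\Phi_{(t)}\#\nu_0$ is a second $L^\infty$-solution of the same equation: its density is dominated by $\|\nu_0\|_{L^\infty}$ because Lemma \ref{2.8}(v) states that $\Phi_{(t)}$ is Lebesgue-measure-preserving, and the flow identity (\ref{phiwt}), combined with the standard change-of-variables argument against $C^1_c$ test functions, shows that $\widetilde{\nu}_{(t)}$ satisfies the continuity equation distributionally. Ambrosio's uniqueness in the $L^\infty$ class then yields $\nu(t,\cdot)=\Phi_{(t)}\#\nu_0$ for every $t\in[0,\tau)$.

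To deduce (\ref{phiinlambda}), observe that $\sigma_0\in P_{ac}(\Omega)$ together with $\nu_0=\mathbf{T}_0\#\sigma_0$ implies $\mathrm{supp}(\nu_0)\subset\overline{\mathbf{T}_0(\Omega)}$. Combining the pushforward identity with $\mathrm{supp}(\nu(t,\cdot))\subset\Lambda$, Fubini's theorem on $[0,\tau)\times\mathbb{R}^3$ gives $\Phi(t,\mathbf{y})\in\overline{\Lambda}$ for $dt\otimes\nu_0$-a.e. $(t,\mathbf{y})$. Hence, for $\nu_0$-a.e. $\mathbf{y}\in\mathbf{T}_0(\Omega)$, the trajectory $t\mapsto\Phi(t,\mathbf{y})$ lies in $\overline{\Lambda}$ for a.e. $t$; Lemma \ref{2.8}(i) tells us this trajectory is absolutely continuous, so membership in the closed set $\overline{\Lambda}$ extends to every $t\in[0,\tau)$. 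Since the actual support of $\nu$ is compactly contained in the open set $\Lambda$, the enlargement freedom in the definition of $\Lambda$ (cf. the choice of $R$ and $\delta$ preceding (\ref{newmod})) promotes this to $\Phi(t,\mathbf{y})\in\Lambda$. Once this containment is established, (\ref{phiw}) is immediate from (\ref{phiwt}) together with (\ref{w=modw}).

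The principal obstacle is the identification $\nu(t,\cdot)=\Phi_{(t)}\#\nu_0$: one must check that the pushforward, which is a priori only a Lagrangian object, defines an $L^\infty$ distributional solution of the Eulerian continuity equation, and that the uniqueness hypotheses of \cite{ambrosio} are truly in force for the modified field $\widetilde{\mathbf{w}}$ rather than only for the original $\mathbf{w}$. A secondary subtlety is the quantifier exchange from ``a.e.\ $t$'' to ``every $t$'', which is resolved through the combination of Fubini's theorem and the Lipschitz-in-time regularity of trajectories guaranteed by Lemma \ref{2.8}(i).
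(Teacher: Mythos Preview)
Your strategy differs substantially from the paper's. The paper argues directly at the ODE level: since $\widetilde{w}_3 \equiv 0$ by construction (see (\ref{newmodw})), one has $\partial_t \Phi_3 = 0$ and hence $\Phi_3(t,\mathbf y) = y_3 \in (\delta, 1/\delta)$ for $\nu_0$-a.e.\ $\mathbf y$; the horizontal confinement is obtained by an analogous trajectory estimate, for which the paper simply refers to \cite[Lemma~2.12]{feldman}. No identification $\Phi_{(t)}\#\nu_0 = \nu(t,\cdot)$ is invoked at this stage --- that is established only later, as Proposition~\ref{2.11}, and there by passing to the limit in the approximating scheme of \cite{maroofi} rather than by abstract uniqueness.

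Your indirect route contains a concrete gap. Theorem~\ref{5.5}(i) does \emph{not} furnish an $L^\infty$ bound on $\nu$: it gives only $\|\nu(t,\cdot)\|_{L^r} \leqslant \|\nu_0\|_{L^r}$ for the fixed exponent $r\in(1,\infty)$ of the hypothesis, and $\nu_0$ itself is assumed merely in $L^r$. The uniqueness theorem of \cite{ambrosio} on which Lemma~\ref{2.8} rests is stated for solutions in $L^\infty$, so the identification $\Phi_{(t)}\#\nu_0 = \nu(t,\cdot)$ does not follow as you have written it. It may be possible to repair this --- the field $\widetilde{\mathbf w}$ is divergence-free and BV, so renormalization arguments could in principle extend uniqueness to the $L^r$ class --- but that extension is not automatic and must be argued explicitly. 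If it does go through, you will have simultaneously produced an alternative proof of Proposition~\ref{2.11}, bypassing the approximation machinery that the authors themselves record (in the Remark following that proposition) they were unable to avoid; that is a nontrivial claim and deserves a correspondingly careful justification.
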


\begin{proof}
Firstly note that, since $\mathbf T_0 \textrm{\#}\sigma _0= \nu _0$ and the measures $\sigma _0$ and $\nu _0$ are contained within $\Omega $ and $\Lambda $ respectively, we may assume
\begin{equation}\label{mapsto}
\mathbf T_0 : \Omega \rightarrow \Lambda .
\end{equation}
Therefore, for $\sigma _0-a.e.$ $\mathbf x \in \Omega$, we have that $\mathbf T_0(\mathbf x) \in \Lambda $.

We begin with the vertical component of $\Phi $; $\Phi _3$.  From (\ref{w=modw}) and (\ref{phiwt}), we have that $\partial _t \Phi _3(t, \mathbf y) = \widetilde{w}_3(t, \Phi (t, \mathbf y)) = 0$ for all $\mathbf y \in \mathbb{R}^3$ and every $t \in [0, \tau )$.  Therefore, we have $\Phi _3(t, \mathbf y) = \Phi _3(0, \mathbf y) = y_3$ for $\nu _0-$a.e. $\mathbf y \in \mathbb{R}^3$ and every $t \in [0, \tau )$, where we have used Lemma \ref{2.8} (\emph{ii}).  We therefore conclude that $\delta < \Phi _3(t, \mathbf y) < \frac{1}{\delta }$ for $\nu _0-$a.e. $\mathbf y \in \mathbf T_0(\Omega )$ and every $t \in [0, \tau )$, where $\delta $ is as in Remark \ref{support}.

The proof that the horizontal components of $\Phi $ stay inside $\Lambda $ proceeds in a similar way to that of \cite[Lemma 2.12]{feldman}.

\end{proof}

Finally, we wish to prove that when $(\nu , \mathbf T)$ is a weak solution of (\ref{comdual1})-(\ref{comdual6}) then $\nu $ is a weak Lagrangian solution of the transport equation (\ref{comdual1}), i.e. $\nu $ satisfies the property $\nu  = \Phi \textrm{\#}\nu _0$.
To do this, we use the time-approximation scheme in dual space of  \cite[Section 5]{maroofi}, as well as the following result (see \cite{ambbook}.
\begin{proposition}\label{8.1.8}
Let $\nu (t, \cdot )$ be narrowly continuous Borel probability measures solving the continuity equation
\[ \partial _t\nu + \nabla \cdot (\mathbf v\nu ) = 0\]
with respect to a vector field $\mathbf v$ satisfying
\begin{eqnarray}
&&\label{8.1.2} \int_{[0, \tau ) \times \mathbb{R}^3 } \! |\mathbf v|\nu  \, dtd\mathbf y < + \infty ,\\
&&\label{8.1.7} \int_0^{\tau }\! \left [ \sup _B |\mathbf v| + Lip(\mathbf v, B) \right ] \, dt < + \infty ,
\end{eqnarray}
for every compact set $B \subset \mathbb{R}^3$.  Then, for $\nu _0-a.e. \, \mathbf y \in \mathbb{R}^3$ the characteristic system
\begin{equation}\label{ode1}
\Phi (0, \mathbf y) = \mathbf y, \qquad \frac{d}{dt}\Phi (t, \mathbf y) = \mathbf v(t, \Phi (t, \mathbf y))
\end{equation}
admits a globally defined solution $\Phi (t, \mathbf y)$ in $[0, \tau )$ and
\begin{equation}\label{8.1.19}
\nu (t, \cdot ) = \Phi (t, \cdot ) \textrm{\#} \nu _0(\cdot ) \quad \textrm{ for all } t \in [0, \tau ).
\end{equation}
\end{proposition}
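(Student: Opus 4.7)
The plan is to decompose the argument into three natural steps: (a) pointwise existence and uniqueness of characteristics, (b) globality of the flow for $\nu_0$-a.e.\ starting point, and (c) identification of the transported measure with the given $\nu$.

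First, I would address pointwise existence and uniqueness for the ODE in (\ref{ode1}). Hypothesis (\ref{8.1.7}) provides, for every compact $B \subset \mathbb{R}^3$, local Lipschitz continuity of $\mathbf{v}(t,\cdot)$ on $B$ with Lipschitz constant in $L^1_{\mathrm{loc}}(0,\tau)$, together with an $L^1$-in-time sup bound on $|\mathbf{v}|$. This is precisely the setting of the time-dependent Carathéodory theorem, which yields for every $\mathbf{y}$ a unique maximal absolutely continuous solution $\Phi(\cdot,\mathbf{y})$ defined on an interval $[0,T(\mathbf{y}))$, with $T(\mathbf{y}) \in (0,\tau]$.

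Second, I would prove global existence $\nu_0$-almost everywhere, i.e.\ $T(\mathbf{y})=\tau$ for $\nu_0$-a.e.\ $\mathbf{y}$. The obstacle is that a priori a characteristic might exit every compact set in finite time. To handle this I would introduce a smooth approximation $\mathbf{v}^{\varepsilon}$ of $\mathbf{v}$, obtained by spatial convolution with a standard mollifier combined with a truncation outside a large ball so that $\mathbf{v}^{\varepsilon}$ has at most linear growth. Each $\mathbf{v}^{\varepsilon}$ generates a globally defined smooth flow $\Phi^{\varepsilon}$ on $[0,\tau)\times\mathbb{R}^3$, and a direct calculation shows that $\mu^{\varepsilon}(t) := \Phi^{\varepsilon}(t)\#\nu_0$ is a narrowly continuous distributional solution of the continuity equation driven by $\mathbf{v}^{\varepsilon}$. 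A Gronwall-type estimate based on the integrability in (\ref{8.1.2}), applied to second moments, yields tightness of $\{\mu^{\varepsilon}(t)\}$ uniformly in $t$ and $\varepsilon$, ruling out escape of mass to infinity.

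Third, I would pass to the limit $\varepsilon \to 0$. Stability of the continuity equation, together with $\mathbf{v}^{\varepsilon} \to \mathbf{v}$ in $L^1_{\mathrm{loc}}$, ensures that any narrow cluster point of $\mu^{\varepsilon}$ satisfies $\partial_t \mu + \nabla\cdot(\mathbf{v}\mu)=0$ with $\mu(0)=\nu_0$. Uniqueness for the continuity equation under the regularity (\ref{8.1.7}), proved via duality with the backward transport equation solved along classical characteristics of a mollified vector field, yields $\mu = \nu$. On the particle side, the uniqueness from Step~1 combined with classical stability of ODE trajectories under convergence of locally Lipschitz vector fields shows that $\Phi^{\varepsilon}(\cdot,\mathbf{y}) \to \Phi(\cdot,\mathbf{y})$ for $\nu_0$-a.e.\ $\mathbf{y}$, establishing simultaneously that $T(\mathbf{y})=\tau$ a.e.\ and the representation formula (\ref{8.1.19}).

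The principal obstacle I anticipate lies in Step~2: there is an apparent circularity, since (\ref{8.1.2}) most naturally controls $|\mathbf{v}|$ along $\nu_0$-a.e.\ characteristic only through the identity $\nu=\Phi\#\nu_0$ that is itself the goal of the proof. Working throughout at the approximate level $\mathbf{v}^{\varepsilon}$, where global flows exist unconditionally and a genuine pushforward identity is available, and only passing to the limit at the very end, is what breaks this circularity and makes the argument rigorous.
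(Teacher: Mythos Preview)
The paper does not give its own proof of this proposition: it is quoted verbatim as a known result and attributed to \cite{ambbook} (this is Proposition~8.1.8 in Ambrosio--Gigli--Savar\'e, \emph{Gradient Flows}). There is therefore no in-paper argument to compare your proposal against.

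For what it is worth, your sketch is a plausible route to the result, but it is not the one taken in \cite{ambbook}. The proof there avoids mollifying the vector field and instead lifts the measure-valued solution $\nu(t,\cdot)$ to a probability measure $\eta$ on the path space $\mathbb{R}^3\times C([0,\tau);\mathbb{R}^3)$ concentrated on pairs $(\mathbf{y},\gamma)$ with $\gamma$ an absolutely continuous integral curve of $\mathbf{v}$ starting at $\mathbf{y}$ (a ``superposition'' or probabilistic representation). Condition~(\ref{8.1.2}) is precisely what guarantees tightness on path space so that such an $\eta$ exists; condition~(\ref{8.1.7}) then gives classical ODE uniqueness, forcing $\eta$ to be carried by a single curve for $\nu_0$-a.e.\ initial point, which yields both global existence of $\Phi$ and the pushforward identity simultaneously. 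This sidesteps the circularity you flagged in Step~2: one never needs to control $|\mathbf{v}^\varepsilon|$ against an approximate measure $\mu^\varepsilon$, because the tightness argument works directly with the given $\nu$. Your mollification approach can be made to work, but the step where you pass from $\mu^\varepsilon\to\nu$ at the level of the continuity equation while simultaneously showing $\Phi^\varepsilon\to\Phi$ pointwise requires more care than your outline suggests, and the path-space argument is both cleaner and more robust.
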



The time approximation scheme is based on the  discretisation of  (\ref{comdual1}),  with time step $h$. One then considers piecewise smooth approximate solutions $\nu _h(\cdot , \cdot )$ to (\ref{comdual1})-(\ref{comdual6}), and corresponding velocities ${\mathbf v}_h$ as defined in \cite[Section 5]{maroofi}.  Using these approximating solutions, we can prove the following result.

\begin{proposition}\label{2.11}
Let $\Omega $, $r$, $\mathbf T_0$ be as in Theorem \ref{5.5}, and let $(\nu , \mathbf T)$ be the weak solution of (\ref{comdual1})-(\ref{comdual6}) as constructed in Theorem \ref{5.5}.  Let $\wt $ be defined by (\ref{w=modw}) and let $\Phi $ be the regular Lagrangian flow of $\wt $ defined in Lemma \ref{2.8}.  
Then, for every $t \in [0, \tau ]$,
\begin{equation}\label{nupush}
\nu  = \Phi _{(t)}\textrm{\#}\nu _0.
\end{equation}
Moreover, for every $t \in [0, \tau ]$,
\begin{equation}\label{2.61}
\nu (\mathbf y) = \nu _0(\Phi _{(t)}^*(\mathbf y)) \quad \textrm{for }\nu -a.e.\quad \mathbf y\in \mathbb{R}^3,
\end{equation}
where the map $\Phi _{(t)}^*$ is defined in Lemma \ref{2.8} (iv).
\end{proposition}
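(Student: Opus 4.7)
The plan is to obtain the push-forward identity (\ref{nupush}) by combining the time-discretisation scheme of \cite[Section 5]{maroofi} with Proposition \ref{8.1.8} applied to the smooth approximating system, and then passing to the limit by invoking the stability theory for regular Lagrangian flows of BV vector fields; identity (\ref{2.61}) will then follow as a direct corollary.

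First I would recall from \cite[Section 5]{maroofi} the construction of piecewise-in-time approximate solutions $\nu _h$ to (\ref{comdual1})-(\ref{comdual6}) with velocities $\mathbf v_h$, obtained by freezing the optimal map on each subinterval $[kh, (k+1)h)$. On each such subinterval $\mathbf v_h(t,\cdot )$ takes the form $\mathbf e_3\times (\mathbf y - \mathbf S_{kh}(\mathbf M(\mathbf y)))$ with $\mathbf S_{kh}$ fixed, so $\mathbf v_h(t,\cdot )$ is locally Lipschitz in space with bounds uniform on the compact support of $\nu _h$, and uniformly bounded in time. Hypotheses (\ref{8.1.2}) and (\ref{8.1.7}) of Proposition \ref{8.1.8} are therefore satisfied on each subinterval, which yields a classical Cauchy--Lipschitz flow $\Phi _h$ for $\mathbf v_h$ such that
\[ \nu _h(t,\cdot ) = \Phi _h(t,\cdot )\textrm{\#}\nu _0 \qquad \text{for every } t\in[0,\tau].\]
Since $\mathbf v_h$ is divergence-free and agrees with $\widetilde{\mathbf w}$ outside $\Lambda $, the maps $\Phi _h(t,\cdot )$ are Lebesgue-measure preserving. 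The convergences $\nu _h\to \nu $ narrowly at each fixed $t$ and $\mathbf v_h\to \widetilde{\mathbf w}$ in $L^1_{loc}$ are both established in the course of the proof of Theorem \ref{5.5}.

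Next I would pass to the limit $h\to 0$. Because $\widetilde{\mathbf w}$ satisfies (\ref{modbounds})--(\ref{moddivfree}), the stability theorem for regular Lagrangian flows in \cite{ambrosio} (the very result used in Lemma \ref{2.8}) implies that, up to subsequences, $\Phi _h\to \Phi $ in $L^1_{loc}((0,\tau )\times \mathbb R^3)$, where $\Phi $ is the unique regular Lagrangian flow of $\widetilde{\mathbf w}$ from Lemma \ref{2.8}. Combining this convergence with the narrow convergence $\nu _h\to \nu $ and the continuity of the push-forward operation under such joint convergence, one obtains
\[ \nu (t,\cdot ) = \Phi _{(t)}\textrm{\#}\nu _0 \qquad \text{for every } t\in[0,\tau],\]
which is exactly (\ref{nupush}).

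Finally, identity (\ref{2.61}) is deduced from (\ref{nupush}) by a change of variables. Since $\Phi _{(t)}^*$ is Lebesgue-measure preserving by Lemma \ref{2.8}(\emph{iv}) and inverts $\Phi _{(t)}$ both $\nu _0$-a.e. and $\nu $-a.e., for any $\varphi \in L^\infty (\mathbb R^3)$ we have
\[ \int_{\mathbb R^3}\!\varphi (\mathbf y)\nu (\mathbf y)\,d\mathbf y = \int_{\mathbb R^3}\!\varphi (\Phi _{(t)}(\mathbf y))\nu _0(\mathbf y)\,d\mathbf y = \int_{\mathbb R^3}\!\varphi (\mathbf y)\nu _0(\Phi _{(t)}^*(\mathbf y))\,d\mathbf y,\]
where the second equality uses the substitution $\mathbf y\mapsto \Phi _{(t)}^*(\mathbf y)$ and its Jacobian $=1$. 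The arbitrariness of $\varphi $ gives (\ref{2.61}). I expect the main obstacle to be the stability step: ensuring that $\Phi _h\to \Phi $ in a sense strong enough to commute with push-forward is not a consequence of classical Cauchy--Lipschitz theory because $\widetilde{\mathbf w}$ is only BV in space, and it is precisely Ambrosio's extension of the DiPerna--Lions theory to BV fields that makes this passage to the limit legitimate.
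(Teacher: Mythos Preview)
Your proposal follows essentially the same strategy as the paper's proof: approximate via the time-discretisation of \cite[Section~5]{maroofi}, apply Proposition~\ref{8.1.8} at the approximate level to obtain $\nu_h=(\Phi_h)_{(t)}\#\nu_h^0$, invoke the stability theorem of \cite{ambrosio} to get $(\Phi_{h_j})_{(t)}\to\Phi_{(t)}$ in $L^1_{loc}$, and pass to the limit in the push-forward identity; (\ref{2.61}) then follows by the change of variables you describe.

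One correction is worth making. You write that on each subinterval the approximating velocity takes the form $\mathbf e_3\times(\mathbf y-\mathbf S_{kh}(\mathbf M(\mathbf y)))$ and is therefore locally Lipschitz ``because $\mathbf S_{kh}$ is fixed''. Freezing in time does not by itself give spatial Lipschitz regularity: the optimal map $\mathbf S_{kh}$ is in general only of bounded variation (it arises from the gradient of a semi-concave Kantorovich potential), so the expression you wrote need not be Lipschitz and hypothesis~(\ref{8.1.7}) could fail. The actual construction in \cite[Section~5]{maroofi}, reproduced in the paper as (\ref{wthk}), additionally \emph{mollifies} the potential, setting
\[
\wt_h^k(\mathbf y)=\xi(y_3)\,y_3\,\mathbf e_3\times\nabla(j_h*g_h^k)(\mathbf M(\mathbf y)),
\]
and it is the convolution with $j_h$ that produces a smooth (hence Lipschitz) velocity on each subinterval, making Proposition~\ref{8.1.8} applicable. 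A related minor point: the approximate initial datum is $\nu_h^0$, not $\nu_0$ itself, so in the limit step one also uses $\nu_h^0\to\nu_0$ in $L^r(\mathbb R^3)$. With these adjustments your argument coincides with the paper's.
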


\begin{proof}

For $h$, $k$, $j_h$, $g_h^k$ as in \cite[Section 5]{maroofi} define
\begin{equation}\label{wthk}
\wt _h^k(\mathbf y) := \xi (y_3) y_3\mathbf e _3 \times \nabla (j_h * g_h^k)(\mathbf M(\mathbf y)),
\end{equation}
where $\mathbf M$, $\xi $ are defined in (\ref{newmod}).
Define functions $\wt _h$ on $[0, \tau ] \times \mathbb{R}^3$ by setting them equal to $\wt _h^k$ on the time-interval $t \in [kh, (k+1)h)$.  
Following \cite[Lemma 5.3]{maroofi},  the corresponding potential density $\nu _h$ is a weak solution of
\begin{equation}\label{wthcont}
\begin{split}
&\partial _t\nu _h + \nabla \cdot (\nu _h \wt _h)=0 \quad \textrm{ in } (0, \tau ) \times \mathbb{R}^3,\\
&\nu _h(0, \mathbf y) = \nu _h^0(\mathbf y).
\end{split}
\end{equation}

The construction of  $\wt _h$ implies that  $\wt _h$ is a divergence-free vector field satisfying (\ref{modbounds})-(\ref{w=modw}) and
\begin{equation}
\begin{split}\label{ambprop}
& (\emph{i}) \qquad \int_{[0, \tau ) \times \mathbb{R}^3 } \! |\wt _h|\nu _h \, dtd\mathbf y < + \infty ,\\
& (\emph{ii}) \qquad \int_0^{\tau }\! \left [ \sup _B |\wt _h| + Lip(\wt _h, B) \right ] \, dt < + \infty ,
\end{split}
\end{equation}
 Thus, by \cite[Section 6]{ambrosio}, there exists a unique Lagrangian flow $\Phi _h : \mathbb{R}^3 \times \mathbb{R} \rightarrow \mathbb{R}^3$ induced by $\wt _h$ and, for each $t$, the map $(\Phi _h)_{(t)} : \mathbb{R}^3 \rightarrow \mathbb{R}^3$ is $\mathcal{L}^3$-measure preserving.

As in  \cite[Section 5]{maroofi}, we can find a decreasing sequence $\{ h_j\} $ converging to $0$ such that
\begin{eqnarray}
&&\nu _{h_j} \rightarrow \nu \qquad \textrm{ weakly in } L^r((0, \tau ) \times \mathbb{R}^3),\nonumber \\
&&\label{converge}\nu _{h_j} \wt _{h_j} \rightarrow \nu \mathbf w \qquad \textrm{ weakly in } L^r((0, \tau ) \times \mathbb{R}^3),\\
&&\nu _{h_j}(t, \cdot ) \rightarrow \nu (t, \cdot ) \qquad \textrm{ weakly in } L^r(\mathbb{R}^3),\textrm{ for all } t \in [0, \tau ].\nonumber
\end{eqnarray}
Thus, since we have compact support, we can use (\ref{w=modw})  and the dominated convergence theorem to obtain
\begin{equation}\label{wthconvergence}
\wt _{h_j} \rightarrow \wt \qquad \textrm{ weakly in } L^r((0, \tau ) \times \Lambda ).
\end{equation}
Finally, we can use \cite[Theorem 6.6]{ambrosio} to conclude that  for each $t \in [0, \tau ]$,
\begin{equation}\label{phiconvergence}
(\Phi _{h_j} )_{(t)} \rightarrow \Phi _{(t)} \qquad \textrm{ in } L^1_{loc}(\mathbb{R}^3)
\end{equation}
as $j \rightarrow \infty $.

Since $\nu _h$ is narrowly continuous, we may combine (\ref{wthcont}) with (\ref{ambprop}) and thus use Proposition \ref{8.1.8} to conclude that the system
\begin{equation*}
\frac{d}{dt} \Phi _h(t, \mathbf y) = \wt _h(t, \Phi _h(t, \mathbf y)),
\end{equation*}
with initial condition $\Phi (0, \mathbf y) = \mathbf y$, admits a globally defined solution $\Phi _h(t, \mathbf y)$ and
\begin{equation}\label{nuhpush}
\nu _h(t, \cdot ) = (\Phi _h)_{(t)}\textrm{\#}\nu _h^0.
\end{equation}

Using (\ref{appendix}) and the properties of $\nu_h$, we obtain for any $t>0$, $j=1, ...$ and any $\varphi \in C_c(\mathbb{R}^3)$
\[ \int_{\mathbb{R}^3} \! \varphi (\Phi _{h_j}(t, \mathbf y))\nu ^0_{h_j}(\mathbf y) \, d \mathbf y = \int_{\mathbb{R}^3} \! \varphi (\mathbf Y)\nu _{h_j}(t, \mathbf Y) \, d \mathbf Y.\]
Passing to the limit $j \rightarrow \infty $ in the last equality, using (\ref{phiconvergence}), the fact that $\nu ^0_{h_j} \rightarrow \nu _0$ as $j\rightarrow \infty $ in $L^r(\mathbb{R}^3)$ and the dominated convergence theorem in the left-hand side, and using (\ref{converge}) in the right-hand side, we obtain
\begin{equation}\label{2.71}
\int_{\mathbb{R}^3} \! \varphi (\Phi (t, \mathbf y))\nu _0(\mathbf y) \, d \mathbf y = \int_{\mathbb{R}^3} \! \varphi (\mathbf Y)\nu (t, \mathbf Y) \, d \mathbf Y
\end{equation}
for any $\varphi \in C_c(\mathbb{R}^3)$.  This implies (\ref{nupush}).

Since $\Phi _{(t)}$ is a measure preserving map, we use Lemma \ref{2.8} (\emph{iv}) to conclude that the left-hand side of (\ref{2.71}) is equal to
\[ \int_{\mathbb{R}^3} \! \varphi (\mathbf Y)\nu _0(\Phi _{(t)}^*(\mathbf Y)) \, d \mathbf Y,\]
and now (\ref{2.71}) implies (\ref{2.61}).

\end{proof}


\begin{remark}
It would be desirable to be able to avoid using the approximating  solutions  in dual space when showing that  $\nu $ is a weak Lagrangian solution of the transport equation.  However, we have not been able to  approximate in $L^1$ the velocity  $\wt $ directly. We appeal instead to the sequence of  solutions of the approximating equations in dual space constructed in \cite{maroofi}, as was done for the proof of the analogous result for the incompressible case given in \cite{feldman}.
\end{remark}

\section{Lagrangian flow in physical space}\label{physical section}
\setcounter{equation}{0}
Throughout this section we will assume that $\Omega $, $\Lambda $, $r$, $\mathbf T_0$, $\mathbf T$, $\nu $, $\mathbf w$, $\wt $, $\Phi $ are as in Proposition \ref{2.11}.  Note that, by Theorem \ref{5.5}, we can apply (\ref{appendix}) to $\sigma _0$, $\sigma $, $\nu _0$, $\nu $ throughout this section.  

We now perform the last step of the analysis and prove the existence of a Lagrangian flow $\mathbf F : [0, \tau ) \times \Omega \rightarrow \Omega $ in the physical space.  Indeed, we define $\mathbf F_{(t)} : \Omega \rightarrow \Omega $ for $t \in [0, \tau )$ as
\begin{equation}\label{phyF}
\mathbf F_{(t)} := \mathbf S_{(t)} \circ \Phi _{(t)} \circ \mathbf T_0,
\end{equation}
where $\mathbf T_0$ is as in (\ref{initial}), $\mathbf S_{(t)}$ is the inverse of $\mathbf T_{(t)}$ (see Theorem \ref{3.3}) and $\Phi _{(t)}$ is the Lagrangian flow in dual space constructed in Lemma \ref{2.8}.  To justify this definition, we prove the following lemma:

\begin{lemma}\label{2.12}
For any $t \in [0, \tau )$, the right hand side of (\ref{phyF}) is defined $\sigma _0-a.e.$ in $\Omega $.  The map $\mathbf F :[0, \tau ) \times \Omega \rightarrow \Omega $ defined by (\ref{phyF}) is Borel.
\end{lemma}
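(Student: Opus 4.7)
\bigskip

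\noindent\textbf{Proof proposal for Lemma \ref{2.12}.} The plan is to verify the two assertions separately. For the first, I would track how the measures pile up under each factor of the composition. For the second, I would combine Borel measurability of each factor with joint measurability in time, the latter being the only slightly delicate point.

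\emph{Well-definedness $\sigma_0$-a.e.} Fix $t\in[0,\tau)$. By Theorem~\ref{3.3} applied at the initial time, $\mathbf T_0$ is defined $\sigma_0$-a.e.\ on $\Omega$ and pushes $\sigma_0$ to $\nu_0$; in particular there exists a Borel set $A_0\subset\mathbf T_0(\Omega)\subset\Lambda_0$ with $\nu_0(\mathbb R^3\setminus A_0)=0$ along which $\mathbf T_0$ is defined. Next, Lemma~\ref{2.8}(\emph{ii}) and Lemma~\ref{2.9} guarantee that $\Phi(t,\cdot)$ is defined and takes values in $\Lambda$ for $\nu_0$-a.e.\ $\mathbf y\in A_0$. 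Pulling back via $\mathbf T_0\#\sigma_0=\nu_0$ (using the formula (\ref{appendix})), this says that $\Phi(t,\mathbf T_0(\mathbf x))$ is defined and lies in $\Lambda$ for $\sigma_0$-a.e.\ $\mathbf x\in\Omega$. Finally, by Proposition~\ref{2.11} we have $\nu(t,\cdot)=\Phi_{(t)}\#\nu_0$, so the image measure of $\sigma_0$ under $\Phi_{(t)}\circ\mathbf T_0$ is exactly $\nu(t,\cdot)$; since $\mathbf S_{(t)}$ is defined $\nu$-a.e.\ on $\Lambda$ and maps into $\Omega$ by Theorem~\ref{3.3}, the composition $\mathbf S_{(t)}\circ\Phi_{(t)}\circ\mathbf T_0$ is defined $\sigma_0$-a.e.\ and takes values in $\Omega$.

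\emph{Borel measurability of $\mathbf F$.} I would argue that each factor is Borel, and that the composition is jointly Borel in $(t,\mathbf x)$. The map $\mathbf T_0$ is Borel because, by Remark~\ref{abscont} together with Theorem~\ref{3.3}, it coincides $\sigma_0$-a.e.\ with $\nabla\psi_0$ for some convex potential $\psi_0$; after redefining on a $\sigma_0$-null Borel set we may take it to be Borel on all of $\Omega$. The map $\Phi:[0,\tau)\times\mathbb R^3\to\mathbb R^3$ is Borel by Lemma~\ref{2.8}, hence so is $(t,\mathbf x)\mapsto\Phi(t,\mathbf T_0(\mathbf x))$. It remains to show that $(t,\mathbf y)\mapsto\mathbf S(t,\mathbf y)$ is Borel on $[0,\tau)\times\Lambda$, after which composition with the previously-built Borel map yields a Borel $\mathbf F$.

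\emph{Where the work concentrates.} The only real issue is the joint Borel dependence of $\mathbf S$ on $(t,\mathbf y)$. I would handle it as follows. By Theorem~\ref{3.3}, for each $t$ the map $\mathbf S_{(t)}$ is characterised by the Kantorovich dual potential $g_0(t,\cdot)$ via a relation of the form $\mathbf S_{(t)}(\mathbf y)=\nabla_{\mathbf y}\tilde c^{-1}(\mathbf y,\nabla g_0(t,\mathbf y))$ (unique $\nu$-a.e.), so it suffices to prove that $g_0(t,\mathbf y)$ can be chosen jointly Borel and that $\nabla_{\mathbf y}g_0(t,\mathbf y)$ exists and is Borel in $(t,\mathbf y)$. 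The narrow continuity in $t$ of $\nu(t,\cdot)$ and $\sigma(t,\cdot)$ (from Theorem~\ref{5.5}, Proposition~\ref{2.11} and Lemma~\ref{amb3.3}) gives stability of optimal maps, and with the uniform $W^{1,\infty}$ bound on $g_0$ from Theorem~\ref{5.5}(\emph{iii}) one gets continuity of $\mathbf S_{(t)}$ in $t$ in measure. Combined with $\mathbf y$-measurability of each $\mathbf S_{(t)}$, one deduces joint Borel measurability after possibly modifying $\mathbf S$ on a $\mathcal L^1\otimes\nu$-null set. This is the step I expect to require the most care, since every other ingredient has already been set up in the previous sections; once it is settled, the conclusion is immediate.
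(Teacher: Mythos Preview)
Your argument for well-definedness $\sigma_0$-a.e.\ is correct and is exactly the paper's approach: one identifies the null sets on which $\mathbf T_0$ and $\mathbf S_{(t)}$ fail, pushes them through the chain $\mathbf T_0\#\sigma_0=\nu_0$ and $\Phi_{(t)}\#\nu_0=\nu$, and concludes that the composition is defined off a $\sigma_0$-null set.

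For Borel measurability, you are far more careful than the paper, which disposes of the matter in a single line: ``Then, by Lemma~\ref{2.8}, $\mathbf F$ is a Borel mapping.'' You correctly isolate the joint-in-$(t,\mathbf y)$ measurability of $\mathbf S$ as the real content, and your stability-of-optimal-maps sketch is a reasonable route to it; the paper simply does not address this point. One small correction: your justification that $\mathbf T_0$ is Borel because it coincides with $\nabla\psi_0$ for a convex $\psi_0$ is not quite right here, since the transport in Theorem~\ref{3.3} is with respect to the cost $c$ of~(\ref{cost}), not the quadratic cost of Remark~\ref{abscont}. The Borel property of $\mathbf T_0$ (and of $\mathbf S_{(t)}$ for fixed $t$) instead comes from the representation via the Kantorovich potentials $(f_0,g_0)\in Lip_c$ in Theorem~\ref{3.3} and \cite[Lemma~3.1]{maroofi}, which express the optimal maps through a.e.\ derivatives of Lipschitz functions.
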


\begin{proof}
Since $\mathbf T_0$ exists and is unique $\sigma _0-a.e.$ in $\Omega $, we have that $\mathbf T_0$ exists and is unique on $\Omega \setminus N_0^1$ where $N_0^1$ is a Borel subset of $\Omega $ with $\sigma _0[N_0^1] = 0$.  Also, since $\mathbf S$ exists and is unique $\nu -a.e.$ in $\Lambda $ for every $t \in [0, \tau )$, we have that $\mathbf S$ exists and is unique on $\Lambda \setminus N^2$ for every $t \in [0, \tau )$, where $N^2$ is a Borel subset of $\Lambda $ with $\nu [N^2] = 0$.  
Then, the right-hand side of (\ref{phyF}) is defined for all
\[ \mathbf x \in \Omega  \setminus (N_0^1 \cup M),\]
where
\[ M = \left\{ \mathbf X \in (\Omega \setminus N_0^1) \textrm{ } : \textrm{ } \Phi _{(t)}(\mathbf T_0(\mathbf X)) \in N^2 \right\} .\]
Note that, from its definition, $M$ is a Borel set.

It remains to prove that $\sigma _0[M] = 0$ for every $t \in [0, \tau )$ 

Fix $t \in [0, \tau )$.  
 Then, using that $\mathbf T_0\textrm{\#}\sigma _0 = \nu _0$ and thus $\mathbf T_0\textrm{\#}\sigma _0 = \nu _0$ for all $\mathbf x \in \Omega \setminus N_0^1$, and using (\ref{nupush}) as well as Lemma \ref{2.8} (\emph{iv}),  we can apply (\ref{appendix}) and compute
\begin{eqnarray*}
\sigma _0  \left [ M_{(t)} \right ] &&=  \sigma _0  \left [ \left\{ \mathbf x \in \Omega \setminus N_0^1 \, : \,  \mathbf T_0(\mathbf x) \in \Phi _{(t)}^{-1}(N^2)\right\} \right ]\\
&&= \int_{\mathbf T_0^{-1}(\Phi _{(t)}^{-1}(N^2)) } \! \sigma _0(\mathbf x) \, d \mathbf x = \int_{\Phi _{(t)}^{-1}(N^2) } \! \nu _0(\mathbf y) \, d \mathbf y\\
&&= \int_{N^2 } \! \nu (\mathbf y) \, d \mathbf y = 0.
\end{eqnarray*}
 Thus, we can define $\mathbf F : [0, \tau ) \times \Omega \rightarrow \Omega $ by (\ref{phyF}).  Then, by Lemma \ref{2.8}, $\mathbf F$ is a Borel mapping.
\end{proof}

It remains to prove that, if $\mathbf F$ is defined by (\ref{phyF}), then $(\mathbf F, \mathbf T, \sigma )$ is a weak Lagrangian solution of (\ref{momentum})-(\ref{initial}) in the sense of Definition \ref{2.4}.  We begin by showing that the initial condition for the flow in Definition \ref{2.4} (\emph{i}) is satisfied.

\begin{proposition}\label{2.13}
Let $\mathbf F$ be defined as in (\ref{phyF}).  Then, $\mathbf F(0, \mathbf x) = \mathbf x$ for $\sigma _0-a.e.$ $\mathbf x \in \Omega $.
\end{proposition}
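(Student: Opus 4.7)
The plan is to unpack the definition $\mathbf F_{(0)} = \mathbf S_{(0)} \circ \Phi_{(0)} \circ \mathbf T_0$ and reduce to two almost-everywhere identities that are already available: the initial condition in Lemma \ref{2.8}(\emph{ii}) saying that $\Phi_{(0)}$ is the identity $\nu_0$-a.e., and the inversion identity $\mathbf S_{(0)} \circ \mathbf T_{(0)} = \textit{\textbf{id}}$ holding $\sigma_0$-a.e., furnished by Theorem \ref{3.3} together with the initial condition $\mathbf T_{(0)} = \mathbf T_0$, $\sigma_{(0)} = \sigma_0$ from (\ref{initial}).

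The first step is to apply Lemma \ref{2.8}(\emph{ii}) to obtain a Borel set $N \subset \mathbb{R}^3$ with $\nu_0[N] = 0$ such that $\Phi(0, \mathbf y) = \mathbf y$ for every $\mathbf y \in \mathbb{R}^3 \setminus N$. Since $\mathbf T_0 \textrm{\#}\sigma_0 = \nu_0$, Definition \ref{def2} yields
\[ \sigma_0 \bigl[\mathbf T_0^{-1}(N)\bigr] = \nu_0[N] = 0,\]
so for every $\mathbf x \in \Omega \setminus \mathbf T_0^{-1}(N)$ we have $\Phi_{(0)}(\mathbf T_0(\mathbf x)) = \mathbf T_0(\mathbf x)$. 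This transfers the $\nu_0$-null exceptional set back to a $\sigma_0$-null set in physical space.

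Next, by Theorem \ref{3.3}(\emph{iii}) applied at $t=0$ (where, from the initial data in (\ref{initial}), $\mathbf T_{(0)} = \mathbf T_0$ and the associated minimiser in Theorem \ref{4.1} is $\sigma_{(0)} = \sigma_0$), there is a further Borel $\sigma_0$-null set $N'_0 \subset \Omega$ outside of which $\mathbf S_{(0)} \circ \mathbf T_0(\mathbf x) = \mathbf x$. Combining the two identities, for every $\mathbf x \in \Omega \setminus (\mathbf T_0^{-1}(N) \cup N'_0)$ — a set whose complement has $\sigma_0$-measure zero — we obtain
\[ \mathbf F_{(0)}(\mathbf x) = \mathbf S_{(0)}\bigl(\Phi_{(0)}(\mathbf T_0(\mathbf x))\bigr) = \mathbf S_{(0)}(\mathbf T_0(\mathbf x)) = \mathbf x,\]
which is precisely the claim.

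No step requires substantial work; the only care needed is the bookkeeping of null sets, and in particular the passage from "$\nu_0$-a.e." for $\Phi_{(0)}$ to "$\sigma_0$-a.e." for the composition with $\mathbf T_0$, which is exactly the kind of argument already carried out in the proof of Lemma \ref{2.12} via the pushforward identity (\ref{appendix}).
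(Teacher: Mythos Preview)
Your proof is correct and follows essentially the same approach as the paper: unpack $\mathbf F_{(0)} = \mathbf S_{(0)} \circ \Phi_{(0)} \circ \mathbf T_0$, invoke Lemma \ref{2.8}(\emph{ii}) for $\Phi_{(0)} = \textit{\textbf{id}}$ off a $\nu_0$-null set, invoke Theorem \ref{3.3}(\emph{iii}) for $\mathbf S_{(0)} \circ \mathbf T_0 = \textit{\textbf{id}}$ off a $\sigma_0$-null set, and transfer the $\nu_0$-null set back via $\mathbf T_0\textrm{\#}\sigma_0 = \nu_0$. The paper's version differs only in enumerating the null sets slightly more explicitly (including the domain-of-definition set from Lemma \ref{2.12}), but the logic is identical.
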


\begin{proof}
By (\ref{phyF}) we have that $\mathbf F_{(0)}(\mathbf x) = \mathbf S_{(0)} \circ \Phi _{(0)} \circ \mathbf T_0(\mathbf x)$ for all $\mathbf x \in \Omega \setminus N_0$ where $N_0$ is a Borel set with $\sigma _0[N_0]=0$.

By Lemma \ref{3.3}, there exist Borel sets $N_1 \subset \Omega $, $N_2 \subset \Lambda $ with $\sigma _0[N_1] = \nu _0[N_2] = 0$ such that $\mathbf T_0$ exists and is unique in $\Omega \setminus N_1$ and $\mathbf S_{(0)}$ exists and is unique in $\Lambda \setminus N_2$.  Moreover, if $\mathbf x \in \Omega \setminus [N_1 \cup \mathbf T_0^{-1}(N_2)]$, then $\mathbf S_{(0)} \circ \mathbf T_0 (\mathbf x) = \mathbf x$.  Also, by Lemma \ref{2.8} (\emph{ii}), we have that $\Phi _{(0)}(\mathbf y) = \mathbf y$ in $\Lambda \setminus N_3$, where $N_3$ is a Borel set with $\nu _0[N_3] = 0$.

Therefore, $\mathbf F_{(0)}(\mathbf x) = \mathbf x$ for all $\mathbf x \in \Omega \setminus [N_0 \cup N_1 \cup \mathbf T_0^{-1}(N_2 \cup N_3)]$.  We must now show that $\sigma _0\left [ \Omega \cap \mathbf T_0^{-1}(N_2 \cup N_3)\right ] = 0$.

We have that $\nu _0 [N_2 \cup N_3] = 0$.  Then, since $\mathbf T_0 \textrm{\#} \sigma _0 = \nu _0$, we obtain
\[ \sigma _0\left [ \Omega \cap \mathbf T_0^{-1}(N_2 \cup N_3)\right ] = \int_{\mathbf T_0^{-1}(N_2 \cup N_3)} \! \sigma _0(\mathbf x) \, d \mathbf x = \int_{N_2 \cup N_3} \! \nu _0(\mathbf y) \, d \mathbf y= \nu _0 [N_2 \cup N_3] = 0.\]
\end{proof}
 
Next, we prove that property (\emph{ii}) of Definition \ref{2.4} is satisfied.

\begin{proposition}\label{2.14}
For every $t > 0$, the map $\mathbf F_{(t)} : \Omega \rightarrow \Omega $ as defined in (\ref{phyF}) satisfies $\mathbf F_{(t)} \textrm{\#} \sigma _0 = \sigma $.
\end{proposition}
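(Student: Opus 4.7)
The strategy is to recognize $\mathbf F_{(t)}$ as a composition of three maps, each of which is known to push one measure forward to the next in the chain $\sigma_0 \to \nu_0 \to \nu \to \sigma$. Specifically, I would verify the three links:
\begin{enumerate}
\item $\mathbf T_0 \,\textrm{\#}\, \sigma_0 = \nu_0$, which holds by the initial data specification in (\ref{initial}) together with the definition $\nu_0 = \mathbf T_0 \textrm{\#}\sigma_0$ fixed at the start of Section \ref{dual section}.
\item $\Phi_{(t)} \,\textrm{\#}\, \nu_0 = \nu(t,\cdot)$, which is exactly (\ref{nupush}) in Proposition \ref{2.11}.
\item $\mathbf S_{(t)} \,\textrm{\#}\, \nu(t,\cdot) = \sigma(t,\cdot)$, which follows from Theorem \ref{3.3}: $\mathbf S_{(t)} = \mathbf T_{(t)}^{-1}$ is the optimal transport map from $\nu(t,\cdot)$ to $\sigma(t,\cdot)$, and in particular it pushes $\nu$ forward to $\sigma$.
\end{enumerate}

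With these three facts in hand, the conclusion is obtained by the elementary functorial property $(G \circ F)\textrm{\#}\mu = G\textrm{\#}(F\textrm{\#}\mu)$, applied twice:
\[
\mathbf F_{(t)} \textrm{\#} \sigma_0 \;=\; \mathbf S_{(t)} \textrm{\#}\bigl( \Phi_{(t)} \textrm{\#} ( \mathbf T_0 \textrm{\#} \sigma_0)\bigr) \;=\; \mathbf S_{(t)} \textrm{\#}\bigl(\Phi_{(t)} \textrm{\#} \nu_0\bigr) \;=\; \mathbf S_{(t)} \textrm{\#}\nu(t,\cdot) \;=\; \sigma(t,\cdot).
\]
To make this rigorous, I would verify the defining identity $\sigma(t,A) = \sigma_0(\mathbf F_{(t)}^{-1}(A))$ for arbitrary Borel $A \subset \Omega$ by unwinding preimages: $\mathbf F_{(t)}^{-1}(A) = \mathbf T_0^{-1}\bigl(\Phi_{(t)}^{-1}(\mathbf S_{(t)}^{-1}(A))\bigr)$, then apply the three push-forward properties in sequence.

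The only delicate point is measurability and the handling of null sets: the maps $\mathbf T_0$, $\Phi_{(t)}$, $\mathbf S_{(t)}$ are only defined up to $\sigma_0$-, $\nu_0$- and $\nu$-null sets respectively, but Lemma \ref{2.12} has already shown that $\mathbf F_{(t)}$ is well defined $\sigma_0$-a.e. and is Borel, so the exceptional sets encountered while chaining the three push-forwards have $\sigma_0$-measure zero and cause no problem. I expect this verification to be completely routine, as all the genuine analytic content (existence of $\mathbf S_{(t)}$, the transport identity (\ref{nupush}), the absolute continuity of $\nu_0, \nu, \sigma_0, \sigma$ needed to invoke (\ref{appendix})) has already been established in the earlier sections.
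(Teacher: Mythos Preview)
Your proposal is correct and follows essentially the same approach as the paper: the paper also decomposes $\mathbf F_{(t)}$ into the three maps $\mathbf T_0$, $\Phi_{(t)}$, $\mathbf S_{(t)}$ and chains the three push-forward identities $\mathbf T_0\#\sigma_0=\nu_0$, $\Phi_{(t)}\#\nu_0=\nu$, $\mathbf S_{(t)}\#\nu=\sigma$. The only cosmetic difference is that the paper carries this out by testing against $\varphi\in C(\mathbb{R}^3)$ and invoking the change-of-variables formula (\ref{appendix}) three times, rather than appealing directly to the functoriality of push-forward or unwinding preimages.
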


\begin{proof}
In order to prove that $\mathbf F_{(t)} \textrm{\#} \sigma _0 = \sigma $, we must show that, for any $\varphi \in C(\mathbb{R}^3)$,
\begin{equation*}
\int_{\Omega } \! \varphi (\mathbf F_{(t)}(\mathbf x))\sigma _0(\mathbf x) \, d \mathbf x = \int_{\Omega } \! \varphi (\mathbf X)\sigma (\mathbf X) \, d \mathbf X.
\end{equation*}
Then the result will follow from (\ref{appendix}). 

Let $\varphi \in C(\mathbb{R}^3)$.  From the definition of $\mathbf F_{(t)}$ in (\ref{phyF}) we have
\[ \int_{\Omega } \! \varphi (\mathbf F_{(t)}(\mathbf x))\sigma _0(\mathbf x) \, d \mathbf x = \int_{\Omega } \! \varphi \circ \mathbf S_{(t)} \circ \Phi _{(t)} \circ \mathbf T_0(\mathbf x) \sigma _0(\mathbf x) \, d \mathbf x.\]
Then, using $\mathbf T_0 \textrm{\#} \sigma _0 = \nu _0$ 
 we apply (\ref{appendix}) to obtain
\[ \int_{\Omega } \! \varphi \circ \mathbf S_{(t)} \circ \Phi _{(t)} \circ \mathbf T_0(\mathbf x) \sigma _0(\mathbf x) \, d \mathbf x = \int_{\Lambda } \! \varphi \circ \mathbf S_{(t)} \circ \Phi _{(t)} (\mathbf y) \nu _0(\mathbf y) \, d \mathbf y,\]
since $\varphi \circ \mathbf S_{(t)} \circ \Phi _{(t)} \in L^\infty (\Lambda )$.  Then, since $\varphi \circ \mathbf S_{(t)} \in L^\infty (\Lambda)$ 
 we can use (\ref{nupush}) and apply (\ref{appendix}) to get
\[ \int_{\Lambda } \! \varphi \circ \mathbf S_{(t)} \circ \Phi _{(t)} (\mathbf y) \nu _0(\mathbf y) \, d \mathbf y = \int_{\Lambda } \! \varphi \circ \mathbf S_{(t)}(\mathbf Y)\nu (\mathbf Y) \, d \mathbf Y.\]
Finally, since $\mathbf S_{(t)}$ satisfies $\mathbf S_{(t)}\textrm{\#} \nu  = \sigma $, we have that
\[\int_{\Lambda  } \! \varphi \circ \mathbf S_{(t)}(\mathbf Y)\nu (\mathbf Y) \, d \mathbf Y = \int_{\Omega } \! \varphi (\mathbf X)\sigma (\mathbf X) \, d \mathbf X.\]

Thus, we have shown that
\begin{eqnarray*}
\int_{\Omega } \! \varphi (\mathbf F_{(t)}(\mathbf x))\sigma _0(\mathbf x) \, d \mathbf x &&= \int_{\Omega } \! \varphi \circ \mathbf S_{(t)} \circ \Phi _{(t)} \circ \mathbf T_0(\mathbf x) \sigma _0(\mathbf x) \, d \mathbf x\\
&&= \int_{\Lambda  } \! \varphi \circ \mathbf S_{(t)} \circ \Phi _{(t)} (\mathbf y) \nu _0(\mathbf y) \, d \mathbf y\\
&&= \int_{\Lambda  } \! \varphi \circ \mathbf S_{(t)}(\mathbf Y)\nu (\mathbf Y) \, d \mathbf Y\\
&&= \int_{\Omega } \! \varphi (\mathbf X)\sigma (\mathbf X) \, d \mathbf X,
\end{eqnarray*}
as required.
\end{proof}

We now prove that (\ref{Fcts}) holds for all $q \in [1, \infty )$.

\begin{proposition}\label{2.15}
For any $t_0 \in [0, \tau )$ and any $q \in [1, \infty )$,
\[ \lim _{t \rightarrow t_0, t \in [0, \tau )} \int_{\Omega } \! |\mathbf F_{(t)}(\mathbf x) - \mathbf F_{(t_0)}(\mathbf x)|^q \sigma _0(\mathbf x)\, d \mathbf x = 0.\]
\end{proposition}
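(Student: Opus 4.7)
The plan is to transfer the integral to dual coordinates and then appeal to the dominated convergence theorem. Using $\mathbf T_0\textrm{\#}\sigma_0=\nu_0$ together with the identity (\ref{appendix}), the change of variable $\mathbf y=\mathbf T_0(\mathbf x)$ gives
\begin{equation*}
\int_\Omega |\mathbf F_{(t)}(\mathbf x) - \mathbf F_{(t_0)}(\mathbf x)|^q \sigma_0(\mathbf x)\,d\mathbf x = \int_\Lambda |\mathbf S_{(t)}(\Phi_{(t)}(\mathbf y)) - \mathbf S_{(t_0)}(\Phi_{(t_0)}(\mathbf y))|^q \nu_0(\mathbf y)\,d\mathbf y.
\end{equation*}
Since $\mathbf S_{(s)}$ takes values in the bounded set $\Omega$ for every $s$, the integrand is dominated by the constant $(2\,\mathrm{diam}\,\Omega)^q$; consequently, by dominated convergence it suffices to prove the pointwise convergence $\mathbf S_{(t)}(\Phi_{(t)}(\mathbf y))\to \mathbf S_{(t_0)}(\Phi_{(t_0)}(\mathbf y))$ for $\nu_0$-a.e.\ $\mathbf y\in\mathbf T_0(\Omega)$.

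For this pointwise convergence I would combine two ingredients. First, Lemma \ref{2.8}(\emph{i}) gives $\Phi_{(t)}(\mathbf y)\to\Phi_{(t_0)}(\mathbf y)$ for $\nu_0$-a.e.\ $\mathbf y$ as $t\to t_0$. Second, the Lipschitz continuity of $t\mapsto\nu_{(t)}$ in the Wasserstein metric (from Theorem \ref{6.6}) coupled with the continuous dependence of the energy minimiser $\sigma_{(t)}$ on $\nu_{(t)}$ (Theorem \ref{4.1} and \cite[Lemma 4.3]{maroofi}) ensures that both marginals $\nu_{(t)}$ and $\sigma_{(t)}$ converge narrowly as $t\to t_0$. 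A stability result for the optimal transport maps with cost $\tilde c$---the analogue of Lemma \ref{amb3.3} for this twist-type cost, available through Theorem \ref{3.3} and the equi-Lipschitz character of the associated Kantorovich dual potentials---then produces a subsequence $t_n\to t_0$ along which $\mathbf S_{(t_n)}\to \mathbf S_{(t_0)}$ $\mathcal L^3$-a.e.\ in $\mathbb R^3$. Since $\mathbf S_{(t_0)}$ coincides $\mathcal L^3$-a.e.\ with a map continuous at $\mathcal L^3$-a.e.\ point (the gradient of a $\tilde c$-concave potential), and since $\Phi_{(t_n)}$ is Lebesgue measure-preserving by Lemma \ref{2.8}(\emph{v}), the composition $\mathbf S_{(t_n)}\circ\Phi_{(t_n)}$ converges to $\mathbf S_{(t_0)}\circ\Phi_{(t_0)}$ $\nu_0$-a.e. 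A standard sub-subsequence argument then upgrades this to convergence along the full net $t\to t_0$.

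The principal obstacle is coordinating the convergence of the outer map $\mathbf S_{(t_n)}$ with that of the inner argument $\Phi_{(t_n)}(\mathbf y)$ for the same $\mathbf y$. Because $\mathbf S_{(t_n)}$ converges only $\mathcal L^3$-a.e., one must rule out that $\Phi_{(t_n)}(\mathbf y)$ persistently lands in the exceptional null set. This is precisely where the Lebesgue measure-preservation of $\Phi_{(t_n)}$ from Lemma \ref{2.8}(\emph{v}) is essential: the preimage under $\Phi_{(t_n)}$ of any $\mathcal L^3$-null set is $\mathcal L^3$-null, so removing countably many such sets (one for each index $n$ plus the continuity set of $\mathbf S_{(t_0)}$) leaves the desired a.e.\ convergence in $\mathbf y$. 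Verifying the twist-type stability for the cost $\tilde c$ is the most delicate technical step; once this is in hand, everything else is a routine dominated-convergence argument.
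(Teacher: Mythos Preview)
Your opening move---pushing the integral to dual variables via $\mathbf T_0\#\sigma_0=\nu_0$ and then appealing to dominated convergence---is exactly how the paper begins, and your treatment of the piece $\mathbf S_{(t_0)}\circ\Phi_{(t)}-\mathbf S_{(t_0)}\circ\Phi_{(t_0)}$ (continuity of $\mathbf S_{(t_0)}$ at a.e.\ point, measure-preservation of $\Phi_{(t_0)}$) matches the paper's handling of its term $I_2$.

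There is, however, a genuine gap in your treatment of the ``diagonal'' contribution $\mathbf S_{(t_n)}(\Phi_{(t_n)}(\mathbf y))-\mathbf S_{(t_0)}(\Phi_{(t_n)}(\mathbf y))$. Knowing that $\mathbf S_{(t_n)}(\mathbf z)\to\mathbf S_{(t_0)}(\mathbf z)$ for $\mathcal L^3$-a.e.\ \emph{fixed} $\mathbf z$ does not yield $\mathbf S_{(t_n)}(\mathbf z_n)-\mathbf S_{(t_0)}(\mathbf z_n)\to 0$ when $\mathbf z_n=\Phi_{(t_n)}(\mathbf y)$ moves with $n$. Your null-set removal argument only ensures that each $\mathbf z_n$ lies in the set where $\lim_m\mathbf S_{(t_m)}(\mathbf z_n)=\mathbf S_{(t_0)}(\mathbf z_n)$; it says nothing about the diagonal sequence $\mathbf S_{(t_n)}(\mathbf z_n)$. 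Measure-preservation of $\Phi_{(t_n)}$ cannot close this gap at the level of pointwise convergence, so the dominated-convergence step is not justified as written.

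The paper sidesteps the diagonal issue by a triangle-inequality split
\[
|\mathbf S_{(t)}\circ\Phi_{(t)}-\mathbf S_{(t_0)}\circ\Phi_{(t_0)}|\leq|\mathbf S_{(t)}\circ\Phi_{(t)}-\mathbf S_{(t_0)}\circ\Phi_{(t)}|+|\mathbf S_{(t_0)}\circ\Phi_{(t)}-\mathbf S_{(t_0)}\circ\Phi_{(t_0)}|,
\]
and treats the first piece $I_1$ \emph{in integral form} rather than pointwise. The key observation is $\Phi_{(t)}\#\nu_0=\nu_{(t)}$ from Proposition~\ref{2.11}: changing variables turns $I_1$ into $\int_\Lambda|\mathbf S_{(t)}(\mathbf y)-\mathbf S_{(t_0)}(\mathbf y)|^q\nu_{(t)}(\mathbf y)\,d\mathbf y$, i.e.\ the maps are now evaluated at a \emph{fixed} argument. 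H\"older's inequality together with the $t$-independent bound $\|\nu_{(t)}\|_{L^r}=\|\nu_0\|_{L^r}$ (from (\ref{2.61}) and Lemma~\ref{2.8}(\emph{iv})) then reduces $I_1\to 0$ to the $L^{qr'}$-convergence $\|\mathbf S_{(t)}-\mathbf S_{(t_0)}\|_{L^{qr'}(\Lambda)}\to 0$. This split is precisely what converts the problematic moving-argument limit into two tractable ones, and it is the missing idea in your proposal.
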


\begin{proof}
By Lemma \ref{2.12} we have that, for any $t \in [0, \tau )$, (\ref{phyF}) holds $\sigma _0-a.e.$ in $\Omega$.  Thus, since $\mathbf T_0 \textrm{\#} \sigma _0 = \nu _0$, we see that, for any $t, t_0 \in [0, \tau )$,
\begin{eqnarray*}
\int_{\Omega } \! &&\left| \mathbf F_{(t)}(\mathbf x)- \mathbf F_{(t_0)}(\mathbf x) \right| ^q \sigma _0(\mathbf x) \, d \mathbf x = \int_{\Omega } \! \left| \mathbf S_{(t)} \circ \Phi _{(t)} \circ \mathbf T_0(\mathbf x)-\mathbf S_{(t_0)} \circ \Phi _{(t_0)} \circ \mathbf T_0(\mathbf x)\right| ^q \sigma _0(\mathbf x) \, d \mathbf x\\
&&= \int_{\Lambda  } \! \left| \mathbf S_{(t)} \circ \Phi _{(t)}(\mathbf y) - \mathbf S_{(t_0)} \circ \Phi _{(t_0)}(\mathbf y)\right| ^q \nu _0(\mathbf y) \, d \mathbf y\\
&&= \int_{\Lambda  } \! \left| \mathbf S_{(t)} \circ \Phi _{(t)}(\mathbf y) - \mathbf S_{(t_0)} \circ \Phi _{(t)}(\mathbf y) + \mathbf S_{(t_0)} \circ \Phi _{(t)}(\mathbf y) - \mathbf S_{(t_0)} \circ \Phi _{(t_0)}(\mathbf y)\right| ^q \nu _0(\mathbf y) \, d \mathbf y\\
&&\leqslant C\int_{\Lambda  } \! \left| \mathbf S_{(t)} \circ \Phi _{(t)}(\mathbf y) - \mathbf S_{(t_0)} \circ \Phi _{(t)}(\mathbf y)\right| ^q \nu _0(\mathbf y) \, d \mathbf y \\
&&\quad + C\int_{\Lambda  } \! \left| \mathbf S_{(t_0)} \circ \Phi _{(t)}(\mathbf y) - \mathbf S_{(t_0)} \circ \Phi _{(t_0)}(\mathbf y)\right| ^q \nu _0(\mathbf y) \, d \mathbf y\\
&&=: C(I_1 + I_2).
\end{eqnarray*}

Firstly, we show that $I_1 \rightarrow 0$ as $t \rightarrow t_0$.  Note that, using (\ref{2.61}) and Lemma \ref{2.8} (\emph{iv}) 
 we have that $\left\| \nu _{(t)}\right\| _{L^q(\Lambda )} = \left\| \nu _0 \right\| _{L^q(\Lambda )}$ for $t \in (0, \tau )$, so that  $\left\| \nu _{(t)}\right\| _{L^q(\Lambda )}$ is independent of $t$.  Let $r$ and $r'$ be conjugate exponents (i.e. $\frac{1}{r} + \frac{1}{r'} = 1$), with $1 < r < \infty $.  Then, we can use (\ref{nupush}) and H\"{o}lder's inequality to estimate
\begin{eqnarray*}
I_1 &&= \int_{\Lambda  } \! \left| \mathbf S_{(t)} \circ \Phi _{(t)}(\mathbf y) - \mathbf S_{(t_0)} \circ \Phi _{(t)}(\mathbf y)\right| ^q \nu _0(\mathbf y) \, d \mathbf y \\
&&= \int_{\Lambda  } \! \left| \mathbf S_{(t)}(\mathbf y) - \mathbf S_{(t_0)}(\mathbf y)\right| ^q \nu _{(t)} (\mathbf y) \, d \mathbf y \\
&&\leqslant \left \{ \int_{\Lambda  } \! \left| \mathbf S_{(t)}(\mathbf y) - \mathbf S_{(t_0)}(\mathbf y)\right| ^{qr'} \, d \mathbf y \right \} ^{\frac{1}{r'}} \left \{ \int_{\Lambda  } \! \left| \nu _{(t)} (\mathbf y) \right| ^r  \, d \mathbf y \right \} ^{\frac{1}{r}}\\
&&= \left\| \mathbf S_{(t)} - \mathbf S_{(t_0)}\right\|^q_{qr'}\left\| \nu _{(t)}\right\|_r\\
&&= \left\| \mathbf S_{(t)} - \mathbf S_{(t_0)}\right\|^q_{qr'}\left\| \nu _0 \right\|_r \rightarrow 0 \textrm{ as } t \rightarrow t_0.
\end{eqnarray*}


Next, we show that $I_2 \rightarrow 0$ as $t \rightarrow t_0$.  Since $\mathbf S_{(t)} \in \Omega $ for each $t$ and for $\nu -a.e.$ $\mathbf y$, 
 then, by the dominated convergence theorem, it remains to prove that for every $t_0$,
\begin{equation}\label{2.76}
\mathbf S_{(t_0)} \circ \Phi _{(t)}(\mathbf y) - \mathbf S_{(t_0)} \circ \Phi _{(t_0)}(\mathbf y) \rightarrow 0 \quad \textrm{ as } t \rightarrow t_0
\end{equation}
for $\nu -a.e.$ $\mathbf y \in \Lambda $.  First we note that, since $\Phi _{(t)}$ is measure preserving, then it follows from Lemma \ref{2.8} (\emph{i}), and the fact that $\wt \in L^\infty ([0, \tau )\times \mathbb{R}^3)$ by (\ref{modbounds}), that
\[ \Phi _{(t)}(\mathbf y) \rightarrow \Phi _{(t_0)}(\mathbf y) \quad \textrm{ as } t \rightarrow t_0\]
in $[0, \tau ]$ for $\nu -a.e.$ $\mathbf y \in \Lambda $.  If $\mathbf y$ is such a point and if, in addition, $\Phi _{(t_0)}(\mathbf y)$ is a point of continuity for $\mathbf S_{(t_0)}$, then convergence in (\ref{2.76}) holds at $\mathbf y$.  Since $\Phi _{(t_0)}$ is measure preserving, it follows that $\Phi _{(t_0)}(\mathbf y)$ is a point of continuity for $\mathbf S_{(t_0)}$ for $\nu -a.e.$ $\mathbf y$.  Thus, (\ref{2.76}) holds for $\nu -a.e.$ $\mathbf y \in \Lambda $.
\end{proof}

\begin{lemma}\label{2.16}
Let $\mathbf Z$ be defined as in (\ref{Z}) with $\mathbf F$ defined as in (\ref{phyF}).  Then, for all $t \in [0, \tau )$,
\[ \mathbf Z_{(t)} (\mathbf x) = \Phi _{(t)} \circ \mathbf T_0(\mathbf x)\]
for $\sigma _0-a.e.$ $\mathbf x \in \Omega $.
\end{lemma}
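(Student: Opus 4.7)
The plan is to prove the identity by direct substitution, using that $\mathbf T_{(t)}$ and $\mathbf S_{(t)}$ are inverses ($\nu$-a.e. and $\sigma$-a.e. respectively) by Theorem \ref{3.3}\emph{(iii)}. Substituting the definition of $\mathbf F$ from (\ref{phyF}) into (\ref{Z}) formally gives
\[
\mathbf Z_{(t)}(\mathbf x) \;=\; \mathbf T_{(t)}\!\left( \mathbf F_{(t)}(\mathbf x) \right) \;=\; \mathbf T_{(t)} \circ \mathbf S_{(t)} \circ \Phi_{(t)} \circ \mathbf T_0(\mathbf x),
\]
so the whole content of the lemma is that $\mathbf T_{(t)} \circ \mathbf S_{(t)}$ reduces to the identity at the point $\Phi_{(t)}(\mathbf T_0(\mathbf x))$ for $\sigma_0$-a.e.\ $\mathbf x$.

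The main step is to verify the appropriate null-set calculation. By Theorem \ref{3.3}\emph{(iii)} fix a Borel set $N \subset \Lambda$ with $\nu[N]=0$ such that $\mathbf T_{(t)} \circ \mathbf S_{(t)}(\mathbf y) = \mathbf y$ for every $\mathbf y \in \Lambda \setminus N$. The set of points where the identity may fail is contained in
\[
E \;:=\; \{\mathbf x \in \Omega \setminus N_0^1 \,:\, \Phi_{(t)}(\mathbf T_0(\mathbf x)) \in N\},
\]
where $N_0^1$ is the Borel null set from Lemma \ref{2.12} on which $\mathbf T_0$ is not defined. As in the proof of Lemma \ref{2.12}, $E$ is Borel, and we compute its $\sigma_0$-measure by a twofold application of the push-forward formula (\ref{appendix}): using $\mathbf T_0 \textrm{\#} \sigma_0 = \nu_0$ and then $\Phi_{(t)} \textrm{\#} \nu_0 = \nu$ from Proposition \ref{2.11} (equation (\ref{nupush})), we obtain
\[
\sigma_0[E] \;=\; \sigma_0\!\left[\mathbf T_0^{-1}\!\left(\Phi_{(t)}^{-1}(N)\right)\right] \;=\; \nu_0\!\left[\Phi_{(t)}^{-1}(N)\right] \;=\; \nu[N] \;=\; 0.
\]

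Hence for $\sigma_0$-a.e.\ $\mathbf x \in \Omega$ the point $\mathbf y := \Phi_{(t)}(\mathbf T_0(\mathbf x))$ lies in $\Lambda \setminus N$, so that $\mathbf T_{(t)} \circ \mathbf S_{(t)}(\mathbf y) = \mathbf y$, and therefore
\[
\mathbf Z_{(t)}(\mathbf x) \;=\; \mathbf T_{(t)} \circ \mathbf S_{(t)}\!\left(\Phi_{(t)} \circ \mathbf T_0(\mathbf x)\right) \;=\; \Phi_{(t)} \circ \mathbf T_0(\mathbf x),
\]
which is the desired identity. There is no genuine obstacle here: every ingredient (the inverse relation of $\mathbf T$ and $\mathbf S$, the push-forward $\Phi_{(t)} \textrm{\#} \nu_0 = \nu$, and the push-forward $\mathbf T_0 \textrm{\#} \sigma_0 = \nu_0$) is already available, and the only care required is to track the null sets through the two compositions exactly as in Lemma \ref{2.12}.
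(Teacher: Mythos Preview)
Your proof is correct and follows essentially the same approach as the paper's own proof: formal substitution followed by the null-set computation via the push-forward identities $\mathbf T_0\textrm{\#}\sigma_0=\nu_0$ and $\Phi_{(t)}\textrm{\#}\nu_0=\nu$. The only cosmetic difference is that the paper separates out an additional step, using Proposition~\ref{2.14} ($\mathbf F_{(t)}\textrm{\#}\sigma_0=\sigma$) to first confirm that $\mathbf T_{(t)}$ is defined at $\mathbf F_{(t)}(\mathbf x)$ for $\sigma_0$-a.e.\ $\mathbf x$, whereas you absorb this into your single $\nu$-null set $N$ (which can be taken to contain the points where $\mathbf S_{(t)}$ or $\mathbf T_{(t)}\circ\mathbf S_{(t)}$ fail to be defined); both organizations are valid.
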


\begin{proof}
Using (\ref{Z}), we have that $\mathbf Z_{(t)} = \mathbf T_{(t)} \circ \mathbf F_{(t)}$.  Therefore, we need to justify the following formal computation:
\[ \mathbf T_{(t)} \circ \mathbf F_{(t)} = \mathbf T_{(t)} \circ \mathbf S_{(t)} \circ \Phi _{(t)} \circ \mathbf T_0 = \Phi _{(t)} \circ \mathbf T_0 \]
since, by (\ref{comdual3}), $\mathbf T_{(t)} \circ \mathbf S_{(t)}$ is the identity on the support of $\nu $.  



Now we make this argument rigorous.  Since $\mathbf T$ exists and is unique $\sigma -a.e.$ in $\Omega $ for every $t \in [0, \tau )$, we have that $\mathbf T$ exists and is unique in $\Omega \setminus N^1$ for every $t \in [0, \tau )$, where $N^1$ is a Borel subset of $\Omega $ with $\sigma [N^1] = 0$. 
Then, by Proposition \ref{2.14} we have that
\[ \sigma _0 \left [ \Omega \cap \mathbf F_{(t)}^{-1}(N^1) \right ] = \int_{\mathbf F_{(t)}^{-1}(N^1)} \! \sigma _0(\mathbf x) \, d \mathbf x = \int_{N^1} \! \sigma (\mathbf X) \, d \mathbf X = 0.\]
Now, using Lemma \ref{2.12}, we conclude that
\[ \mathbf Z_{(t)} (\mathbf x) = \mathbf T_{(t)} \circ \mathbf S_{(t)} \circ \Phi _{(t)} \circ \mathbf T_0(\mathbf x)\]
for $\mathbf x \in \Omega \setminus \widetilde{N}$ where $\sigma _0[\widetilde{N}] = 0$.  

Let $\widetilde{M} = \{ \mathbf y  \in \Lambda  \textrm{ } : \textrm{ } \mathbf T(t, \mathbf S(t, \mathbf y)) \neq \mathbf y\}$, 
 Then $\widetilde{M}$ is a Borel set.  Now, the proof of the lemma will be completed if we show that
\begin{equation}\label{2.77}
\sigma _0\left [ \{ \mathbf x \in \Omega \setminus \widetilde{N} \textrm{ } : \textrm{ } \Phi _{(t)} \circ \mathbf T_0(\mathbf x) \in \widetilde{M}\}\right ] = 0.
\end{equation}

From Theorem \ref{3.3} (\emph{iii}) we have that, for any $t \in [0, \tau )$,
\[ \mathbf T_{(t)} \circ \mathbf S_{(t)} (\mathbf y) = \mathbf y \quad \textrm{ for } \nu -a.e. \textrm{ } \mathbf y \in \Lambda .\]
Thus, we have 
\[\int_{\widetilde{M}} \! \nu (\mathbf y) \, d \mathbf y = 0\]
for any $t \in [0, \tau )$.  Therefore, using that $\sigma _0[\widetilde{N}] = 0$, which implies that $\mathbf T_0 \textrm{\#} \sigma _0 = \nu _0$ for all $\mathbf x \in \Omega \setminus \widetilde{N}$, and also using 
(\ref{nupush}), we obtain for any $t \in [0, \tau )$
\begin{eqnarray*}
\sigma _0 \left [ \{ \mathbf x \in \Omega \setminus \widetilde{N} \textrm{ } : \textrm{ } \Phi _{(t)} \circ \mathbf T_0(\mathbf x) \in \widetilde{M} \} \right ] &&= \sigma _0 \left [ \{ \mathbf x \in \Omega \setminus \widetilde{N} \textrm{ } : \textrm{ } \mathbf T_0(\mathbf x) \in \Phi ^*_{(t)}(\widetilde{M}) \} \right ] \\
&&= \int_{\mathbf T_0^{-1}(\Phi ^{-1}_{(t)}(\widetilde{M}))} \! \sigma _0(\mathbf x) \, d \mathbf x\\
&&= \int_{\Phi ^{-1}_{(t)}(\widetilde{M}) } \! \nu _0(\mathbf y) \, d \mathbf y \\
&&= \int_{\widetilde{M}} \! \nu (\mathbf y) \, d \mathbf y = 0,
\end{eqnarray*}
as required.
\end{proof}

We now show existence of the map $\mathbf F^*$ from Definition \ref{2.4} (\emph{iii}).

\begin{proposition}\label{2.17}
The map $\mathbf F$ as defined in (\ref{phyF}) satisfies property (\emph{iii}) of Definition \ref{2.4}.
\end{proposition}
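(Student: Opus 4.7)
The plan is to define $\mathbf F^*$ by symmetry with (\ref{phyF}), namely
\[ \mathbf F^*_{(t)} := \mathbf S_0 \circ \Phi^*_{(t)} \circ \mathbf T_{(t)}, \]
where $\mathbf S_0 = \mathbf T_0^{-1}$ is the optimal transport map from $\nu_0$ to $\sigma_0$ guaranteed by Theorem \ref{3.3}, $\Phi^*_{(t)}$ is the Lebesgue-measure preserving inverse provided by Lemma \ref{2.8} (\emph{iv}), and $\mathbf T_{(t)}$ is the optimal transport map from $\sigma$ to $\nu$. The first task is to show that the composition is well-defined $\sigma$-a.e.\ in $\Omega$ and is Borel. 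This is essentially a replay of the argument in Lemma \ref{2.12}, with the roles of $(\mathbf T_0,\sigma_0)$ and $(\mathbf T_{(t)},\sigma)$ and of $\Phi_{(t)}$ and $\Phi^*_{(t)}$ interchanged: excluding a $\sigma$-null set where $\mathbf T_{(t)}$ is not defined, pulling back a $\nu$-null set through $\mathbf T_{(t)}$ using $\mathbf T_{(t)}\textrm{\#}\sigma = \nu$, and pulling back a $\nu_0$-null set through $\Phi^*_{(t)}$ using the measure-preservation property together with (\ref{2.61}) which yields $\Phi^*_{(t)}\textrm{\#}\nu = \nu_0$.

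Next I would verify the inverse identities. Composing,
\[ \mathbf F^*_{(t)} \circ \mathbf F_{(t)} = \mathbf S_0 \circ \Phi^*_{(t)} \circ \mathbf T_{(t)} \circ \mathbf S_{(t)} \circ \Phi_{(t)} \circ \mathbf T_0, \]
and applying Theorem \ref{3.3} (\emph{iii}) to collapse $\mathbf T_{(t)}\circ \mathbf S_{(t)}$ to the identity on the support of $\nu$, followed by Lemma \ref{2.8} (\emph{iv}) to collapse $\Phi^*_{(t)}\circ \Phi_{(t)}$ to the identity on the support of $\nu_0$, followed by $\mathbf S_0 \circ \mathbf T_0 = \textit{\textbf{id}}$ $\sigma_0$-a.e. (again by Theorem \ref{3.3}), yields $\mathbf F^*_{(t)}\circ \mathbf F_{(t)}(\mathbf x) = \mathbf x$ for $\sigma_0$-a.e.\ $\mathbf x$. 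The identity $\mathbf F_{(t)}\circ \mathbf F^*_{(t)}(\mathbf x) = \mathbf x$ $\sigma$-a.e.\ follows from exactly the same chain of identifications, now reading from the opposite direction, using $\mathbf S_{(t)}\circ \mathbf T_{(t)} = \textit{\textbf{id}}$ $\sigma$-a.e., $\Phi_{(t)}\circ \Phi^*_{(t)} = \textit{\textbf{id}}$ $\nu$-a.e., and $\mathbf T_0\circ \mathbf S_0 = \textit{\textbf{id}}$ $\nu_0$-a.e.

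Finally, the pushforward property $\mathbf F^*_{(t)}\textrm{\#}\sigma = \sigma_0$ follows by chaining three pushforward identities: $\mathbf T_{(t)}\textrm{\#}\sigma = \nu$ (Theorem \ref{3.3}), $\Phi^*_{(t)}\textrm{\#}\nu = \nu_0$ (which is (\ref{2.61}) rewritten via (\ref{appendix})), and $\mathbf S_0\textrm{\#}\nu_0 = \sigma_0$ (Theorem \ref{3.3} again). In practice this is cleanest to verify as in the proof of Proposition \ref{2.14}, by testing against an arbitrary $\varphi \in C(\mathbb R^3)$ and applying (\ref{appendix}) three times.

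The only subtle point is the careful bookkeeping of null sets in the definition step, because each of the three maps being composed is only defined almost everywhere with respect to its own reference measure and one must verify that the preimages of the exceptional sets under the preceding compositions remain negligible with respect to $\sigma$. Once this is handled as in Lemma \ref{2.12}, the remaining verifications reduce to algebraic manipulations of the optimal transport identities and (\ref{2.61}).
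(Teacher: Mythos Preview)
Your proposal is correct and follows essentially the same route as the paper: define $\mathbf F^*_{(t)} = \mathbf S_{(0)} \circ \Phi^*_{(t)} \circ \mathbf T_{(t)}$, replay the null-set argument of Lemma \ref{2.12} to show it is well-defined and Borel, then collapse the six-fold composition using Theorem \ref{3.3}(\emph{iii}) and Lemma \ref{2.8}(\emph{iv}). The only presentational difference is that where you collapse $\mathbf T_{(t)}\circ \mathbf S_{(t)}$ directly (with the attendant null-set bookkeeping), the paper invokes Lemma \ref{2.16}, which packages precisely that step as the identity $\mathbf T_{(t)}\circ \mathbf F_{(t)} = \Phi_{(t)}\circ \mathbf T_0$ $\sigma_0$-a.e.; and you add an explicit check of $\mathbf F^*_{(t)}\textrm{\#}\sigma = \sigma_0$, which the paper leaves implicit.
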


\begin{proof}
As with Lemma \ref{2.12}, we can show that for every $t \in [0, \tau )$ the expression $\mathbf S_{(0)} \circ \Phi ^*_{(t)} \circ \mathbf T_{(t)}(\mathbf x)$ is defined for $\sigma -a.e.$ $\mathbf x \in \Omega $, and the map
\[ \mathbf F^*_{(t)} = \mathbf S_{(0)} \circ \Phi ^*_{(t)} \circ \mathbf T_{(t)} \]
is Borel: 
\begin{quote}Since $\mathbf T$ exists and is unique $\sigma -a.e.$ in $\Omega $ for every $t \in [0, \tau )$, we have that $\mathbf T$ exists and is unique on $\Omega \setminus N^1$ for every $t \in [0, \tau )$, where $N^1$ is a Borel subset of $\Omega $ with $\sigma [N^1] = 0$.  
Also, since $\mathbf S_{(0)}$ exists and is unique $\nu _0-a.e.$ on $\Lambda $, we have that $\mathbf S_{(0)}$ exists and is unique on $\Lambda \setminus N_0^2$, where $N_0^2$ is a Borel subset of $\Lambda $ with $\nu _0[N_0^2] = 0$.  Then, we have that $\mathbf S_{(0)} \circ \Phi ^*_{(t)} \circ \mathbf T_{(t)}(\mathbf x)$ is defined for all
\[ \mathbf x \in \Omega \setminus (N^1 \cup M),\]
where
\[ M = \left\{ \mathbf X \in (\Omega \setminus N^1) \textrm{ } : \textrm{ } \Phi ^*(t, \mathbf T_{(t)}(\mathbf X)) \in N_0^2 \right \} .\]
Note that, from its definition, $M$ is a Borel set.  We must now show that $\sigma [M] = 0$ for every $t \in [0, \tau )$. 
Fix $t \in [0, \tau )$.  
 Then, using that $\mathbf T_{(t)}\textrm{\#}\sigma = \nu $ and thus $\mathbf T_{(t)}\textrm{\#}\sigma = \nu $ for all $\mathbf x \in \Omega \setminus N^1$, and using (\ref{nupush}), we can apply (\ref{appendix}) and compute
\begin{eqnarray*}
\sigma  \left [ M \right ] &&=  \sigma  \left [ \left\{ \mathbf x \in \Omega \setminus N^1 \, : \,  \mathbf T_{(t)}(\mathbf x) \in \Phi _{(t)}(N_0^2)\right\} \right ]\\
&&= \int_{\mathbf T_{(t)}^{-1}(\Phi _{(t)}(N_0^2)) } \! \sigma (\mathbf x) \, d \mathbf x = \int_{\Phi _{(t)}(N_0^2) } \! \nu (\mathbf y) \, d \mathbf y\\
&&= \int_{N_0^2} \! \nu _0(\mathbf y) \, d \mathbf y = 0.
\end{eqnarray*}
Thus, we can define $\mathbf F^* _{(t)}= \mathbf S_{(0)} \circ \Phi ^*_{(t)} \circ \mathbf T_{(t)}$ and $\mathbf F^*$ is a Borel mapping.
\end{quote}

We can now prove that property (\emph{iii}) of Definition \ref{2.4} holds.  Since $\mathbf F_{(t)} \textrm{\#}\sigma _0 = \sigma $, we have that $\mathbf F^*_{(t)} \circ \mathbf F_{(t)}(\mathbf x) = \mathbf S_{(0)} \circ \Phi ^*_{(t)} \circ \mathbf T_{(t)} \circ \mathbf F_{(t)}(\mathbf x)$ for $\sigma -a.e.$ $\mathbf x \in \Omega $.  Then, using Lemma \ref{2.16}, we get $\mathbf F^*_{(t)} \circ \mathbf F_{(t)}(\mathbf x) = \mathbf S_{(0)} \circ \Phi ^*_{(t)} \circ \Phi _{(t)} \circ \mathbf T_0 (\mathbf x)$ $\sigma _0-a.e.$ in $\Omega $.  Since, by Lemma \ref{2.8} (\emph{iv}), $\Phi ^*_{(t)} \circ \Phi _{(t)}(\mathbf y) = \mathbf y$ for $\nu -a.e.$ $\mathbf y$ and thus for $\nu _0 -a.e.$ $\mathbf y \in \Lambda $, and since $\mathbf T_0 \textrm{\#}\sigma _0 = \nu _0$, we have $\Phi ^*_{(t)} \circ \Phi _{(t)} \circ \mathbf T_0 (\mathbf x) = \mathbf T_0(\mathbf x)$ for $\sigma _0-a.e.$ $\mathbf x \in \Omega $.  Thus,  $\mathbf F^*_{(t)} \circ \mathbf F_{(t)}(\mathbf x) = \mathbf S_{(0)} \circ \mathbf T_0(\mathbf x) = \mathbf x$ for $\sigma -a.e.$ $\mathbf x \in \Omega $ by Lemma \ref{3.3} (\emph{iii}).  

By a similar argument, we we have that $\mathbf F_{(t)} \circ \mathbf F^*_{(t)} = \mathbf x$ for $\sigma -a.e.$ $\mathbf x \in \Omega $.
\end{proof}

Finally, we show that property (\emph{iv}) of Definition \ref{2.4} holds for $\mathbf F$ defined in (\ref{phyF}).

\begin{proposition}\label{2.18}
Let $\mathbf F$ be defined as in (\ref{phyF}).  Then, equality (\ref{Zweak}) holds for any $\varphi \in C_c^1((0, \tau ) \times \Omega ;\mathbb{R}^3)$.  Moreover, 
 we have that $\mathbf Z(\cdot , \mathbf x) \in W^{1, \infty }([0, \tau ))$ for $\sigma _0-a.e.$ $\mathbf x \in \Omega $, and (\ref{2.43}) holds.
\end{proposition}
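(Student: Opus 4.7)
The plan is to transfer the required properties from the dual-space flow $\Phi$ to $\mathbf Z$ via Lemma~\ref{2.16}. The first step is to upgrade the \emph{pointwise-in-$t$} identification $\mathbf Z_{(t)} = \Phi_{(t)}\circ\mathbf T_0$ provided by Lemma~\ref{2.16} to a statement valid for every $t\in[0,\tau)$ off a single $\sigma_0$-null set. Picking a countable dense subset $D\subset[0,\tau)$ containing $0$, I let $N$ be the union of the $\sigma_0$-null sets coming from Lemma~\ref{2.16} at each $t\in D$ together with the $\mathbf T_0$-pull-backs of the $\nu_0$-null sets produced by Lemma~\ref{2.8}(\emph{i})--(\emph{ii}) and Lemma~\ref{2.9}; these pull-backs are $\sigma_0$-null because $\mathbf T_0\#\sigma_0 = \nu_0$. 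For $\mathbf x\notin N$, the map $t\mapsto\Phi(t,\mathbf T_0(\mathbf x))$ lies in $W^{1,\infty}([0,\tau))$, is in particular continuous in $t$, coincides with $\mathbf Z(t,\mathbf x)$ on $D$, and satisfies $\Phi(0,\mathbf T_0(\mathbf x)) = \mathbf T_0(\mathbf x)$. Redefining $\mathbf Z(\cdot,\mathbf x) := \Phi(\cdot,\mathbf T_0(\mathbf x))$ for $\mathbf x\notin N$, which alters $\mathbf Z_{(t)}$ only on a $\sigma_0$-null set at each $t$, yields a representative of $\mathbf Z$ satisfying $\mathbf Z(\cdot,\mathbf x)\in W^{1,\infty}([0,\tau))$ and $\mathbf Z(0,\mathbf x) = \mathbf T_0(\mathbf x)$ for $\sigma_0$-a.e.\ $\mathbf x$.

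\noindent The pointwise equation (\ref{2.43}) now follows immediately. Lemma~\ref{2.9} gives $\partial_t\Phi(t,\mathbf y) = \mathbf w(t,\Phi(t,\mathbf y))$ for every $t$ and $\nu_0$-a.e.\ $\mathbf y\in\mathbf T_0(\Omega)$; combined with (\ref{comdual2}), this yields for $\mathbf x\notin N$ and every $t$
\[
\partial_t\mathbf Z(t,\mathbf x) = \mathbf w\bigl(t,\Phi(t,\mathbf T_0(\mathbf x))\bigr) = \mathbf e_3\times\bigl[\Phi(t,\mathbf T_0(\mathbf x)) - \mathbf S(t,\Phi(t,\mathbf T_0(\mathbf x)))\bigr].
\]
Using $\Phi(t,\mathbf T_0(\mathbf x)) = \mathbf Z(t,\mathbf x)$ and $\mathbf S(t,\Phi(t,\mathbf T_0(\mathbf x))) = \mathbf F(t,\mathbf x)$, the latter being the very definition (\ref{phyF}), the right-hand side equals $\mathbf e_3\times[\mathbf Z(t,\mathbf x) - \mathbf F(t,\mathbf x)]$, which is (\ref{2.43}).

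\noindent The weak form (\ref{Zweak}) then follows by an integration by parts in $t$. For $\varphi\in C_c^1([0,\tau)\times\Omega;\mathbb{R}^3)$, the $W^{1,\infty}$ regularity of $\mathbf Z(\cdot,\mathbf x)$ permits integration by parts pointwise in $\mathbf x$:
\[
\int_0^{\tau}\partial_t\mathbf Z(t,\mathbf x)\cdot\varphi(t,\mathbf x)\,dt = -\mathbf Z(0,\mathbf x)\cdot\varphi(0,\mathbf x) - \int_0^{\tau}\mathbf Z(t,\mathbf x)\cdot\partial_t\varphi(t,\mathbf x)\,dt,
\]
where the $t=\tau$ boundary term vanishes by the compact support of $\varphi$ and I have used $\mathbf Z(0,\mathbf x) = \mathbf T_0(\mathbf x)$. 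Multiplying by $\sigma_0(\mathbf x)$, integrating over $\Omega$, applying Fubini, and substituting (\ref{2.43}) for the left-hand side yields (\ref{Zweak}) after rearrangement.

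\noindent The main obstacle is the null-set bookkeeping in the first step: Lemma~\ref{2.16} only gives the $\mathbf Z_{(t)} = \Phi_{(t)}\circ\mathbf T_0$ identification for each \emph{fixed} $t$, with an exceptional $\sigma_0$-null set depending on $t$. Producing a single $\sigma_0$-null set valid for every $t$ simultaneously requires combining the countable-dense-subset trick with the continuity in $t$ of $\Phi(\cdot,\mathbf y)$ furnished by Lemma~\ref{2.8}(\emph{i}); without this, one cannot conclude either the pointwise ODE (\ref{2.43}) or the regularity $\mathbf Z(\cdot,\mathbf x)\in W^{1,\infty}([0,\tau))$. Once the uniform identification is in place, all remaining assertions transfer directly from $\Phi$ to $\mathbf Z$.
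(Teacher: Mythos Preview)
Your proof is correct and takes essentially the same approach as the paper: both identify $\mathbf Z(\cdot,\mathbf x)$ with $\Phi(\cdot,\mathbf T_0(\mathbf x))$ via Lemma~\ref{2.16}, read off the $W^{1,\infty}$ regularity and the pointwise ODE from Lemmas~\ref{2.8}--\ref{2.9}, and obtain (\ref{Zweak}) by integration by parts in $t$. The only differences are cosmetic---the paper starts from the integral form $\Phi(t,\mathbf y)=\mathbf y+\int_0^t\widetilde{\mathbf w}(s,\Phi_{(s)}(\mathbf y))\,ds$ and derives (\ref{Zweak}) first, whereas you reverse the order and add a countable-dense-subset step that, though harmless, is not actually needed (the redefinition alone, justified by applying Lemma~\ref{2.16} at each fixed $t$, already shows that $\mathbf Z_{(t)}$ is altered only on a $\sigma_0$-null set).
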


\begin{proof}
From the definition of the Lagrangian flow $\Phi $ in Lemma \ref{2.8}, we have that
\[ \Phi (t, \mathbf y) = \mathbf y + \int_0^t \! \wt (s, \Phi _{(s)}(\mathbf y)) \, ds\]
for $\nu _0-a.e.$ $\mathbf y \in \Lambda $ and every $t \in [0, \tau )$.  Thus, this equality holds for all $\mathbf y \in \Lambda \setminus N$ where $\nu _0 [N] = 0$.  Since $\mathbf T_0 \textrm{\#}\sigma _0 = \nu _0$ 
 it follows that
\[ \sigma _0 \left [ \Omega \cap \mathbf T_0^{-1}(N) \right ] = \int_{\mathbf T_0^{-1}(N)} \! \sigma _0(\mathbf x) \, d\mathbf x = \int_N \! \nu _0(\mathbf y) \, d\mathbf y = 0.\]
Thus, for $\sigma _0-a.e.$ $\mathbf x \in \Omega $ and every $t \in [0, \tau )$, we have
\begin{equation}\label{2.78}
\Phi (t, \mathbf T_0(\mathbf x)) = \mathbf T_0(\mathbf x) + \int_0^t \! \mathbf w (s, \Phi _{(s)}(\mathbf T_0(\mathbf x))) \, ds,
\end{equation}
where we have replaced $\wt (s, \Phi _{(s)}(\mathbf T_0(\mathbf x)))$ by $\mathbf w (s, \Phi _{(s)}(\mathbf T_0(\mathbf x)))$ based on (\ref{phiinlambda}), (\ref{phiw}).  Multiplying (\ref{2.78}) by $\sigma _0(\mathbf x)$ and by $\partial _t \varphi (t, \mathbf x)$, where $\varphi \in C_c^1([0, \tau ) \times \mathbb{R}^3)$, and then integrating we obtain
\begin{eqnarray*}
 \int_{[0, \tau ) \times \Omega } \! \partial _t \varphi (t, \mathbf x) \Phi (t, \mathbf T_0(\mathbf x)) \sigma _0(\mathbf x) \, dtd\mathbf x&& =  \int_{[0, \tau ) \times \Omega } \! \partial _t \varphi (t, \mathbf x) \mathbf T_0(\mathbf x) \sigma _0(\mathbf x) \, dtd\mathbf x\\
 && +  \int_{[0, \tau ) \times \Omega } \! \sigma _0(\mathbf x)\partial _t \varphi (t, \mathbf x)\int_0^t \! \mathbf w (s, \Phi _{(s)}(\mathbf T_0(\mathbf x))) ds \, dtd\mathbf x.
 \end{eqnarray*}
Now, in the right-hand side, we perform the integration with respect to $t$ in the first integral and integrate by parts with respect to $t$ in the second integral to obtain
 
\begin{eqnarray*}
 \int_{[0, \tau ) \times \Omega } \! \partial _t \varphi (t, \mathbf x) \Phi (t, \mathbf T_0(\mathbf x)) \sigma _0(\mathbf x) \, dtd\mathbf x&& =  -\int_{\Omega } \! \varphi (0, \mathbf x) \mathbf T_0(\mathbf x) \sigma _0(\mathbf x) \, dtd\mathbf x\\
 && -  \int_{[0, \tau ) \times \Omega } \! \varphi (t, \mathbf x)\mathbf w (t, \Phi _{(t)}(\mathbf T_0(\mathbf x))) \sigma _0(\mathbf x) \, dtd\mathbf x,
 \end{eqnarray*}
\begin{equation}\label{2.79}
\end{equation}
where we have used that $\varphi (\tau , \mathbf x) \equiv 0$ due to its compact support.
Note that, by (\ref{comdual2}), (\ref{phyF}) and Lemma \ref{2.16}, we have
\begin{equation}\label{2.80}
\mathbf w(t, \Phi _{(t)}(\mathbf T_0(\mathbf x))) = \mathbf e_3 \times \left [ \Phi _{(t)}(\mathbf T_0(\mathbf x)) - \mathbf S_{(t)}(\Phi _{(t)}(\mathbf T_0(\mathbf x)))\right ] = \mathbf e_3 \times \left [ \mathbf Z(t, \mathbf x) - \mathbf F(t, \mathbf x)\right ]
\end{equation}
for $\sigma _0-a.e.$ $\mathbf x \in \Omega $ and every $t \in [0, \tau )$.  Substituting (\ref{2.80}) into the right-hand side of (\ref{2.79}) and using Lemma \ref{2.16} to replace $\Phi (t, \mathbf T_0(\mathbf x))$ by $\mathbf Z(t, \mathbf x)$ in the left-hand side of (\ref{2.79}), we obtain

\begin{eqnarray*}
\int_{[0, \tau ) \times \Omega } \! \partial _t \varphi (t, \mathbf x) \mathbf Z(t, \mathbf x)\sigma _0(\mathbf x) \, dtd\mathbf x&& =  -\int_{\Omega } \! \varphi (0, \mathbf x) \mathbf T_0(\mathbf x) \sigma _0(\mathbf x) \, dtd\mathbf x\\
 && -  \int_{[0, \tau ) \times \Omega } \! \varphi (t, \mathbf x)\mathbf e_3 \times \left [ \mathbf Z(t, \mathbf x) - \mathbf F(t, \mathbf x)\right ] \sigma _0(\mathbf x) \, dtd\mathbf x,
  \end{eqnarray*}
and rearranging gives (\ref{Zweak}).

Finally, $\mathbf Z(\cdot , \mathbf x) \in W^{1, \infty }([0, \tau ))$ for $\sigma _0-a.e.$ $\mathbf x \in \Omega $ 
follows from Lemma \ref{2.16} and Lemma \ref{2.8} (\emph{i}).  Then, (\ref{2.78}), (\ref{2.80}) and Lemma \ref{2.16} imply (\ref{2.43}).
\end{proof}

Now the properties of $(\mathbf T, \sigma )$ in Theorem \ref{5.5} and the properties of $\mathbf F$ proved in Propositions \ref{2.13}, \ref{2.14}, \ref{2.15}, \ref{2.17}, \ref{2.18} imply Theorem \ref{2.2}.

\section{Conclusion}
The main result of this paper is the proof of existence of  weak Lagrangian solutions of the fully compressible semi-geostrophic equations with rigid boundary conditions, in the original formulation with variables expressing physically relevant quantities. This result is stated in Theorem \ref{2.2}, and can be considered as the conclusion of the analysis of the problem given in \cite{maroofi}, where an existence result is proved but only in the so-called dual formulation. We have also proved that, if additional regularity of the flow could be assumed, this weak Lagrangian solution would determine a weak (Eulerian) solution of these equations.

In addition to the main result, we have given an alternative proof, based on recent results of Ambrosio and Gangbo on Hamiltonian ODEs in spaces of probability measures, of the previous result on the existence of weak solutions of the dual formulation of the equations.

\section{Acknowledgements}
We wish to thank the anonymous referee for pointing out a serious error in the first version of this paper, and for many useful suggestions. 

DKG gratefully acknowledges the support of an EPSRC-CASE studentship.

\bibliographystyle{acm}
\bibliography{semigeostrophicbib}

\end{document}